\renewcommand\path[1]{{\sffamily\small\detokenize{#1}}}
\newtheorem{theorem}{Theorem}
\newtheorem{lemma}[theorem]{Lemma}
\newtheorem{proposition}[theorem]{Proposition}
\newtheorem{corollary}[theorem]{Corollary}
\theoremstyle{plain}
\newtheorem{remark}[theorem]{Remark}
\newtheorem{definition}[theorem]{Definition}
\newtheorem{fact}[theorem]{Fact}
\newtheorem{claim}[theorem]{Claim}
\newtheoremstyle{MyNonumberplain}%
  {\item[\theorem@headerfont\hskip\labelsep ##1\theorem@separator]}%
  {\item[\theorem@headerfont\hskip\labelsep ##3\theorem@separator]}
\theoremstyle{MyNonumberplain}
\newtheorem{proof}{Proof}
\newtheorem{claimproof}{Proof}
\newenvironment{algor}[3]{%
\bigskip
\noindent{\sffamily\bfseries Algorithm #1} ({\itshape#2\/}) {\itshape #3}
\begin{description}[labelindent=0pt,labelwidth=1.7em,labelsep=0pt,leftmargin=!]%
\vspace{-.8ex}}{%
\end{description}\medskip}
\DeclarePairedDelimiter\paren{\lparen}{\rparen}
\DeclarePairedDelimiter\abs{\lvert}{\rvert}
\DeclarePairedDelimiter\set{\{}{\}}
\DeclarePairedDelimiterX\setc[2]{\{}{\}}{#1 \colon #2}
\DeclarePairedDelimiterX\parenc[2]{\lparen}{\rparen}{\,#1 \;\delimsize\vert\; #2\,}
\DeclarePairedDelimiterX\msetc[2]{\{\!\!\{}{\}\!\!\}}{#1 \delimsize: #2}
\newcommand{\cc}[1]{\ensuremath{\mathrm{#1}}}
\newcommand{\pp}[1]{\textup{#1}}
\newcommand{\op}[1]{\ensuremath{\operatorname{#1}}}
\renewcommand{\sp}{\#\cc{P}}
\newcommand{\W}{\cc{W[1]}}
\newcommand{\sw}{\cc{\#W[1]}}
\newcommand{\sharpP}{\cc{\#P}}
\newcommand{\poly}{\op{poly}}
\newcommand{\N}{\mathbb{N}}
\newcommand{\Z}{\mathbb{Z}}
\newcommand{\Q}{\mathbb{Q}}
\newcommand{\dotcup}{\mathbin{\dot\cup}}
\newcommand{\dotbigcup}{\mathbin{\dot\bigcup}}
\newcommand{\zo}{\set{0,1}}
\newcommand\restr[2]{{%
		\left.\kern-\nulldelimiterspace %
		#1 %
		\vphantom{\big|} %
		\right|_{#2} %
	}}
\newcommand{\classH}{\ensuremath{{\mathcal{H}}}}
\newcommand{\ch}{\ensuremath{\mathcal{H}}}
\newcommand{\Hom}{\ensuremath{{\mathrm{Hom}}}}
\newcommand{\eiHom}{\ensuremath{{\mathrm{EdgInj}}}}
\newcommand{\Sub}{\ensuremath{{\mathrm{Sub}}}}
\newcommand{\Emb}{\ensuremath{{\mathrm{Emb}}}}
\newcommand{\EC}{\mathsf{EC}}
\newcommand{\coloralloc}{\mathcal{K}}
\global\long\def\vc#1{\mathrm{vc}(#1)}
\global\long\def\partitions#1{\mathrm{Part}(#1)}
\global\long\def\contract#1#2{#1 / #2}
\global\long\def\hw{\mathrm{hw}}
\global\long\def\pCol{\pp{\#ColMatch}}
\global\long\def\ColSig{\mathrm{ColSig}}
\global\long\def\ColHolant{\mathrm{ColHolant}}
\global\long\def\Holant{\mathrm{Holant}}
\global\long\def\sigHW#1{\mathrm{hw_{#1}}}
\title{Counting edge-injective homomorphisms
  and matchings on restricted graph classes%
}
\author[a]{Radu Curticapean}
\author[b]{Holger Dell}
\author[b,c]{Marc Roth}
\affil[a]{%
  Institute for Computer Science and Control,
  Hungarian Academy of Sciences (MTA SZTAKI)\\
  Hungary\\
  {radu.curticapean@gmail.com} 
}
\affil[b]{%
  Saarland University and Cluster of Excellence (MMCI), Saarbrücken, Germany%
  \\\{hdell,mroth\}@mmci.uni-saarland.de
}
\affil[c]{%
  Graduate School of Computer Science, Saarland University%
}
\date{January 18, 2018%
\thanks{Most of this work was done while the authors were visiting the Simons Institute for the Theory of Computing. Radu Curticapean is supported by ERC grants PARAMTIGHT (No.~280152) and SYSTEMATICGRAPH (No.~725978).}%
}
\begin{document}

\maketitle

\begin{abstract}
  We consider the \sw-hard problem of counting all matchings with exactly~$k$ edges in a given input graph~$G$;
  we prove that it remains \sw-hard on graphs~$G$ that are line graphs or bipartite graphs with degree $2$ on one side.

  In our proofs, we use that $k$-matchings in line graphs can be equivalently viewed as
  edge-injective homomorphisms from the disjoint union of $k$ length-$2$ paths into (arbitrary) host graphs.
  Here, a homomorphism from $H$ to $G$ is \emph{edge-injective} if it maps any two distinct
  edges of $H$ to distinct edges in $G$.
  We show that edge-injective homomorphisms from a pattern graph $H$ can be
  counted in polynomial time if $H$ has bounded vertex-cover number after
  removing isolated edges.
  For hereditary classes~$\mathcal{H}$ of pattern graphs, we complement this result: 
  If the graphs in~$\mathcal{H}$ have unbounded vertex-cover number even after deleting isolated edges,
  then counting edge-injective homomorphisms with patterns from $\mathcal{H}$ is \sw-hard.

  Our proofs rely on an edge-colored variant of Holant problems and a delicate
  interpolation argument; both may be of independent interest.
\end{abstract}

\section{Introduction}

Since Valiant's seminal $\sharpP$-hardness result for the permanent \cite{Valiant1979a}, 
various refinements of classical counting complexity have been studied, such
as approximate~\cite{Jerrum.Sinclair2004}, modular~\cite{Cai.Hemachandra1990}, and subexponential counting 
\cite{Dell.Husfeldt2014}, with additional
restrictions on the input classes \cite{Jerrum1987,Xia.Zhang2007}.
In this paper, we study counting problems through the lens of \emph{parameterized complexity}~\cite{Flum.Grohe2004}, 
where the input comes with a parameter $k\in\mathbb{N}$,
and we want to understand whether a problem admits a \emph{fixed-parameter tractable (FPT)} algorithm,
that is, one with running time $f(k)\cdot\poly(|x|)$ for some function~$f$.
The analogue of $\sharpP$ in this setting is the complexity class $\sw$,
for which counting cliques of size $k$ is a canonical complete problem.

In parameterized counting complexity, 
the problem of counting $k$-matchings plays an important role,
because it captures the complexity inherent to the counting version of the subgraph isomorphism problem.
Indeed if $H$ is a graph with a maximum matching of size~$\nu$, then we can count in time $n^{O(\nu)}$ all occurrences of~$H$ as a (not necessarily induced) subgraph of a given $n$-vertex graph~$G$.
There is strong evidence that the dependency on~$\nu$ is necessary:
For any class~$\mathcal{H}$ of graphs containing arbitrarily large matchings,
it is $\sw$-complete to count $H$-subgraphs~\cite{curticapean2014complexity}
even when $H$ is required to be from $\mathcal{H}$.
Furthermore, an $n^{o(\nu/\log \nu)}$ time algorithm for this restricted problem violates the exponential time hypothesis~\cite{DBLP:conf/stoc/CurticapeanDM17}.

In this paper, we proceed from the $\sw$-hardness result for counting $k$-matchings
in two ways: 
First, we strengthen the result by showing that counting $k$-matchings remains \sw-complete
even on natural \emph{restricted} graph classes, such as line graphs and bipartite graphs where one side has maximum degree
$2$.
As an instrument in our proofs, we introduce the notion of \emph{edge-injective
homomorphisms}, which interpolate between the classical notions of
homomorphisms and (subgraph) embeddings.
In the second part of the paper, we study the parameterized complexity of counting such edge-injective homomorphisms as a topic in itself.
This also relates to ``graph motif parameters'' \cite{DBLP:conf/stoc/CurticapeanDM17}, a recently introduced framework
for pattern counting problems
that was adapted from works by Lov\'asz~\cite{lovaszbook}.

\subsection{Counting matchings in restricted graph classes}

In non-parameterized counting complexity, restrictions of hard problems to planar and bounded-degree
graphs were studied extensively: We can count \emph{perfect}
matchings on planar graphs in polynomial time by the FKT method \cite{Temperley.Fisher1961,Kasteleyn1967}, and several dichotomies show which $\sharpP$-hard counting versions of constraint satisfaction problems become polynomial-time solvable on planar graphs \cite{DBLP:journals/siamcomp/CaiLX17,DBLP:journals/corr/CaiF16}.

Counting (not necessarily perfect) matchings has been studied by many authors~\cite{Jerrum1987,Dagum.Luby1992,Vadh2001}, culminating in the work of Xia~et~al.~\cite{Xia.Zhang2007} who showed that the problem remains
$\sharpP$-hard even on planar bipartite graphs whose left and right side have 
maximum degree $2$ and $3$, respectively.
In the parameterized setting, counting $k$-matchings is FPT in planar or
bounded-degree graphs~\cite{Frick2004}, which rules out a parameterized
analogue of the hardness result by Xia~et~al.~\cite{Xia.Zhang2007}.
On the other hand, counting $k$-matchings remains $\sw$-complete on bipartite
graphs, and this result was essential for a subsequent reduction to the
general subgraph counting problem~\cite{curticapean2014complexity}.
This reduction was recently superseded by~\cite{DBLP:conf/stoc/CurticapeanDM17}.

\subsubsection{Restricted bipartite graphs of high girth}

In~\cite{curticapean2014complexity},
the $\sw$-completeness of counting $k$-matchings in bipartite graphs was first shown for an edge-colorful variant, 
which was then reduced to the uncolored version via inclusion--exclusion.
In the edge-colorful variant, the edges of the bipartite graph are (not
necessarily properly) colored with $k$ colors and we wish to count
$k$-matchings that pick exactly one edge from each color.

In this paper, we strengthen the $\sw$-hardness
result for counting edge-colorful $k$-matchings in bipartite graphs~$G$ by showing that the problem remains hard when we restrict one side of $G$ to have maximum degree two.
We may further assume any constant lower bound on the \emph{girth} of $G$, that is, the length of the shortest cycle in~$G$.
For counting (edge-colorful) $k$-matchings, 
it was known before~\cite{curticapean2014complexity} that an algorithm with 
running time $f(k)\cdot n^{o(k/ \log k )}$ for any computable function~$f$ would 
refute the counting exponential-time hypothesis $\cc{\#ETH}$ \cite{Dell.Husfeldt2014}.
That is, if such an algorithm existed, we could 
count satisfying assignments to $n$-variable $3$-CNF formulas in time 
$\exp\paren[\big]{o(n)}$.  Our result establishes the same consequence in the restricted case.
\begin{theorem} 
\label{thm: match-restrbip}For every $c\in\mathbb{N}$, the problem of counting
(edge-colorful or uncolored) $k$-matchings is $\sw$-complete,
even for bipartite graphs of girth at least $c$ whose right side vertices have 
degree at most two.
Furthermore, if $\cc{\#ETH}$ holds, neither of these problems has an algorithm running in time $f(k) \cdot n^{o(k/ \log k )}$, for any computable function $f$.
\end{theorem}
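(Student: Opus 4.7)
The plan is to reduce from counting edge-colorful $k$-matchings in unrestricted bipartite graphs, which is already $\sw$-complete with the $\cc{\#ETH}$ lower bound $f(k)\cdot n^{o(k/\log k)}$ by~\cite{curticapean2014complexity}. Given a bipartite instance $(G,\chi)$ with $\chi\colon E(G)\to [k]$, I first reduce the right-side degree by replacing each right vertex $r$ of degree $d\geq 3$ with a bipartite gadget whose internal right vertices all have degree exactly~$2$. A natural candidate is a ``path gadget'' that strings the neighbors of $r$ together through auxiliary degree-$2$ left and degree-$2$ right vertices, whose edge-colorful matching signature is proportional to the ``at most one incident edge matched'' signature of the original~$r$. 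A careful choice of colors on the gadget's internal edges is intended to absorb any multiplicative discrepancy, so that the number of edge-colorful $k$-matchings is preserved up to a factor computable in $\poly(|G|)$.

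\textbf{Girth amplification and decoloring.} Once the graph has right-side degree at most $2$, I raise the girth by subdividing every remaining edge into a path of length $2\ell+1$ of alternating new left and right degree-$2$ vertices; for $\ell$ large enough in terms of $c$, every cycle becomes longer than $c$. The effect of subdivision on the matching polynomial is governed by the edge-colored Holant framework introduced later in the paper, which yields a transparent closed-form correction factor per subdivided edge. To derive the uncolored version of the statement, I apply the inclusion--exclusion scheme of~\cite{curticapean2014complexity}: the uncolored $k$-matching count is a $\poly(k)$-sized linear combination of colorful counts on the same host graph. Since both transformations preserve bipartiteness, the right-side degree bound, girth $\geq c$, and the parameter $k$ up to an additive constant, the $\cc{\#ETH}$ lower bound transfers unchanged.

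\textbf{Main obstacle.} The primary technical challenge lies in the degree-reduction step: the colorful matching signature of the replacement gadget must match that of a high-degree vertex \emph{exactly}, or at least up to an explicit nonzero scalar, but the naive gadget introduces parasitic configurations in which the gadget's internal edges are themselves selected. Following the abstract's hint of a ``delicate interpolation argument,'' I expect to handle this by equipping the gadget with a tunable parameter (e.g.\ a subdivision length or a number of parallel ``dummy'' branches), expressing the resulting signature as a polynomial in that parameter whose coefficient of interest encodes the desired value, and recovering this coefficient by Vandermonde interpolation at $\poly(k)$ values of the parameter. Ensuring that this interpolation uses only a polynomial number of gadget variants while keeping the effective matching size at $O(k)$ is precisely what preserves the tight $n^{o(k/\log k)}$ lower bound.
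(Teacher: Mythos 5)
Your proposal takes a genuinely different route at the crucial step, and as written it has a gap that the paper's construction is specifically designed to avoid.

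\textbf{Where your plan diverges.} You propose to replace \emph{each} high-degree right vertex of $G$ with a local ``path gadget'' and then correct for parasitic configurations by Vandermonde interpolation over a tunable parameter. The paper does something structurally different: it first \emph{routes all edges of each color through a single fresh vertex $w_i$}, so that the bipartite signature graph $\Omega_{\mathit{bip}}$ has exactly $k$ high-degree ``hard'' vertices, namely $w_1,\dots,w_k$. Each $w_i$ carries a signature $f_i$ that enforces the pairing of the two half-edges of a subdivided original edge, and this signature is then written as an explicit linear combination of \emph{two} matchgate-realizable signatures (Lemma~\ref{lem: comb-sig-omegabip}), giving $2^k$ oracle calls (Lemma~\ref{lem: reduce-subdiv}). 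This is the combined-signatures technique, not interpolation. The resulting host graphs are subgraphs of a $3$-subdivision of $G$, which is why they are bipartite with one side of degree $\le 2$; iterating $c$ times drives the girth up.

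\textbf{Concrete gaps in your version.} First, you replace $\Omega(n)$ right-side vertices independently, each with its own tunable parameter, but then you have a polynomial in $\Omega(n)$ unknowns (or, if you share a single parameter across gadgets, a univariate polynomial of degree $\Omega(n)$); recovering the desired coefficient by Vandermonde interpolation from $\poly(k)$ evaluations is not possible in general. The paper sidesteps this entirely by arranging that only $k$ vertices carry nonstandard signatures, so the blow-up is $2^k$ and remains FPT. Second, the claim that subdividing an edge produces ``a transparent closed-form correction factor per subdivided edge'' is not true for matching counts: this is precisely the obstruction that forces the paper to use a linear combination of two matchgates rather than a single gadget, and it is why the signature $f_i$ is \emph{not} realizable on its own. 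Third, a small but real error: the inclusion--exclusion reduction (Lemma~\ref{lem: remove-colors}) goes from colorful counts to $2^k$ queries about \emph{uncolored} matchings in \emph{subgraphs}; you stated it in the opposite direction with a $\poly(k)$ bound, which is both the wrong direction for establishing hardness of the uncolored variant and the wrong query count (it is $2^k$, which is still fine for an FPT reduction, but worth being careful about).

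\textbf{What you would need to repair this.} If you want to keep an interpolation flavor, you must first arrange that only $O(k)$ vertices carry the nonstandard signature, exactly as the paper does by concentrating each color class at a single $w_i$; once you have that, you can replace the combined-signature expansion by an interpolation over a single gadget parameter, but you will still need the matchgate-realizability of the base signatures and an explicit analysis like Equation~\eqref{eq: sigGamma2}. As it stands, your degree-reduction gadget and its correction are not specified precisely enough to be verified and the scaling argument does not close.
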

We sketch the proof in~\S\ref{sec:matchings subdivided} by extending the 
so-called Holant problems \cite{Valiant2008,DBLP:journals/jcss/CaiL11}
to an edge-colored variant that proves to be useful for parameterized counting problems. In classical Holant 
problems,
we are given as input a graph $G=(V,E)$ with a signature $f_{v}$ at each vertex $v\in V$. Here,
$f_{v}$ is a function $f_v:\{0,1\}^{I(v)}\to\mathbb{C}$, where~$\{0,1\}^{I(v)}$
is the set of binary assignments to the edges incident with~$v$.
The problem is to compute $\Holant(G)$, a sum over all binary assignments~$x\in\{0,1\}^{E}$, where each assignment $x$ contributes a weight~$\prod_{v\in V}f_{v}(x)$.

In our edge-colored setting, the edges of~$G$ are colored with~$k$ colors and $\Holant(G)$ 
ranges only over assignments of Hamming weight~$k$, picking exactly one edge from each color.
We apply the technique of combined signatures~\cite{DBLP:conf/focs/CurticapeanX15} in this setting, an approach that is 
also implicit in \cite{curticapean2014complexity}.
This way, we will reduce from 
counting edge-colorful $k$-matchings in general graphs
to~$2^{k}$ instances of the restricted bipartite case. Previously,
combined signatures were used only for problems with structural
parameterizations, such as counting perfect matchings in graphs whose
genus or apex number is bounded~\cite{DBLP:conf/focs/CurticapeanX15}. Our 
edge-colorful approach allows us to apply them also when the parameter is the 
solution size~$k$.

\subsubsection{Line graphs}

Building upon Theorem~\ref{thm: match-restrbip}, we prove that
counting $k$-matchings in line graphs is $\sw$-complete and we establish a lower bound under $\cc{\#ETH}$.
\begin{theorem}
\label{thm: match-line}The problem of counting $k$-matchings in line graphs is
$\sw$-complete. Furthermore, if $\cc{\#ETH}$ holds, this problem does not have an $f(k) \cdot n^{o(k/ \log k )}$ time algorithm, for any computable function $f$.
\end{theorem}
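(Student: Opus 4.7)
The plan is to reduce from Theorem~\ref{thm: match-restrbip} (counting $k$-matchings in bipartite graphs with right-side degree at most~$2$ and high girth) via the dictionary between line-graph matchings and edge-injective homomorphisms of the pattern $k\cdot P_2$ (the disjoint union of $k$ cherries). My first step is to establish this dictionary rigorously: for any host graph $H$, a $k$-matching in $L(H)$ is an edge-disjoint packing of $k$ cherries in $H$, and such packings are in bijection with edge-injective homomorphisms $k\cdot P_2\to H$ up to the automorphism factor $|\Aut(k\cdot P_2)|=k!\cdot 2^k$. Consequently it suffices to reduce counting $k$-matchings on the restricted bipartite inputs of Theorem~\ref{thm: match-restrbip} to counting edge-disjoint cherry packings in some host graph $H$.

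Second, given a bipartite instance $G=(L\dotcup R,E)$ with every $r\in R$ of degree exactly~$2$, I would build $H$ by subdividing or decorating $G$ so that each cherry intended to represent an edge of $G$ is anchored by a unique gadget vertex. A natural starting point is to let every edge $e=(\ell,r)\in E(G)$ be replaced by a short path carrying a distinguished midpoint $m_e$, so that the ``canonical'' cherry through $m_e$ corresponds to the edge $e$ of $G$; two such canonical cherries are edge-disjoint in $H$ exactly when the underlying edges of $G$ share no endpoint, giving one direction of the correspondence with $k$-matchings.

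The main obstacle, and where the ``delicate interpolation'' advertised in the abstract has to do work, is the presence of spurious cherries in $H$: cherries centered at an $L$- or $R$-vertex that use two different subdivided edges and thus do not correspond to a single edge of $G$. To kill these, I plan to attach a parametric family of auxiliary gadgets (varying in the number of pendant edges or an edge weight) to every vertex of $L\cup R$, express the total count of edge-disjoint cherry packings in $H$ as a polynomial in these parameters, and recover the coefficient which equals the number of $k$-matchings of $G$ times $k!\cdot 2^k$ by polynomial interpolation from polynomially many evaluations. The high-girth assumption guaranteed by Theorem~\ref{thm: match-restrbip} is used here to rule out short closed walks that would let cherry packings wrap around and contaminate the coefficient we wish to extract. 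This is the step I expect to require the most care, because the interpolation matrix must be invertible and each evaluated instance must itself be a line graph.

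Finally, since each step of the construction increases the parameter by only a constant factor (each cherry in the extracted matching uses $O(1)$ edges of $H$), the reduction is linear in $k$, so the $\sw$-hardness and the $f(k)\cdot n^{o(k/\log k)}$ lower bound under $\cc{\#ETH}$ both transfer from Theorem~\ref{thm: match-restrbip} to counting $k$-matchings in line graphs, establishing Theorem~\ref{thm: match-line}.
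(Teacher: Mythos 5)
Your high-level strategy matches the paper: both use the dictionary that $k$-matchings in $L(H)$ correspond to edge-injective homomorphisms $k\cdot P_2\to H$ up to a factor of $|\Aut(k\cdot P_2)|=2^k k!$, reduce from the restricted bipartite instances of Theorem~\ref{thm: match-restrbip}, and finish with an interpolation. But the concrete construction you sketch has a gap that makes it fail to encode matchings, and the actual role of the interpolation is different from what you describe.

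The gap is in the sentence ``two such canonical cherries are edge-disjoint in $H$ exactly when the underlying edges of $G$ share no endpoint.'' This is false for the subdivision construction: if you replace each edge $e=(\ell,r)$ by a path $\ell\text{--}m_e\text{--}r$, the canonical cherry for $e$ uses the edges $\{\ell,m_e\}$ and $\{m_e,r\}$, and the canonical cherry for $f=(\ell,r')$ uses $\{\ell,m_f\}$ and $\{m_f,r'\}$. Since $m_e\neq m_f$, these four edges are pairwise distinct no matter whether $e$ and $f$ share $\ell$. Edge-injectivity therefore does not force the underlying edges of $G$ to be vertex-disjoint, and you get a sum over all collections of $k$ edges of $G$ rather than over $k$-matchings. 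No amount of decorating $L$- and $R$-vertices with pendant gadgets can repair this, because the constraint you need is a conflict between cherries, and per-edge subdivision gadgets make the cherries oblivious to shared endpoints. The paper avoids this with a different host graph: it adds an apex vertex $0$ adjacent to every vertex of $L(G)$, contracts each degree-$2$ right vertex $r$ with $N(r)=\{u,v\}$ into an edge $uv$, and adds $r$ extra pendant edges at the apex. A matching edge with left endpoint $v$ is then encoded by a ``good'' wedge that uses the edge $\{0,v\}$. Two matching edges sharing $v$ would both need $\{0,v\}$, so edge-injectivity enforces vertex-disjointness automatically.

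The interpolation also works differently than you envisage. You want to ``kill spurious cherries''; the paper instead classifies wedges of a homomorphism into \emph{test} (both edges at $0$), \emph{good} (exactly one edge at $0$), and \emph{bad} (no edge at $0$), lets $\alpha_{g,b}$ count homomorphisms with $g$ good and $b$ bad wedges and no test wedges, and observes (Claim~\ref{claim:betak}) that the count $\beta_k(G^r)$ is a polynomial in $r$ of degree $\le 2k$ whose coefficients involve the $\alpha_{g,b}$. Lemma~\ref{lem:interpolation} then recovers $\alpha_{k,0}$, which equals $2^k k!$ times the number of $k$-matchings of $G$ (Claim~\ref{claim:alphak0}). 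The parameter being varied is the number of pendant apex edges, and the linear system solved is not a Vandermonde system but a more delicate lower-triangular recovery of moments $I_{k,i}=\sum_t a_{t,k-t}t^i$. Finally, the girth assumption in Theorem~\ref{thm: match-restrbip} (girth $>4$) is used to guarantee that any two left vertices have at most one common neighbor, which is exactly what is needed for $G^r$ to remain simple after contracting the degree-$2$ right vertices; it has nothing to do with ruling out short closed walks. Also, the ``each evaluated instance must itself be a line graph'' worry is moot: you construct $H$ arbitrarily and query the oracle on $L(H)$, which is a line graph by definition.
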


Line graphs can be characterized by a finite set of forbidden induced subgraphs \cite{harary2004graph,Sol94}.
They can be recognized in linear time \cite{DBLP:journals/jacm/Lehot74}, and several classical 
$\cc{NP}$-complete problems are polynomial-time solvable in line graphs, such 
as finding a maximum independent set~\cite{zbMATH03693322}, a maximum cut~\cite{DBLP:journals/dam/Guruswami99}, or a maximum clique~\cite{Lozin200574}.
In contrast, Theorem~\ref{thm: match-line} shows that counting $k$-matchings 
remains $\sw$-hard in line graphs.

To prove Theorem~\ref{thm: match-line}, one might try to first prove hardness of counting edge-colorful\linebreak $k$-matchings in line graphs, and then reduce this problem via inclusion--exclusion to the uncolored case. This approach however fails: While the colored problem is easily shown to be 
$\sw$-complete (even on complete graphs), we cannot use inclusion--exclusion to subsequently reduce to counting uncolored matchings, since doing so would lead to graphs that 
are not necessarily line graphs.
Hence we do not know how to prove Theorem~\ref{thm: match-line} via the framework 
of edge-colorful Holant problems introduced before.

Instead, we prove Theorem~\ref{thm: match-line} in \S\ref{sec:P2pack} by means of a delicate interpolation argument 
that is reminiscent of the first hardness proof for uncolored $k$-matchings~\cite{Curticape2013}. 
We generate~a~linear system of equations such that one of the unknowns corresponds to a $\sw$-hard problem.
The right-hand side of the system can be evaluated by means of a gadget construction and an oracle for counting $k$-matchings in line graphs.
It turns out that the system does not have full rank, 
yet a careful analysis shows that the $\sw$-hard unknown we are interested in can still be uniquely determined in polynomial time.

\subsubsection{Perfect matchings in line graphs}
Complementing the above result, we show that the problem of counting \emph{perfect} matchings is $\sp$-hard on line graphs. This holds even for line graphs of bipartite graphs.
Such graphs are known to be perfect and play an important role in the proof of the strong perfect graph theorem~\cite{chudnovsky2006strong}.

\begin{theorem} 
\label{thm:perf_line}
The problem of counting perfect matchings is $\sp$-complete even for graphs that 
have maximum degree~$4$ and are line graphs of bipartite graphs.
On the other hand, the problem is polynomial-time solvable in $3$-regular line graphs.
\end{theorem}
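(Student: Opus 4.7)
The plan is to treat the two parts of Theorem~\ref{thm:perf_line} separately: a structural polynomial-time algorithm on the tractable side, and a Holant/interpolation reduction on the hardness side.

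\textbf{Polynomial-time algorithm for $3$-regular line graphs.} If $L(G)$ is $3$-regular, every edge $uv$ of $G$ must satisfy $\deg_G(u)+\deg_G(v)=5$. The $K_{1,4}$-components give $L(K_{1,4})=K_4$, which has exactly $3$ perfect matchings; every other component of $G$ is forced to be $(2,3)$-biregular, and hence bipartite with parts $A$ (degree $3$) and $B$ (degree $2$). The line graph of such a component then decomposes into $|A|$ triangles $T_a$ ($a \in A$) joined by bridge edges $e_b$ ($b \in B$). A short case analysis shows that any perfect matching $M$ of $L(G)$ is uniquely determined by the set $S=\{\,b\in B : e_b\in M\,\}$, and that $S$ is feasible iff for every $a\in A$ with neighbours $b_1,b_2,b_3$ the XOR $[b_1\!\in\!S]\oplus[b_2\!\in\!S]\oplus[b_3\!\in\!S]=1$. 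This is a linear system over $\GF{2}$ whose solution count can be read off from its rank by Gaussian elimination.

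\textbf{Hardness.} Perfect matchings of $L(G)$ are in bijection with $P_3$-decompositions of $G$, i.e.\ partitions of $E(G)$ into length-$2$ paths. Each $P_3$-decomposition factors into (i) an orientation of $E(G)$ directing each edge toward the centre of its containing $P_3$ and (ii) a perfect matching of the in-edges at each vertex. Consequently,
\[
 \#\mathrm{PM}(L(G)) \;=\; \sum_{\text{orientations of } G}\;\prod_{v\in V(G)} (k_v - 1)!!,
\]
where $k_v$ denotes the in-degree at $v$ and $(k_v-1)!!:=0$ when $k_v$ is odd. For graphs of maximum degree $\le 3$ every vertex weight is $0$ or $1$, reducing the sum to a parity CSP that is polynomial-time solvable; so any hardness must exploit the non-parity weight $(4-1)!!=3$ at degree-$4$ vertices, whose neighbours are constrained to have degree $\le 2$ because of the bound on $\deg L(G)$.

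Starting from a cubic bipartite graph $H$, for which counting perfect matchings is $\sp$-hard, the plan is to construct $G$ bipartite with $L(G)$ of maximum degree $4$ by replacing each edge of $H$ with a small bipartite gadget $\Gamma$ whose interior contains a degree-$4$ vertex flanked by pendant or degree-$2$ neighbours. The original vertices of $H$ keep degree $3$ in $G$ and carry the parity signature $[1,0,1,0]$; the gadget $\Gamma$ contributes the non-parity signature $[1,0,1,0,3]$ to weight the orientation sum, so that $\#\mathrm{PM}(L(G))$ becomes an explicit linear combination of $\#\mathrm{PM}(H)$ and lower-order quantities. The $\sp$-hard unknown is then extracted by inversion or polynomial interpolation, in the spirit of the argument used for Theorem~\ref{thm: match-line}. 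The principal difficulty is to design $\Gamma$ so that three constraints hold simultaneously: $G$ remains bipartite; each degree-$4$ vertex has only degree-$\le 2$ neighbours (so that $L(G)$ indeed has maximum degree $4$); and the effective boundary signature realised by $\Gamma$ is genuinely non-parity, otherwise the whole problem collapses back into a $\GF{2}$-linear system as in the $3$-regular case. If a single fixed gadget fails to produce a full-rank system, a parametric family $\{\Gamma_t\}_{t\ge 0}$ yielding a Vandermonde-like system is the natural fallback, solvable in polynomial time with $\poly(|H|)$ oracle calls.
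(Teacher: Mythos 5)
Your structural argument for $3$-regular line graphs is correct and is essentially the paper's: you derive directly from the degree constraint $\deg_G(u)+\deg_G(v)=5$ that $G$ splits into $K_{1,4}$ components and $(2,3)$-biregular bipartite components, so $L(G)$ decomposes into a perfect triangle packing together with a perfect matching of bridge edges, and perfect matchings of $L(G)$ correspond to odd subsets of bridge edges. The paper cites a known characterization of $3$-regular line graphs (a matching plus a triangle packing) and a known polynomial-time algorithm for counting odd edge-sets, whereas you re-derive the decomposition and solve the resulting parity system by Gaussian elimination over $\GF{2}$. Both are fine; your version is more self-contained but equivalent in substance.

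\textbf{Hardness part.} Here there is a genuine gap. Your reformulation of $\#\mathrm{PM}(L(G))$ as a sum over orientations with vertex weights $(k_v-1)!!$ is correct and it is a reasonable Holant-style way to see why degree $\le 3$ is easy and degree $4$ is the first interesting case. But the entire content of the hardness proof is the construction and analysis of the edge gadget $\Gamma$, and that is exactly what you leave unspecified. You acknowledge this yourself (``The principal difficulty is to design $\Gamma$ so that three constraints hold simultaneously\ldots''), and you offer a parametric fallback, but you never exhibit a gadget, never compute its boundary signature, and never verify that the resulting linear system (or interpolation) is actually solvable. As written, this is a plan rather than a proof. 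For comparison, the paper's construction is concrete: starting from a $3$-regular graph $G$, replace each vertex by a triangle to get $G'$ (a line graph), then replace each matching edge by a ``collar'', a chain of $\ell=|E(G')|+1$ copies of $K_4$. The collar has exactly one perfect matching if both ends are used by the outside matching, none if exactly one end is used, and $3^\ell$ if neither is; so the number of perfect matchings of the resulting graph $B$ is $\sum_t m_t\cdot R^{|E(M)|-t}$ with $R=3^{\ell}$, and since $m_t<R$ one reads off $m_{|V(G)|/2}=\#\mathrm{PM}(G)$ as a base-$R$ digit. Finally, $B$ is shown to be the line graph of the bipartite graph obtained from $G$ by replacing each edge with a ``barbed wire'' (even-length path with leaves), and $B$ has maximum degree $4$. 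If you want to salvage your orientation-based route, you must actually construct a gadget with the advertised boundary signature, check bipartiteness of $G$, check that every degree-$4$ vertex in $G$ has only degree-$\le 2$ neighbours, and verify that the resulting system of equations determines $\#\mathrm{PM}(H)$; none of these steps is routine, and the paper's collar gadget is one way to make all of them go through.
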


The theorem can be shown by invoking known results for Holant problems~\cite{Cai.Lu.Xia2011},
but in the present paper, we give a self-contained proof.
To this end, we reduce the positive case of Theorem~\ref{thm:perf_line} to a known tractable case of counting constraint satisfaction problems~\cite{DBLP:journals/iandc/CreignouH96},
which admits a simple polynomial-time algorithm.
The negative case of Theorem~\ref{thm:perf_line} 
follows by a relatively straightforward reduction from counting perfect matchings in $3$-regular graphs, 
using specifically tailored gadgets which ensure that the resulting graphs are line graphs.

\subsection{Counting edge-injective homomorphisms}

In our proof of Theorem~\ref{thm: match-line}, we actually prove the equivalent
statement that counting edge-injective homomorphisms from the graph
$k\cdot P_{2}$ to host graphs $G$ is $\sw$-complete.
Here, we write~$k\cdot P_{2}$ for the graph consisting of $k$ disjoint
copies of the path $P_{2}$ with two edges. A~homomorphism~$f$ from $H$ to 
$G$ is \emph{edge-injective} if, for any distinct
(but not necessarily disjoint) edges $e=uv$ and $e'=u'v'$ of $H$,
the edges $f(u)f(v)$ and $f(u')f(v')$ in~$G$ are distinct (but
not necessarily disjoint).
The number of edge-injective homomorphisms from $k\cdot P_{2}$ to~$G$ is equal to the number of $k$-matchings in the line graph~$L(G)$, up to a simple factor depending only on $k$.

Starting from their relevance in our proof of Theorem~\ref{thm: match-line}, 
we observe that edge-injective homomorphisms are an interesting concept on its own, 
since they constitute an intermediate step between homomorphisms and subgraph embeddings,
which are vertex-injective homomorphisms.
To study the complexity of counting edge-injective homomorphisms from general patterns, we define the problems $\#\eiHom(\mathcal{H})$ for fixed graph
classes $\mathcal{H}$: Given graphs $H\in\mathcal{H}$ and $G$,
the problem is to count the edge-injective homomorphisms from $H$ to $G$. 

Similar frameworks exist
for counting subgraphs \cite{DBLP:conf/stoc/CurticapeanDM17,curticapean2014complexity}, 
counting/deciding colorful subgraphs~\cite{curticapean2014complexity,DBLP:journals/dam/Meeks16,Grohe.Schwentick2001},
counting/deciding induced subgraphs \cite{Chen.Thurley2008},
counting/deciding (not necessarily
edge-injective) homomorphisms \cite{Grohe2007,Dalmau.Jonsson2004},
and counting locally-injective homomorphisms~\cite{Roth17}.
In all 
of these cases, precise dichotomies
are known for the parameterized complexity of the problem when the
pattern is chosen from a fixed class $\mathcal{H}$ and the parameter is~$|V(H)|$.
For instance, homomorphisms from $\mathcal{H}$ can be counted in polynomial time 
if~$\mathcal{H}$ has bounded treewidth, and the problem is $\sw$-complete otherwise 
\cite{Dalmau.Jonsson2004}. A similar statement holds for the decision 
version of this problem, but here only the cores of the graphs in $\mathcal{H}$ 
need to have bounded treewidth~\cite{Grohe2007}.

Our main outcome
is a similar result for counting edge-injective homomorphisms.
Let the \emph{weak vertex-cover number} of a graph $G$
be defined as the size of the minimum vertex-cover in the graph obtained
from $G$ by deleting all \emph{isolated edges}, that is, connected components
with two vertices. Furthermore, a graph class $\mathcal{H}$ is \emph{hereditary} if 
$H\in\mathcal{H}$ implies $F\in\mathcal{H}$ for all induced subgraphs
$F$ of~$H$.
\begin{samepage}
\begin{theorem} \label{thm:dichotomy} 
Let $\mathcal{H}$ be any class of graphs. The problem $\#\eiHom(\mathcal{H})$
can be solved in polynomial time if there is a constant $c\in\mathbb{N}$
such that the weak vertex-cover number of all graphs in $\mathcal{H}$
is bounded by $c$. If no such constant exists and $\mathcal{H}$ additionally
is hereditary, then $\#\eiHom(\mathcal{H})$ is $\sw$-complete.
\end{theorem}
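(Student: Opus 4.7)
The plan is to treat the two directions separately, with the tractability direction being essentially structural and the hardness direction relying on a Ramsey-style classification.

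\textbf{Tractability.} Given $H\in\mathcal{H}$ with weak vertex-cover number at most $c$, decompose $H = H_0 \dotcup m\cdot K_2$, where $m$ counts the isolated edges of $H$ and $H_0$ is the union of the remaining components; then $H_0$ has (standard) vertex-cover number at most $c$. Any edge-injective homomorphism $\phi\colon H \to G$ restricts to an edge-injective homomorphism $\phi_0\colon H_0 \to G$ together with an assignment of the $m$ isolated edges to distinct oriented edges of $E(G)\setminus\phi_0(E(H_0))$. Since $|E(H_0)|$ is a function of $H$ alone, the number of such extensions is the polynomial-time computable factor $2^m \cdot (|E(G)|-|E(H_0)|)(|E(G)|-|E(H_0)|-1)\cdots(|E(G)|-|E(H_0)|-m+1)$, independent of $\phi_0$. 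It therefore suffices to compute $\#\eiHom(H_0\to G)$ in polynomial time when $H_0$ has vertex-cover at most $c$. For this, I would fix a vertex cover $S$ of $H_0$ with $|S|\leq c$, enumerate the $|V(G)|^{c}$ possible images of $S$ under~$\phi_0$, group the vertices of $V(H_0)\setminus S$ by their neighborhood type in~$S$ (at most $2^{c}$ types), and count valid extensions using falling factorials on the target sets $\bigcap_{s\in X}N_G(\phi_0(s))$ combined with inclusion--exclusion over pairs of types whose neighborhoods in $S$ overlap (which is where the edge-injectivity constraint bites).

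\textbf{Hardness.} Assume $\mathcal{H}$ is hereditary and has unbounded weak vertex-cover number. The central structural step is to exhibit, inside $\mathcal{H}$, arbitrarily large instances of a family from which I can reduce a known $\sw$-hard problem. Take $H\in\mathcal{H}$ with weak vertex-cover number $\geq f(k)$ for a sufficiently fast-growing $f$. Applying the K{\"o}nig-type inequality $\tau\leq 2\nu$ to the non-isolated-edge part $H_0$ of $H$ yields a matching $M$ of size $\geq f(k)/2$ in which every matched edge lies in a component of size $\geq 3$ and thus has an incident third vertex in~$H$. A Ramsey-style argument on the finite number of ``local types'' describing how each matched edge and its attached vertex sit inside $H$ then produces a large induced subgraph that falls into one of a few canonical hard families, which by hereditary closure also lies in $\mathcal{H}$. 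The primary case is an induced copy of $k\cdot P_2$, for which Theorem~\ref{thm: match-line} gives $\sw$-hardness of $\#\eiHom$ directly; dense or highly-connected residual cases (e.g.\ $K_{a,b}$) are handled by observing that on such patterns edge-injective and vertex-injective homomorphisms coincide up to a fixed factor, reducing to known $\sw$-hardness of subgraph counting.

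\textbf{Expected main obstacle.} The delicate part is the Ramsey-style classification in the hardness direction: it is easy to extract $k\cdot P_2$ as a (not necessarily induced) subgraph from any graph with large weak vertex cover, but ensuring that the copy is \emph{induced}---and hence a member of $\mathcal{H}$ by heredity---requires a careful analysis of how the attached ``third'' vertices can overlap with the matched edges and with each other. A uniform treatment that both covers the sparse case ($k\cdot P_2$, long paths) via Theorem~\ref{thm: match-line} and the denser residual cases via subgraph-embedding hardness is what carries the bulk of the technical weight.
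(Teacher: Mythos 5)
Your preprocessing of isolated edges matches the paper's Lemma~\ref{lem: reduce isolated}. But the core of your algorithm---enumerating images of a vertex cover $S$ and then counting extensions by falling factorials plus ``inclusion--exclusion over pairs of types whose neighborhoods in $S$ overlap''---glosses over the part that is actually hard. Edge-injective homomorphisms permit non-cover vertices $u,v$ to map to the same target $w\in V(G)$ provided their edges to $S$ land on distinct $G$-edges; this depends not only on whether $N_S(u)\cap N_S(v)=\emptyset$ but also on whether $\phi$ is injective on $S$, and there are further interactions with edges inside $S$. The paper instead expands $\eiHom(H,G)$ as a sum of $\Emb(\contract{H}{\rho},G)$ over \emph{edge-injective partitions} $\rho$ of $V(H)$ (Lemma~\ref{lem: eiHom graph motif}), and the entire difficulty then shifts to the fact that there are $k^{\Omega(k)}$ such partitions, so a naive sum is only FPT, not polynomial. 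The paper's real work is the ``collecting terms'' lemma with its notion of color allocation, which groups isomorphic quotients and counts each class in closed form. Your proposal does not address how to avoid the exponential explosion in $|V(H)|$ that arises from the identification of non-cover vertices, so as written it does not establish polynomial-time solvability.

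\textbf{On the hardness direction.} You correctly guess that a Ramsey-style classification reduces the unbounded case to a short list of canonical families, and you correctly identify $k\cdot P_2$ (hard by the wedge-packing theorem) and bicliques (where $\eiHom$ and $\Emb$ coincide). But the paper's Lemma~\ref{lem:ramsey} produces \emph{six} families: cliques, bicliques, windmills, subdivided stars, triangle packings, and wedge packings---and two of these, subdivided stars $SS_k$ and triangle packings $k\cdot K_3$, are genuinely missing from your account. For these two, edge-injective homomorphisms do \emph{not} coincide with embeddings (distinct triangles may share an image vertex; distinct pendant vertices of a subdivided star may collide), so the claim that ``edge-injective and vertex-injective homomorphisms coincide up to a fixed factor'' on the non-sparse residual cases fails. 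The paper must give bespoke reductions for them (Propositions~\ref{lem:triangleswindmills} and~\ref{lem:subdividedstars}). Your intuition that extracting an \emph{induced} $k\cdot P_2$ is the tricky part is right, but the resolution is precisely that one cannot always get $k\cdot P_2$---one of the other five families may appear instead, and all of them must be proved hard.
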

\end{samepage}

We prove this theorem in \S\ref{sec:homs}.
For the algorithm, we use ideas from the framework of graph motif parameters~\cite{DBLP:conf/stoc/CurticapeanDM17} to reduce the problem to subgraph counting.
The latter has known~$n^{\vc H+O(1)}$ time algorithms~\cite{williams2013finding,curticapean2014complexity},
where $\vc H$ denotes the vertex-cover number of the pattern graph $H$.
For the hardness result, we use a Ramsey argument to show that any graph class 
with unbounded weak vertex-cover number contains one of six hard classes as 
induced subgraphs. This gives a full dichotomy for the complexity of 
$\#\eiHom(\mathcal{H})$
on hereditary graph classes~$\mathcal{H}$,
but it leaves open the $\sw$-hardness of non-hereditary classes~$\mathcal{H}$.
However, for the particular non-hereditary classes of paths and cycles,
we obtain a separate hardness result.

\begin{theorem} 
The problem $\#\eiHom(\mathcal{H})$ is $\sw$-complete if $\mathcal{H}$
is the class of all paths or the class of all cycles.
\end{theorem}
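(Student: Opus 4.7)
The plan is to reduce from subgraph counting via the graph motif parameters framework of Curticapean--Dell--Marx, in the spirit of the hardness direction of Theorem~\ref{thm:dichotomy}. I describe the argument for paths; the cycle case is analogous, as indicated at the end.

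First, I would group edge-injective homomorphisms from $P_k$ into $G$ by their image subgraph to obtain the positive subgraph expansion
\[
\#\eiHom(P_k, G) \;=\; \sum_{F \in \spasm{P_k}} \alpha_F \cdot \#\Sub(F, G),
\]
where $\spasm{P_k}$ is the set of isomorphism types of simple graphs obtained as quotients $P_k/\sim$ that identify no pair of edges, and $\alpha_F$ counts the surjective edge-injective homomorphisms from $P_k$ onto $F$ --- a strictly positive integer for every $F \in \spasm{P_k}$.

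Next, I would exhibit a family of hard graphs in the spasm. Taking $k = \binom{n}{2}$ with $n$ odd, every vertex of $K_n$ has the even degree $n-1$, so $K_n$ admits an Eulerian circuit. This circuit is by definition a surjective edge-injective homomorphism from $P_k$ onto $K_n$, hence $K_n \in \spasm{P_k}$, and $\vc{K_n} = n - 1 = \Theta(\sqrt{k})$ grows without bound with~$k$. Since the circuit is closed, it also witnesses a surjective edge-injective homomorphism from $C_k$ onto $K_n$, so $K_n \in \spasm{C_k}$ as well.

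Finally, I would invoke the complexity-monotonicity theorem of Curticapean--Dell--Marx on the above positive subgraph expansion. Expanding each $\#\Sub(F, \cdot)$ into homomorphism counts via Möbius inversion on the partition lattice of $V(F)$ gives $\#\eiHom(P_k, G) = \sum_{F'} \beta_{F'} \cdot \#\Hom(F', G)$, and the theorem asserts $\sw$-hardness whenever $\{F' : \beta_{F'} \neq 0\}$ has unbounded treewidth. Since $\alpha_{K_n} > 0$ and $K_n$ is its own core (so the contribution of $F = K_n$ with trivial partition cannot be cancelled by contributions from larger $F \in \spasm{P_k}$), we have $\beta_{K_n} \neq 0$, and $\tw{K_n} = n - 1$ is unbounded. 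Letting $n$ range over odd integers, this yields $\sw$-hardness of $\#\eiHom(\mathcal{H})$ for $\mathcal{H}$ the class of all paths (resp.\ all cycles); membership in $\sw$ is standard, following from the Möbius expansion together with $\sw$-membership of $\#\Hom(F, \cdot)$ for each fixed $F$. The main obstacle is precisely the non-vanishing of $\beta_{K_n}$, i.e., the absence of accidental cancellation across the Möbius sum, which I expect to handle via the core argument already implicit in the proof of Theorem~\ref{thm:dichotomy}: since $K_n$ has no non-trivial self-homomorphisms, its top-of-the-lattice contribution cannot be cancelled by quotients coming from strictly larger graphs in $\spasm{P_k}$.
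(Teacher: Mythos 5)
Your approach — routing through the subgraph expansion of $\#\eiHom(P_k,\cdot)$ and invoking complexity monotonicity from the graph-motif framework — is genuinely different from the paper's, which for cycles builds a direct gadget reduction from counting simple cycles (Lemma~\ref{lem:wecyc}), removes the auxiliary edge weights via a second gadget reduction (Lemma~\ref{lem:ecyc}), and for paths reduces from the cycle case by inclusion--exclusion over the two endpoints of an attached pendant path (Lemma~\ref{lem:paths_hard}). Your expansion $\#\eiHom(P_k,\cdot)=\sum_F\alpha_F\,\#\Sub(F,\cdot)$ with $\alpha_F>0$ and the Euler-circuit observation that $K_n\in\spasm{P_k}$ for $k=\binom n2$, $n$ odd, are both correct.

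However, the crucial step — that the coefficient $\beta_{K_n}$ of $\#\Hom(K_n,\cdot)$ in the final homomorphism expansion is nonzero — is not proved by your ``core'' argument, and is in fact \emph{false} already for $n=3$. Write $P_3=a$--$b$--$c$--$d$. The edge-injective quotients of $P_3$ are $P_3$ and $K_3$ (the latter via $\{a,d\}$), so $\#\eiHom(P_3,G)=\#\Emb(P_3,G)+\#\Emb(K_3,G)$. Expanding both via Möbius inversion on the partition lattice of the vertex set, the partition $\{\{a,d\},\{b\},\{c\}\}$ of $V(P_3)$ has Möbius value $-1$ and quotient $K_3$, so $\#\Emb(P_3,\cdot)$ contributes $-\#\Hom(K_3,\cdot)$; meanwhile $\#\Emb(K_3,\cdot)=\#\Hom(K_3,\cdot)$. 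These cancel exactly: $\beta_{K_3}=-1+1=0$, and indeed one finds $\#\eiHom(P_3,\cdot)=\#\Hom(P_3,\cdot)-2\#\Hom(P_2,\cdot)+\#\Hom(K_2,\cdot)$, with no $K_3$ term. The issue your argument overlooks is that the Möbius expansion of $\#\Emb(F,\cdot)$ ranges over \emph{all} vertex partitions of $F$, not merely edge-injective ones, so a graph $F\in\spasm{P_k}$ strictly larger than $K_n$ can contribute to $\beta_{K_n}$ with a sign that depends on $|V(F)|-n$; the fact that $K_n$ is its own core constrains self-homomorphisms of $K_n$ but says nothing about these cross-terms. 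Consequently, the complexity-monotonicity theorem cannot be applied at $K_n$ without a separate non-vanishing argument, and at least for paths such an argument cannot succeed for the term you chose. To rescue the graph-motif route one would need to identify a different high-treewidth graph in the support whose coefficient can be certified nonzero — which is not obviously easier than the paper's direct gadget constructions.
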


\section{Preliminaries}\label{sec:prelims}%

A \emph{parameterized counting problem} is a function $\Pi:\zo^*\to\N$ that is 
endowed with a computable \emph{parameterization} $\kappa:\zo^*\to\N$;
it is \emph{fixed-parameter tractable~(FPT)} if there is a computable function 
$f:\N\to\N$ and an $f(k)\cdot\poly(n)$-time algorithm to compute $\Pi(x)$, where~$n=\abs{x}$ and~$k=\kappa(x)$.

An \emph{fpt Turing reduction} is a Turing reduction from a problem 
$(\Pi,\kappa)$ to a problem~$(\Pi',\kappa')$, such that the reduction runs in 
time $f(k)\cdot\poly(n)$ and each query~$y$ to the oracle satisfies~$\kappa'(y)\le g(k)$. Here, both $f$ and $g$ are computable functions.
A problem is \emph{$\sw$-hard} if there is an fpt Turing reduction from the 
problem of counting the cliques of size $k$ in a given graph; since it is believed that 
the latter does not have an FPT-algorithm, $\sw$-hardness is a strong indicator 
that a problem is not FPT. For more details, see~\cite{FlumGrohe}.

The \emph{counting exponential-time hypothesis ({\#ETH})} claims that 
there exists a constant~$\epsilon>0$ for which there is no $\exp\paren{\epsilon n}$ time 
algorithm to compute the number of satisfying assignments for an $n$-variable 
$3$-CNF formula.
An algorithm with running time $f(k)\cdot n^{o(k)}$ to count $k$-cliques in a given graph,
for any function~$f$,
would refute the counting exponential-time hypothesis~\cite{Chen.Chor2005}.

Graphs in this paper are undirected, loop-free, and simple, unless stated otherwise.
Let $H$ and $G$ be graphs.
A function $\varphi: V(H) \rightarrow V(G)$ is a \emph{homomorphism} from~$H$ 
to~$G$ if $\varphi(e)\in E(G)$ holds for all $e\in E(H)$, where 
$\varphi(\set{u,v})=\set{\varphi(u),\varphi(v)}$.
The set of all homomorphisms from~$H$ to~$G$ is denoted by $\Hom(H,G)$.
A homomorphism $\varphi\in\Hom(H,G)$ is called \emph{edge-injective} if all~$e,f \in E(H)$ with $e \neq f$ satisfy~${\varphi(e)\neq\varphi(f)}$.
$\eiHom(H,G)$ denotes the set of all edge-injective homomorphisms from~$H$ to~$G$.
A homomorphism $\varphi\in\Hom(H,G)$ is an \emph{embedding} of $H$ in $G$ if it is injective (on the vertices of~$H$). The set of all embeddings from $H$ to $G$ is denoted by $\Emb(H,G)$.

For a class $\ch$ of graphs, let $\#\eiHom(\classH)$ denote the following computational
problem: Given $H\in\classH$ and a graph $G$, compute the number $\#\eiHom(H,G)$. 
We consider this problem to be parameterized by~$|V(H)|$. 
The problems $\#\Hom(\ch)$ and $\#\Emb(\ch)$
are defined analogously.

The \emph{line graph} $L(G)$ of~$G$ is the graph whose vertex
set satisfies $V(L(G))=E(G)$ such that~$e,f\in E(G)$ with $e\ne f$ are adjacent 
in $L(G)$ if and only if the edges~$e$ and~$f$ are incident to the same vertex in 
$G$. 

\section{Matchings in restricted bipartite graphs}\label{sec:matchings 
  subdivided}%

In this section, we prove Theorem~\ref{thm: match-restrbip}. 
We use \emph{$k$-edge-colored graphs} for $k\in\mathbb{N}$, which are graphs~$G$ with 
a (not necessarily proper) edge-coloring ${c:E(G)\to\set{1,\dots,k}}$. A matching in $G$ is 
\emph{colorful} if it contains exactly one edge from each color.
We let $\pCol(G)$ be the number of such matchings and $\pCol$ be the 
corresponding computational problem; this problem is known to be $\sw$-hard.

\begin{theorem}[\cite{curticapean2014complexity}, Theorem~1.2]
\label{thm: general edge-col-match}%
The problem $\pCol$ is $\sw$-complete.
If $\cc{\#ETH}$ holds, it cannot be solved in time $f(k)\cdot n^{o(k/ \log 
  k)}$ for any computable $f$.
\end{theorem}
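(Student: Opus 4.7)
The plan is to give an fpt Turing reduction from the canonical $\sw$-complete problem of counting $k$-cliques to $\pCol$. Given a graph $G$ and $k\in\N$, I would first pass to the $k$-partitioned version: by standard colour-coding, counting $k$-cliques in $G$ reduces with constant-factor parameter blow-up to counting partitioned $k$-cliques in a $k$-partite graph $G'$ with classes $V_1,\dots,V_k$ (cliques picking one vertex from each class), and the $\cc{\#ETH}$ lower bound of~\cite{Chen.Chor2005} survives this step.

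From such a $G'$ I would build an edge-coloured graph $H$ on colour set $\binom{[k]}{2}$, together with a positive constant $c$, both computable in polynomial time, so that $\pCol(H)=c\cdot N$ where $N$ is the number of partitioned $k$-cliques in $G'$. The construction uses two gadget types. For each $v\in V_i$, a \emph{vertex-selection gadget} $S_v^i$ with $k-1$ attachment ports, one per colour class $j\ne i$, whose local matching contribution is a positive constant exactly when the ports used are consistent with $v$ being the chosen vertex of $V_i$. For each edge $uv$ of $G'$ with $u\in V_i$ and $v\in V_j$, an \emph{edge gadget} $E_{uv}$ containing a single edge of colour $\{i,j\}$, attached to the $j$-port of $S_v^i$ and the $i$-port of $S_u^j$. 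In any colourful matching of $H$ the colour constraints force exactly one edge per colour $\{i,j\}$, and the selection gadgets force these chosen edges to be incident to a single designated vertex in each class; consistency across classes is precisely the condition that the selected vertices form a $k$-clique in $G'$. Thus $\pCol(H)=c\cdot N$, and a single oracle call yields $\sw$-hardness.

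The main obstacle is the sharper $\cc{\#ETH}$ bound $n^{o(k/\log k)}$: the reduction above goes from clique size $k$ to parameter $k'=\binom{k}{2}$, so the clique bound $n^{\Omega(k)}$ only transfers to $n^{\Omega(\sqrt{k'})}$ for $\pCol$. I would close this gap by instead reducing from the \emph{uncoloured} $k$-matching counting problem, for which the interpolation argument of~\cite{Curticape2013} already gives a tight $n^{\Omega(k/\log k)}$ lower bound under $\cc{\#ETH}$. Uncoloured $k$-matching counts can be expressed as a fixed linear combination of colourful-matching counts over a $(k,|E(G)|)$-splitter family of size $2^{O(k)}\log|E(G)|$; since the $2^{O(k)}$ factor is absorbable into the $f(k)$ factor of any ETH-style algorithm, this reduction preserves the parameter $k$ and transports the tight lower bound directly to $\pCol$.
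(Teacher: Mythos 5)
The paper does not prove this statement itself—it cites \cite{curticapean2014complexity}, Theorem~1.2, as the source—so there is no internal proof to compare against. Still, your proposal has a genuine gap that makes it unworkable as a substitute.

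Your fix for the $n^{o(k/\log k)}$ lower bound is circular. You claim that ``the interpolation argument of \cite{Curticape2013} already gives a tight $n^{\Omega(k/\log k)}$ lower bound under $\cc{\#ETH}$'' for uncolored $k$-matchings, and then propose to transport this bound back to $\pCol$ via a hash-family argument. But \cite{Curticape2013} (Curticapean, ICALP 2013) establishes only $\sw$-hardness of counting $k$-matchings; it contains no $\cc{\#ETH}$ lower bound at all, and its reduction from $k$-clique blows the parameter up quadratically, just as your first construction does. The tight $k/\log k$ bound for (colorful) $k$-matchings is precisely the new contribution of the cited Curticapean--Marx paper. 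So the external fact you lean on to ``close the gap'' is, in effect, a restatement of the theorem you are trying to prove, and the argument collapses to a circle.

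The structural reason your first construction cannot be salvaged directly is the same one you identified: encoding a $k$-clique via pairwise edge-selection gadgets forces $\Theta(k^2)$ colors, so you inherently lose a square root. The way Curticapean--Marx avoid this is to reduce not from $k$-clique but from counting (colorful) subgraph embeddings of a \emph{sparse} pattern $H$ with $O(k)$ edges and treewidth $\Omega(k/\log k)$---bounded-degree expanders---for which Marx-style lower bounds already give $n^{\Omega(k/\log k)}$ under $\cc{\#ETH}$. Since $|E(H)|=O(k)$, the colorful-matching instance then has only $O(k)$ colors, and the $\log k$ loss comes from the treewidth-to-hardness step rather than from clique density. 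This change of intermediate problem is the key idea your proposal is missing; without it, no amount of reshuffling between the colored and uncolored variants recovers the right exponent. Additionally, the ``vertex-selection gadgets'' in your first step are described only at the level of intent---you would still need to exhibit edge-colored gadgets whose colorful matchings consistently select a single representative per class, which is a nontrivial construction in its own right and is not justified by the high-level description.
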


A straightforward application of the inclusion--exclusion principle
yields a reduction from counting edge-colorful matchings to counting uncolored matchings 
(see, e.g., \cite[Lemma~1.34]{Curticapean.PhD} or \cite[Lemma~2.7]{curticapean2014complexity}).
\begin{lemma}
  \label{lem: remove-colors}%
  There is an fpt Turing reduction from $\pCol$ for $k$-edge-colored graphs $G$ to 
  the problem of counting $k$-matchings in uncolored subgraphs of $G$; 
  the reduction makes at most~$2^k$ queries, each query is a subgraph of~$G$, 
  and the parameter of each query is~$k$.
\end{lemma}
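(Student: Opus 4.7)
The plan is to apply inclusion--exclusion on the color set. Given a $k$-edge-colored graph $G$ with coloring $c\colon E(G)\to[k]$, for every subset $S\subseteq[k]$ I will define the uncolored subgraph
\[
G_S \;=\; \bigl(V(G),\,\setc{e\in E(G)}{c(e)\in S}\bigr),
\]
and I will query the oracle for $M_k(G_S)$, the number of $k$-matchings in~$G_S$. Since $G_S$ is a subgraph of~$G$ and the parameter is still~$k$, these are legal queries of the desired form; there are exactly~$2^k$ of them, and each is constructible in polynomial time.

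The combinatorial heart is the following identity. For $T\subseteq[k]$, let $a_T$ denote the number of $k$-matchings of~$G$ whose multiset of edge-colors has support \emph{exactly}~$T$. A $k$-matching of~$G_S$ is nothing but a $k$-matching of~$G$ all of whose edges have colors in~$S$, hence
\[
M_k(G_S) \;=\; \sum_{T\subseteq S} a_T .
\]
Möbius inversion over the Boolean lattice then yields
\[
a_{[k]} \;=\; \sum_{S\subseteq[k]} (-1)^{k-|S|}\, M_k(G_S).
\]
A colorful $k$-matching uses $k$ edges spanning $k$ distinct colors, so its color set is forced to be all of~$[k]$; thus $\pCol(G)=a_{[k]}$, and the displayed identity expresses $\pCol(G)$ as a signed sum of the $2^k$ oracle answers.

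The reduction therefore constructs each $G_S$ in polynomial time, issues the $2^k$ oracle queries, and outputs the signed sum above, giving overall running time $2^k\cdot\poly(|G|)$ with parameter~$k$ on each query, as required. The only thing to check carefully is the bijection underlying $M_k(G_S)=\sum_{T\subseteq S}a_T$, namely that a $k$-matching of~$G_S$ corresponds uniquely to a $k$-matching of~$G$ whose color set is contained in~$S$; this is immediate because the edge set of $G_S$ is a subset of that of~$G$ and $k$-matchings are defined purely in terms of edges. There is no real obstacle here—the argument is essentially one line of Möbius inversion—so I do not expect any subtlety beyond writing down the identity cleanly.
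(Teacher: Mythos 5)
Your proposal is correct and uses exactly the inclusion--exclusion argument the paper alludes to (the paper only cites the argument, referring the reader to Curticapean's thesis and to Curticapean--Marx). The key bijection and the Möbius inversion over the Boolean lattice are both right, and the count of $2^k$ queries, each a subgraph of $G$ with parameter $k$, matches the lemma's guarantee.
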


\subsection{\label{sec:Basic-definitions}Colorful Holant problems}

We first adapt Holant problems to an edge-colorful setting 
by introducing \emph{colorful Holant problems}.
In the uncolored setting, the notion of a ``Holant'' was introduced by
Valiant~\cite{Valiant2008} and later developed to a general theory of Holant problems by Cai, Lu, Xia, and other authors~\cite{DBLP:journals/jcss/CaiL11,Cai.Lu2008}.
In \S\ref{sec: bip-match-proof}, we use colorful Holants to prove 
Theorem~\ref{thm: match-restrbip} by a reduction from $\pCol$.
A more general exposition of this material appears in the first author's PhD thesis 
\cite[Chapters 2 and~5.2]{Curticapean.PhD}.

\global\long\def\assignment{x\in\{0,1\}^{E(\Omega)}}
\begin{definition}
\label{def: signatures, holants}
For a graph~$G$ and a vertex~$v\in V(G)$, 
we denote by~$I(v)$ the set of edges incident to $v$.
For $k\in\mathbb{N}$, a \emph{$k$-edge-colored
signature graph} is a $k$-edge-colored graph~$\Omega$ that has a \emph{signature} 
$f_{v}:\{0,1\}^{I(v)}\to\mathbb{Q}$
at each vertex $v\in V(\Omega)$. The graph underlying $\Omega$ may feature parallel edges.
We write $E_{i}$ for the set of edges with color~$i\in\set{1,\dots,k}$.
\end{definition}

The task of Holant problems is to count,
on input a signature graph $\Omega$,
the Boolean-valued assignments to $E(\Omega)$ that satisfy all local constraints given by the signatures.
In our colorful setting, only colorful assignments will be counted.

\begin{definition}
An assignment $x\in\{0,1\}^{E(\Omega)}$ is \emph{colorful} if, for
each $i\in\set{1,\dots,k}$, there is exactly one edge $e\in E_{i}$ with~$x(e)=1$.
Given a set $S\subseteq E(\Omega)$, we write $x|_{S}$ for the
restriction of $x$ to $S$, which is the unique assignment in $\{0,1\}^{S}$
that agrees with $x$ on $S$.
We define $\ColHolant(\Omega)$ as the sum 
\[
\ColHolant(\Omega)=
\sum_{\substack{x\in\{0,1\}^{E(\Omega)}\\
\text{colorful}
}\;
}\prod_{v\in V(\Omega)}f_{v}\paren*{x|_{I(v)}}
\,.
\]
\end{definition}

Next, we express the number of edge-colorful matchings in a graph as a colorful Holant problem.
If all signatures in $\Omega$ map to $\{0,1\}$, then $\ColHolant(\Omega)$
simply counts the edge-colorful assignments $x$ that satisfy $f_{v}(x|_{I(v)})=1$
for all $v\in V(\Omega)$.
For assignments $x\in\{0,1\}^{*}$, write $\hw(x)$ for the Hamming
weight of $x$. For a proposition~$\varphi$, let~$[\varphi]$ be defined to be $1$ if $\varphi$ holds and $0$ otherwise.

\begin{fact}\label{fact: Match-Holant}%
Let $k\in\mathbb{N}$ and let $G$ be a $k$-edge-colored graph. Define the 
$k$-edge-colored signature graph $\Omega=\Omega(G)$
by associating with each vertex $v\in V(G)$ the signature
\begin{align*}
\sigHW{\leq1}:~&\{0,1\}^{I(v)}\to\{0,1\}\\
~&x \mapsto [\hw(x)\leq 1]\,.
\end{align*}
Then we have \[\ColHolant(\Omega)=\pCol(G).\]

This fact can also be used in reverse:
For any signature graph $\Omega$ that has $\sigHW{\leq1}$ associated with every vertex,
we can obtain a graph $G$ such that $\ColHolant(\Omega)=\pCol(G)$ by deleting the signatures from $\Omega$.
\end{fact}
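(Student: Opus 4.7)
The plan is to unpack both sides of the desired identity and exhibit an explicit bijection between the colorful assignments that contribute a nonzero term to $\ColHolant(\Omega)$ and the edge-colorful $k$-matchings of $G$. Since the statement is essentially a translation of definitions, there is no deep obstacle; the care required is in the bookkeeping around the colorful constraint and the possibility of parallel edges permitted by Definition~\ref{def: signatures, holants}.

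First, I would observe that every signature $f_v = \sigHW{\leq 1}$ takes values in $\{0,1\}$, so each summand of $\ColHolant(\Omega)$ is either $0$ or $1$. Unfolding the product, the term for a colorful assignment $\assignment$ equals $1$ precisely when $f_v(x|_{I(v)}) = 1$ for every $v\in V(G)$, i.e., when $\hw(x|_{I(v)}) \le 1$ for every vertex~$v$. Hence $\ColHolant(\Omega)$ is exactly the number of colorful assignments~$x$ satisfying this at-most-one constraint at every vertex.

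Second, I would identify each assignment $\assignment$ with its support $M_x = \{e \in E(G) : x(e) = 1\}$. The local at-most-one conditions above are equivalent to the statement that no two edges of $M_x$ share a vertex, i.e., that $M_x$ is a matching in $G$. The colorful condition on $x$ asserts that for each $i \in \{1,\dots,k\}$ there is exactly one edge $e\in E_i$ with $x(e)=1$, which is precisely the definition of $M_x$ being an edge-colorful $k$-matching. Since $x \mapsto M_x$ is an obvious bijection between $\{0,1\}^{E(G)}$ and $2^{E(G)}$, it restricts to a bijection between contributing colorful assignments and edge-colorful $k$-matchings of $G$. Counting both sides yields $\ColHolant(\Omega) = \pCol(G)$.

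For the reverse direction, given any $\Omega$ with $\sigHW{\leq 1}$ at every vertex, deleting the signatures leaves a $k$-edge-colored graph $G$ (possibly with parallel edges) to which the same bijection applies verbatim. The only mildly delicate point is that $\Omega$ may have parallel edges while $\pCol$ was introduced for simple graphs; however, the whole argument takes place at the level of edges as elements of $E(\Omega)$, so parallel edges are counted as distinct and the identity $\ColHolant(\Omega) = \pCol(G)$ still holds once $\pCol$ is interpreted on the underlying multigraph. This is the only step where I would be careful in writing up the proof.
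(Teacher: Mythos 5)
Your proposal is correct and matches the intended argument: the paper states this as a Fact without a separate proof precisely because it is the direct definitional unpacking you carry out, namely that $\{0,1\}$-valued signatures make the Holant count colorful assignments whose supports are matchings, and the colorful constraint forces exactly one edge per color. Your remark about parallel edges is a sensible clarification and consistent with the paper's convention (Definition~\ref{def: signatures, holants} allows parallel edges and $\pCol$ is taken edge-wise).
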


If a signature graph $\Omega$ has a vertex
$v$ with some complicated signature $f$ associated with it, we can sometimes simulate the effect of 
$f$ by replacing~$v$ with a graph fragment that has 
only simpler signatures associated with its vertices,
such as $\sigHW{\leq1}$.
Replacing all signatures by such graph fragments,
we obtain a signature graph $\Omega'$ featuring only $\sigHW{\leq1}$.
Then, by Fact~\ref{fact: Match-Holant},
the quantity $\ColHolant(\Omega')$ can be expressed as a number of edge-colorful matchings.
This will allow us to reduce the computation of $\ColHolant(\Omega)$ to an instance of $\pCol$.
The graph fragments required for this reduction are formally defined as \emph{edge-colored matchgates}:
\begin{definition}\label{def: module-contract}%
An \emph{edge-colored matchgate} is an edge-colored signature graph
$\Gamma$ that contains a set $D\subseteq E(\Gamma)$ of \emph{dangling}
edges. These are edges with only one endpoint in $V(\Gamma)$, and we consider them to be labeled with $1,\ldots,|D|$.
Furthermore, we require the signature $\sigHW{\leq1}$ to be associated with all vertices in $\Gamma$.
The colors on edges in $E(\Gamma)\setminus D$ are called \emph{internal} colors.

We say that an assignment $y\in\{0,1\}^{E(\Gamma)}$ extends an assignment~$x\in\{0,1\}^{D}$
if $y$ agrees with~$x$ on~$D$. The signature $\ColSig(\Gamma):\{0,1\}^{D}\to\mathbb{Q}$
of $\Gamma$ is defined via
\[
\ColSig(\Gamma,x)=\sum_{\substack{y\in\{0,1\}^{E(\Gamma)}\\
\text{colorful, extends $x$}
}
}\prod_{v\in V(\Gamma)}f_{v}\paren*{y|_{I(v)}}\,.
\]

Let $\Omega$ be a $k$-edge-colored signature graph
and let $\Gamma$ be an edge-colored matchgate with $d \in \mathbb N$ dangling edges
such that the internal colors of $\Gamma$ are disjoint from 
the colors $\set{1,\dots,k}$ present in $\Omega$.
Then we can \emph{insert} $\Gamma$ at a vertex~$v \in V(\Omega)$ of degree $d$ as follows,
see Figure~\ref{fig:matchgate}:
First delete~$v$ from~$\Omega$, but keep~$I(v)$ as dangling edges in~$\Omega$,
and choose an ordering $e_1, \ldots , e_d$ of $I(v)$.  
Then insert a disjoint copy of~$\Gamma$ into~$\Omega$,
and for all $i \in [d]$, 
identify the $i$-th dangling edge of $\Gamma$ with~$e_i$.
That is, if $e_i$ has endpoint $u_i$ in $\Omega$
and the $i$-th dangling edge of $\Gamma$ has endpoint $v_i$ in~$\Gamma$,
then form the edge $u_i v_i$ in the resulting graph.%
\footnote{We need to assume here that the colors of the two identified edges agree.}%

\end{definition}
\begin{figure}
  \centering
\begin{tikzpicture}[-,thick,scale=0.6,main node/.style={circle,inner sep=1.5pt,fill}]

   \node[main node] (1) at (0,3) {};
   \node[main node] (2) at (3.5,3) {};
   \node[main node] (3) at (-1,0) {};
   \node[main node] (4) at (4,-0.5) {};
   \node[main node, label=below:{$v$}] (5) at (2,1) {};
   \node[main node] (6) at (8,2) {};
   \node[main node] (7) at (7.5,1) {};
   \node[main node] (8) at (8.5,1) {};
   \node[main node] (9) at (12.5,3) {};
   \node[main node] (10) at (16,3) {};
   \node[main node] (11) at (11.5,0) {};
   \node[main node] (12) at (16.5,-0.5) {};
   \node[main node] (13) at (14.5,2) {};
   \node[main node] (14) at (14,1) {};
   \node[main node] (15) at (15,1) {};
   
   \node (42) at (0,-1) {\large $\Omega$};
   \node (42) at (8,-1) {\large $\Gamma$};

   \draw (1) -- (2);
   \path[dotted] (2) edge node [below,right] {$e_1$} (5);
   \path[dashed, bend left] (1) edge node [right] {$e_2$} (5);
   \path[dashed, bend right] (1) edge node [left] {$e_3$} (5);
   \draw (1) -- (3);
   \draw (2) -- (4);
   \draw (4) -- (3);
   \path[dashdotted] (3) edge node [above] {$e_4$} (5);
   
   \path[dotted] (6) edge node [above] {\large $1$} (9.8,2.25);
   \path[dashed] (6) edge node [above right] {\large $2$} (7,3);
   \path[dashed] (7) edge node [above right] {\large $3$} (6,2);
   \path[dashdotted] (7) edge node [below] {\large $4$} (6,-0.25);
   \draw (6) -- (8);
   \draw (6) -- (7);
   \draw (7) -- (8);
   
   \draw (9) -- (10);
   \draw (9) -- (11);
   \draw (12) -- (10);
   \draw (12) -- (11);
   \draw (13) -- (15);
   \draw (13) -- (14);
   \draw (14) -- (15);
   \draw[dashed] (13) -- (9);
   \draw[dotted] (13) -- (10);
   \draw[dashed] (14) -- (9);
   \draw[dashdotted] (11) -- (14);
\end{tikzpicture}
\par

\caption{\label{fig:matchgate}A matchgate $\Gamma$ is inserted into a signature graph $\Omega$ at vertex $v$.}
\end{figure}
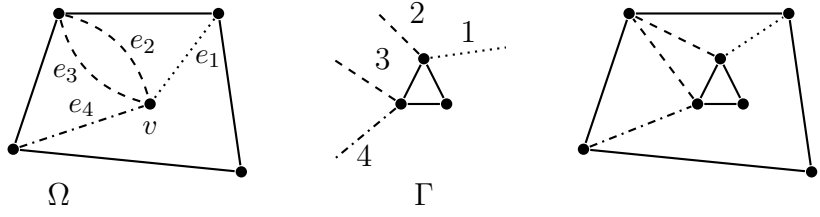

\begin{remark}
When inserting $\Gamma$ into a signature graph $\Omega$, we implicitly assume that the edge-colors of dangling edges are a subset of the edge-colors in $\Omega$.
\end{remark}

A simple calculation shows that 
inserting a matchgate $\Gamma$ at a vertex $v$ with $f_v = \ColSig(\Gamma)$
in a signature graph $\Omega$ preserves the value of $\ColHolant(\Omega)$.
By repeating this operation, we obtain the following fact, as
proved in Fact~2.17 and Lemma~5.16 of~\cite{Curticapean.PhD}.
\begin{fact}
\label{fact: realize signature graph with matchgates}
Let $\Omega$ be a $k$-edge-colored signature graph,
and for each $v\in V(\Omega)$, 
let $f_{v}$ be the signature associated with $v$. If there is a matchgate $\Gamma_{v}$
with ${\ColSig(\Gamma_{v})=f_{v}}$ for every vertex~$v$, then we can efficiently 
construct an edge-colored graph $G$ on $O(\sum_{v}|V(\Gamma_{v})| + \sum_{v}|E(\Gamma_{v})|)$ vertices
and edges such that $\ColHolant(\Omega)=\pCol(G)$.
\end{fact}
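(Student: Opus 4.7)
The plan is to iteratively insert each matchgate $\Gamma_v$ at its corresponding vertex $v\in V(\Omega)$, maintaining the invariant that the colorful Holant value is preserved, and ending at a signature graph whose vertices all carry the signature $\sigHW{\leq1}$; at that point Fact~\ref{fact: Match-Holant} converts the Holant back into a count of colorful matchings in an ordinary edge-colored graph~$G$.

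The key single-step lemma I would prove first is the following: if $\Omega^\ast$ is obtained from a signature graph $\Omega$ by inserting a matchgate $\Gamma$ with $\ColSig(\Gamma)=f_v$ at a vertex $v\in V(\Omega)$, where the internal colors of $\Gamma$ have been relabeled to be fresh with respect to the colors already present in $\Omega$, then $\ColHolant(\Omega^\ast)=\ColHolant(\Omega)$. The argument decomposes every assignment $z\in\{0,1\}^{E(\Omega^\ast)}$ into an outer part $x$ on the edges inherited from $\Omega$ (including the $d$ edges incident with $v$, which have been identified with the dangling edges of $\Gamma$) and an inner part $y$ on the truly internal edges of $\Gamma$. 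Because the internal and external color classes are disjoint, $z$ is colorful in $\Omega^\ast$ if and only if $x$ is colorful in $\Omega$ and $(x|_{I(v)},y)$ is a colorful extension of $x|_{I(v)}$ in $\Gamma$. Regrouping the vertex product and pulling the inner sum inside then collapses it to $\ColSig(\Gamma,x|_{I(v)})=f_v(x|_{I(v)})$, which is precisely the contribution of $v$ to $\ColHolant(\Omega)$; the remaining factors match verbatim.

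Next, I would apply this lemma to every vertex of $\Omega$ in turn. Before inserting $\Gamma_v$, I relabel its internal colors so that they are fresh with respect to both the original colors of $\Omega$ and the internal colors introduced by the matchgates inserted previously; only finitely many colors have been used so far, so this relabeling is trivial and does not change $\ColSig(\Gamma_v)$. Telescoping the single-step lemma across all $v$ yields a signature graph $\Omega'$ on vertex set $\bigcup_v V(\Gamma_v)$ in which every vertex carries $\sigHW{\leq1}$, with $\ColHolant(\Omega')=\ColHolant(\Omega)$.

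Finally, I would invoke the reverse direction of Fact~\ref{fact: Match-Holant}: dropping the signatures from $\Omega'$ yields an edge-colored graph $G$ satisfying $\pCol(G)=\ColHolant(\Omega')=\ColHolant(\Omega)$. Its vertex set is $\bigcup_v V(\Gamma_v)$, and every edge of $G$ is either an internal edge of some $\Gamma_v$ or arises from identifying a dangling edge of some $\Gamma_v$ with a neighbouring dangling edge, so $|V(G)|+|E(G)|=O\bigl(\sum_v(|V(\Gamma_v)|+|E(\Gamma_v)|)\bigr)$ as claimed; all operations are clearly polynomial-time. The one non-bookkeeping point, and the only place one has to be careful, is the color-disjointness step inside the single-step lemma: it is exactly this relabeling that makes the colorful analogue of classical Holant-gate composition go through, and it is the step I would verify most carefully.
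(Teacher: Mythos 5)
Your proposal matches the approach the paper sketches (single-step Holant preservation when inserting one matchgate at a vertex, then iteration over all vertices, then Fact~\ref{fact: Match-Holant} in reverse, with the paper deferring details to Fact~2.17 and Lemma~5.16 of~\cite{Curticapean.PhD}). The decomposition of colorful assignments into an outer and an inner part using disjointness of internal and external color classes, and the regrouping that collapses the inner sum to $\ColSig(\Gamma,x|_{I(v)})=f_v(x|_{I(v)})$, is exactly the ``simple calculation'' the paper alludes to, and your size and efficiency bookkeeping is correct.
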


If the involved signatures cannot be realized by matchgates, then Fact~\ref{fact: realize signature graph with matchgates} is not applicable.
For such cases, Curticapean and Xia~\cite{DBLP:conf/focs/CurticapeanX15} define \emph{combined signatures}:
Even if a given signature~$f$ cannot be realized via matchgates, we may be able to express $f$ as a linear combination of $t\in\mathbb{N}$ signatures that do admit matchgates.
If there are $s\in\mathbb{N}$ occurrences
of such signatures in $\Omega$, then we can compute $\ColHolant(\Omega)$
as a linear combination of $t^{s}$ colorful Holants, where all involved
signatures can be realized by matchgates. 
In the following, we write $[N]=\set{1,\dots,N}$.
\begin{lemma} 
\label{lem: combined signature lemma}Let $\Omega$ be a $k$-colored
signature graph. Let $s,t\in\mathbb{N}$ and let $w_{1},\ldots,w_{s}$
be fixed distinct vertices of $\Omega$ such that the following holds: For
all $\kappa\in[s]$, the signature~$f_{\kappa}$ at $w_{\kappa}$
admits coefficients $c_{\kappa,1},\ldots,c_{\kappa,t}\in\mathbb{Q}$
and signatures $g_{\kappa,1},\ldots,g_{\kappa,t}$ such that 
$f_{\kappa}=\sum_{i=1}^{t}c_{\kappa,i}\cdot g_{\kappa,i}$ holds point-wise. 

Given a tuple $\theta\in[t]^{s}$, let $\Omega_{\theta}$ be the edge-colored signature graph defined
by replacing, for\linebreak each $\kappa\in[s]$, the signature $f_{\kappa}$
at $w_{\kappa}$ with $g_{\kappa,\theta(\kappa)}$. Then we have 
\begin{equation}
\ColHolant(\Omega)=\sum_{\theta\in[t]^{s}}\left(\prod_{\kappa=1}^{s}c_{\kappa,\theta(\kappa)}\right)\cdot\ColHolant(\Omega_{\theta})\,.\label{eq: combined sig lin-comb}
\end{equation}
\end{lemma}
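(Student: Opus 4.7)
The plan is to proceed by a direct calculation from the definition of $\ColHolant$, expanding the linear combinations at the special vertices $w_1,\dots,w_s$ and then interchanging the two finite sums. Crucially, the graphs $\Omega$ and $\Omega_\theta$ share the same underlying edge-colored graph and differ only in the signatures attached to the vertices $w_1,\dots,w_s$. In particular, the set of colorful assignments $x\in\{0,1\}^{E(\Omega)}$ is identical to the set of colorful assignments on $E(\Omega_\theta)$, so we may range over a common index set of assignments.

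First I would start from
\[
\ColHolant(\Omega) = \sum_{\substack{x\in\{0,1\}^{E(\Omega)}\\ \text{colorful}}} \prod_{v\in V(\Omega)} f_v\paren*{x|_{I(v)}},
\]
and split the inner product over $V(\Omega)$ into the $s$ factors at the special vertices $w_1,\dots,w_s$ and the remaining factors. Into each factor $f_\kappa\paren*{x|_{I(w_\kappa)}}$ I substitute the hypothesized point-wise identity $f_\kappa = \sum_{i=1}^t c_{\kappa,i}\cdot g_{\kappa,i}$, obtaining a product of $s$ sums. Expanding this product by the distributive law turns it into a single sum over tuples $\theta \in [t]^s$, namely
\[
\prod_{\kappa=1}^{s} f_\kappa\paren*{x|_{I(w_\kappa)}} \;=\; \sum_{\theta\in[t]^s} \prod_{\kappa=1}^{s} c_{\kappa,\theta(\kappa)}\cdot g_{\kappa,\theta(\kappa)}\paren*{x|_{I(w_\kappa)}}.
\]

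Next I would reinsert this expansion into the outer sum over colorful assignments and swap the order of the two (finite) summations, pulling $\sum_{\theta\in[t]^s}$ outside. For a fixed $\theta$, the product over $V(\Omega)$ now consists of $g_{\kappa,\theta(\kappa)}$ at each $w_\kappa$ together with the original $f_v$ at all remaining vertices, which is by definition precisely the product of signatures in $\Omega_\theta$ applied to $x|_{I(v)}$. The coefficient $\prod_\kappa c_{\kappa,\theta(\kappa)}$ depends only on $\theta$ and thus factors out of the inner sum, leaving $\ColHolant(\Omega_\theta)$ inside. Combining these observations yields the identity \eqref{eq: combined sig lin-comb}.

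No step appears to be a genuine obstacle; the argument is purely algebraic manipulation of finite sums, and the only point that requires care is verifying that the set of colorful assignments is insensitive to the swap from $\Omega$ to $\Omega_\theta$, which holds because this swap only modifies signatures and preserves the edge-colored graph structure underlying $E(\Omega)$.
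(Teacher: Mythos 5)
Your proof is correct and is essentially the same algebraic argument as the paper's: both rely only on expanding the $f_\kappa = \sum_i c_{\kappa,i} g_{\kappa,i}$ decompositions and interchanging finite sums, using that replacing signatures does not change the underlying edge-colored graph or the set of colorful assignments. The only cosmetic difference is that the paper factors the argument through a one-vertex ($s=1$) claim applied inductively, whereas you perform the full $s$-fold expansion in one step via the distributive law; the mathematical content is identical.
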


\begin{proof}
In the following claim, we first prove the lemma for $s=1$. That is, the signature of exactly one 
vertex~$w_1$ is expressed as a linear combination of signatures.
\begin{claim}
\label{lem:Holant-distrib-once}Let $\Omega$ be a signature graph and let $w\in V(\Omega)$ be a fixed vertex with signature
$f_{w}$. Let $g_{1},\ldots,g_{t}$ be signatures and $c_{1},\ldots,c_{t}\in\mathbb{Q}$ be such that 
$f_{w}=\sum_{i=1}^{t}c_{i}\cdot g_{i}$ holds point-wise.
For $i\in[t]$, let $\Omega_{i}$ be the signature graph obtained
from $\Omega$ by replacing $f_{w}$ with $g_{i}$. Then 
$\ColHolant(\Omega)=\sum_{i=1}^{t}c_{i}\cdot\ColHolant(\Omega_{i}).$
\end{claim}
\begin{claimproof}
\begin{samepage}
In the following, let $x$ range over edge-colorful assignments in $\{0,1\}^{E(\Omega)}$.
By elementary manipulations, we have
\begin{alignat*}{2}
\ColHolant(\Omega) & = &  & \sum_{x}f_{w}(x)\prod_{v\in V(\Omega)\setminus\{w\}}f_{v}(x)\\
 & = &  & \sum_{x}\left(\sum_{i=1}^{t}c_{i}\cdot g_{i}(x)\right)\prod_{v\in V(\Omega)\setminus\{w\}}f_{v}(x)\\
 & = &  & \sum_{i=1}^{t}\sum_{x}c_{i}\cdot g_{i}(x)\prod_{v\in V(\Omega)\setminus\{w\}}f_{v}(x)\\
 & = &  & \sum_{i=1}^{t}c_{i}\cdot\ColHolant(\Omega_{i}).
\end{alignat*}
\end{samepage}

\end{claimproof}
The lemma follows by applying Claim~\ref{lem:Holant-distrib-once} inductively for $w_{1},\ldots,w_{s}$.
Each of the $s$ involved steps reduces the number of combined signatures by one, and elementary
algebraic manipulations imply (\ref{eq: combined sig lin-comb}).
\end{proof}

Lemma~\ref{lem: combined signature lemma} allows us to prove hardness results 
under fpt Turing reductions if $\ColHolant(\Omega)$ is $\sw$-hard to compute and 
the values $\ColHolant(\Omega_{\theta})$ for all $\theta$ can
be computed by reductions to the target problem.
This is our approach in the remainder of this section.

\subsection{\texorpdfstring{$k$-Matchings}{k-Matchings} in bipartite graphs}
\label{sec: bip-match-proof}

We prove Theorem~\ref{thm: match-restrbip} by a reduction from $\pCol$, 
which is $\sw$-complete by Theorem~\ref{thm: general edge-col-match}. 
Let ${k\in\mathbb{N}}$ and let~$G$ be a
simple $k$-edge-colored graph for which we want to compute the\linebreak number $\pCol(G)$.
To this end, we first construct a bipartite signature graph~$\Omega_{\mathit{bip}}$
such that~$\ColHolant(\Omega_{\mathit{bip}})=\pCol(G)$ holds.
\begin{lemma} 
\label{lem: Omega-bip}Given a $k$-edge-colored graph $G$, let
$\Omega_{\mathit{bip}}=\Omega_{\mathit{bip}}(G)$ denote the signature
graph with edge-colors $[k]\times[2]$ that is constructed as follows:
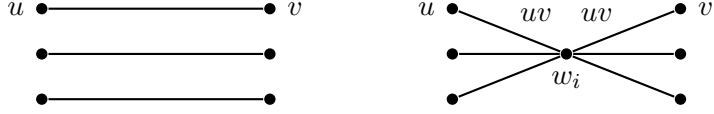
\begin{figure}
  \centering
\begin{tikzpicture}[-,thick,scale=0.6,main node/.style={circle,inner sep=1.5pt,fill}]

   \node[main node,label=left:{$u$}] (1) at (-2,3) {};
   \node[main node,label=right:{$v$}] (2) at (3,3) {};
   \draw (1) -- (2);
   \node[main node] (3) at (-2,2) {};
   \node[main node] (4) at (3,2) {};
   \draw (3) -- (4);
   \node[main node] (5) at (-2,1) {};
   \node[main node] (6) at (3,1) {};
   \draw (5) -- (6);
   
   \node[main node,label=below:{$w_i$}] (13) at (9.5,2) {};
   \node[main node,label=left:{$u$}] (7) at (7,3) {};
   \node[main node,label=right:{$v$}] (8) at (12,3) {};
   \path (13) edge node [above right] {$uv$} (7);
   \path (13) edge node [above left] {$uv$} (8);
   \node[main node] (9) at (7,2) {};
   \node[main node] (10) at (12,2) {};
   \draw (13) -- (9);
   \draw (13) -- (10);
   \node[main node] (11) at (7,1) {};
   \node[main node] (12) at (12,1) {};
   \draw (13) -- (11);
   \draw (13) -- (12);
\end{tikzpicture}
\par

\caption{\label{fig:bip-graph}Edges of color $i$ are shown on the left, and
their corresponding edges in the graph $\Omega_{\mathit{bip}}$ together with their annotations are
shown on the right.}
\end{figure}
Initially, $\Omega_{\mathit{bip}}$ is $G$, where each vertex is associated 
    with the signature $\sigHW{\leq1}$. Then, for each $i\in[k]$:

  \begin{enumerate}
    \item Add a fresh vertex $w_{i}$ to $\Omega_{\mathit{bip}}$.
    \item For each $e\in E(G)$ of color~$i$ and with $e=uv$, delete $e$
      and insert an edge $uw_{i}$ of color~$(i,1)$ and an edge $w_{i}v$
      of color $(i,2)$. Annotate the added edges with $\pi(uw_{i})=\pi(w_{i}v)=e$.
    \item Note that every colorful assignment $x\in\{0,1\}^{I(w_{i})}$ at a vertex $w_i$ has
      precisely two edges~$e_1(x)$ and $e_2(x)$ that are incident to $w_i$ and 
      assigned~$1$ by~$x$.  We associate $w_{i}$ with the signature $f_{i}$ that 
      maps
      $x\in\{0,1\}^{I(w_{i})}$ to
      $f_{i}(x) = [\pi(e_{1}(x))=\pi(e_{2}(x))]$.
\end{enumerate}
The constructed signature graph $\Omega_{\mathit{bip}}$ satisfies $\ColHolant(\Omega_{\mathit{bip}})=\pCol(G).$
\end{lemma}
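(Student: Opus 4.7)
The plan is to exhibit a bijection between edge-colorful $k$-matchings in $G$ and colorful $\{0,1\}$-assignments to $E(\Omega_{\mathit{bip}})$ whose local signature product equals $1$; since all signatures map into $\{0,1\}$, this will immediately yield $\ColHolant(\Omega_{\mathit{bip}})=\pCol(G)$.

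First I would unpack the combinatorial meaning of ``colorful'' for $\Omega_{\mathit{bip}}$. The color set is $[k]\times[2]$, and for each $i\in[k]$ the only edges of color $(i,1)$ or $(i,2)$ are those incident to $w_i$, by construction. So a colorful $x\in\{0,1\}^{E(\Omega_{\mathit{bip}})}$ selects, for every $i$, exactly one edge $a_i$ of color $(i,1)$ and exactly one edge $b_i$ of color $(i,2)$, and both are incident to $w_i$. Moreover $\pi(a_i)$ and $\pi(b_i)$ are two (not necessarily equal) color-$i$ edges of $G$.

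Next I analyse when the signature product is nonzero. At $w_i$ the two ``1''-edges incident to $w_i$ are exactly $a_i$ and $b_i$, so $f_i(x\vert_{I(w_i)})=[\pi(a_i)=\pi(b_i)]$; thus the product is nonzero only when $a_i$ and $b_i$ are the two halves of a common color-$i$ edge $e_i=u_iv_i\in E(G)$, with $a_i=u_iw_i$ and $b_i=w_iv_i$. At each original vertex $u\in V(G)$ the signature $\sigHW{\le 1}$ requires that at most one of the edges $uw_1,\dots,uw_k$ incident to $u$ is set to $1$, which precisely says that the edges $e_1,\dots,e_k$ are pairwise vertex-disjoint. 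Hence the colorful assignments contributing $1$ to $\ColHolant(\Omega_{\mathit{bip}})$ are in one-to-one correspondence with tuples $(e_1,\dots,e_k)$ forming an edge-colorful matching in $G$, while all other colorful assignments contribute $0$.

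The bijection is clearly invertible: given a colorful matching $\{e_1,\dots,e_k\}$ of $G$ with $e_i$ of color $i$, define $x$ by setting the two subdivided halves of each $e_i$ to $1$ and all other edges to $0$; this $x$ is colorful and satisfies every local constraint. Summing over the $\pCol(G)$ such matchings, each contributing weight $1$, gives the claimed identity. No step requires a delicate estimate; the only thing to be careful about is that the dual use of the color pair $(i,1),(i,2)$ correctly isolates the role of $w_i$ from the rest of $\Omega_{\mathit{bip}}$, which is exactly what the construction in step~2 of the lemma enforces.
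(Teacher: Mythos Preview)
Your proof is correct and follows essentially the same approach as the paper: both set up a bijection between edge-colorful matchings of $G$ and colorful assignments to $E(\Omega_{\mathit{bip}})$ on which no signature vanishes, by observing that the $f_i$ constraint forces the two selected edges at $w_i$ to be the two halves of a single original edge $e_i$, while the $\sigHW{\le 1}$ constraints at the original vertices force $e_1,\dots,e_k$ to be pairwise disjoint. Your write-up is in fact slightly more explicit than the paper's in unpacking the role of the color set $[k]\times[2]$; the only minor notational slip is listing the edges incident to $u$ as $uw_1,\dots,uw_k$, since there may be several parallel edges from $u$ to a given $w_i$, but this does not affect the argument.
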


\begin{proof}
It is clear that $\Omega_{\mathit{bip}}$ is bipartite, since every
edge of $\Omega_{\mathit{bip}}$ has exactly one of the vertices~$w_{i}$ for $i\in[k]$ as an endpoint.

Let us call an assignment $x\in\{0,1\}^{E(\Omega_{\mathit{bip}})}$ satisfying for $\Omega_{\mathit{bip}}$ if none of the signatures in~$\Omega_{\mathit{bip}}$ vanishes on $x$.
The edge-colorful satisfying assignments~$x$
correspond bijectively to the edge-colorful matchings of $G$: In any
such~$x$, the vertex~$w_{i}$ for $i\in[k]$ is incident with two
edges~$e_{i}$ and~$e'_{i}$ that are assigned~$1$ under $x$ and have the same annotation
$h_{i}=\pi(e_{i})=\pi(e_{i}')$, for some~${h_{i}\in E(G)}$. We can
hence contract $e_{i}$ and $e'_{i}$ to one edge $h_{i}$. The resulting
edge set is an edge-colorful matching in $G$ due to the signature
$\sigHW{\leq1}$ at non-subdivision vertices. Reversing this contraction operation, every edge-colorful
matching in $G$ can be extended to a unique satisfying assignment
$x\in\{0,1\}^{E(\Omega_{\mathit{bip}})}$, and all signatures evaluate to $1$ on this assignment.
\end{proof}

We now realize the signatures $f_i$ in $\Omega_{\mathit{bip}}$
by linear combinations of the signatures of edge-colored matchgates.
For $i\in [k]$, let $E_i(G)$ denote the $i$-colored edges in $G$.
Let $m_i=|E_i(G)|$ and order the edges in $E_i(G)$ in some arbitrary fixed way.

\begin{definition}
  Recall the definition of $\Omega_{\mathit{bip}}$ from Lemma~\ref{lem: Omega-bip}. For $i\in[k]$, let ${m=m_i}$ and let~$\Gamma_{i,1}$ denote the matchgate on dangling
edges $I(w_{i})$ that consists of $2m$ vertices and is defined as
follows:
\begin{enumerate}
\item Create independent sets $a_{1},\ldots,a_{m}$ and $b_{1},\ldots,b_{m}$,
which we call ``external'' vertices.
\item Then, for all $j\in[m]$ and all edges $e,e'\in E(\Omega_{\mathit{bip}})$
of colors~$(i,1)$ and $(i,2)$ with $\pi(e)=\pi(e')$: If $\pi(e)$
is the $j$-th edge in the ordering of~$E_i(G)$, for $j\in\mathbb{N}$,
then attach $e$ as dangling edge to $a_{j}$ and $e'$ as dangling
edge to $b_{j}$.
\end{enumerate}
Let $\Gamma_{i,2}$ be defined likewise, with the following addition:
For all $j\in[m]$, add an extra vertex~$c_{j}$, an edge $a_{j}c_{j}$
of color $(i,3)$ and an edge $c_{j}b_{j}$ of color $(i,4)$. 
\end{definition}

\begin{lemma} 
\label{lem: comb-sig-omegabip}
We can express the signature $f_i$ from Lemma~\ref{lem: Omega-bip} as the linear combination
\[f_{i}=(m^{2}-3m+3)\cdot\ColSig(\Gamma_{i,1})-\ColSig(\Gamma_{i,2}).\]
\end{lemma}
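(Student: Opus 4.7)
The plan is to evaluate both $\ColSig(\Gamma_{i,1})$ and $\ColSig(\Gamma_{i,2})$ pointwise on an arbitrary colorful assignment $x\in\{0,1\}^{I(w_i)}$ and then verify the claimed identity case by case. First, I parameterize $x$ by the pair $(j_1,j_2)\in[m]^2$, where $j_1$ is the index of the unique $(i,1)$-colored dangling edge that $x$ assigns $1$ (attached to the external vertex $a_{j_1}$) and $j_2$ is the analogous index on the $(i,2)$-side (attached to $b_{j_2}$). By the definition of $f_i$ in Lemma~\ref{lem: Omega-bip}, we have $f_i(x)=[j_1=j_2]$, so it suffices to show that the right-hand side of the claimed linear combination also evaluates to $[j_1=j_2]$.

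For $\Gamma_{i,1}$, there are no internal edges, so the only colorful extension of $x$ is $x$ itself. Every external vertex has at most one of its incident edges set to $1$, so $\sigHW{\leq 1}$ is satisfied at every vertex, and therefore $\ColSig(\Gamma_{i,1},x)=1$ for every colorful $x$.

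For $\Gamma_{i,2}$, a colorful extension of $x$ must additionally assign $1$ to exactly one $(i,3)$-edge $a_{j_3}c_{j_3}$ and to exactly one $(i,4)$-edge $c_{j_4}b_{j_4}$. The $\sigHW{\leq 1}$ constraint at $a_{j_1}$, whose dangling edge is already set to $1$, forces $j_3\neq j_1$; symmetrically, the constraint at $b_{j_2}$ forces $j_4\neq j_2$; and the constraint at each shared vertex $c_j$ forces $j_3\neq j_4$. All remaining $\sigHW{\leq 1}$ constraints are automatically satisfied, since every other external vertex carries a single incident edge that is either the unused dangling edge or a single internal edge. A short count, for instance by inclusion--exclusion on the diagonal $j_3=j_4$, yields $\ColSig(\Gamma_{i,2},x)=(m-1)(m-2)=m^2-3m+2$ if $j_1=j_2$, and $\ColSig(\Gamma_{i,2},x)=(m-1)^2-(m-2)=m^2-3m+3$ if $j_1\neq j_2$.

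Plugging these values into $(m^2-3m+3)\cdot\ColSig(\Gamma_{i,1})-\ColSig(\Gamma_{i,2})$ gives $1$ in the diagonal case and $0$ in the off-diagonal case, which matches $f_i(x)=[j_1=j_2]$. The main subtlety I expect is tracking which $\sigHW{\leq 1}$ constraints become active once the dangling edges are fixed --- in particular the constraint at each shared vertex $c_j$, which is what forces $j_3\neq j_4$ and thereby produces the one-off discrepancy between the two cases that the coefficient $m^2-3m+3$ is chosen precisely to cancel.
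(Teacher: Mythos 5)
Your proof is correct and follows essentially the same approach as the paper: both arguments evaluate $\ColSig(\Gamma_{i,1},x)=1$ trivially and then compute $\ColSig(\Gamma_{i,2},x)$ pointwise by counting the valid pairs $(j_3,j_4)$ subject to $j_3\neq j_1$, $j_4\neq j_2$, $j_3\neq j_4$, arriving at $m^2-3m+2$ when $j_1=j_2$ and $m^2-3m+3$ otherwise. The paper organizes the off-diagonal count by the number of edges drawn from ``intact'' paths while you use inclusion--exclusion on the diagonal $j_3=j_4$, but these are two phrasings of the same enumeration.
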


\begin{proof}
Observe first that, for all edge-colorful
$x\in\{0,1\}^{I(w_{i})}$, we have 
\[
\ColSig_{\mathrm{col}}(\Gamma_{i,1},x)=1.
\]
This is because $x$ trivially is the only satisfying assignment that
extends $x$, since there are no edges other than $I(w_{i})$ in $\Gamma_{i,1}$.

Concerning $\Gamma_{i,2}$, let $x\in\{0,1\}^{I(w_{i})}$ be a colorful
assignment and let $e_{1},e_{2}$ be the edges that are assigned $1$ under $x$. We show 
\begin{equation}
\ColSig_{\mathrm{col}}(\Gamma_{i,2},x)=\begin{cases}
m^{2}-3m+2 & \mbox{if }\pi(e_{1})=\pi(e_{2}),\\
m^{2}-3m+3 & \mbox{otherwise},
\end{cases}\label{eq: sigGamma2}
\end{equation}
which implies the claim of the lemma. In the following, we calculate the two cases in (\ref{eq: sigGamma2})
separately. To this end, let us say that a path in $\Gamma_{i,2}$ is \emph{hit} by $x$ if at least one of the edges assigned $1$ in $x$ is incident with an endpoint of the path.
\begin{itemize}
\item If $\pi(e_{1})=\pi(e_{2})$, then there are $m-1$ paths in $\Gamma_{i,2}$ not hit by
$x$, each on the same two colors. This gives $(m-1)_{2}=m^{2}-3m+2$
matchings.
\item Otherwise, there are $m-2$ paths not hit by $x$. Each matching
may contain 
\begin{itemize}
\item $2$ edges from the intact paths, yielding $(m-2)_{2}$ matchings,
or 
\item $1$ such edge, yielding $2(m-2)$ matchings, or
\item $0$ such edges, yielding $1$ matching.
\end{itemize}
\end{itemize}
By summing over the disjoint possible choices, we obtain (\ref{eq: sigGamma2}).
\end{proof}

Using Lemmas~\ref{lem: combined signature lemma}, \ref{lem: Omega-bip} and \ref{lem: comb-sig-omegabip}, 
we can now reduce counting edge-colorful matchings in graphs~$G$ to the same problem in subdivisions of $G$.
For a $k$-colored graph~$G$ and $t\in\mathbb{N}$,
a $t$-subdivision of $G$ is obtained by replacing each edge of $G$ by a path with exactly~$t$ inner 
vertices. We may assign any colors to the new edges.
\begin{lemma} 
\label{lem: reduce-subdiv}Let $G$ be a $k$-edge-colored graph
on $n$ vertices and $m$ edges. Then we can compute $\pCol(G)$ with
$O(2^{k})$ oracle calls of the form $\pCol(G')$ for graphs $G'$ that are subgraphs
of a $3$-subdivision of $G$. Furthermore, $G'$ has at most $4(n+m)$ vertices
and edges and at most $4k$ colors.\end{lemma}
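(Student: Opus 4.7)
The plan is to combine the three previous technical results---Lemma~\ref{lem: Omega-bip}, Lemma~\ref{lem: comb-sig-omegabip}, and Lemma~\ref{lem: combined signature lemma}---with the realization Fact~\ref{fact: realize signature graph with matchgates}, and then read off the structural bounds from the explicit matchgate constructions.

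First, I would apply Lemma~\ref{lem: Omega-bip} to obtain the $2k$-edge-colored signature graph $\Omega_{\mathit{bip}}$ with the distinguished subdivision vertices $w_1,\ldots,w_k$, so that $\pCol(G) = \ColHolant(\Omega_{\mathit{bip}})$. By Lemma~\ref{lem: comb-sig-omegabip}, each of the $k$ signatures $f_i$ at $w_i$ is a linear combination of exactly two matchgate signatures $\ColSig(\Gamma_{i,1})$ and $\ColSig(\Gamma_{i,2})$. Applying Lemma~\ref{lem: combined signature lemma} with $s=k$ and $t=2$ therefore yields
\[
\ColHolant(\Omega_{\mathit{bip}})
=\sum_{\theta\in[2]^k}\left(\prod_{i=1}^{k} c_{i,\theta(i)}\right)\cdot\ColHolant(\Omega_\theta)\,,
\]
a linear combination of $2^k$ Holants in which every remaining vertex carries $\sigHW{\leq 1}$. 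Each summand $\ColHolant(\Omega_\theta)$ can be turned into a $\pCol$ instance $G'_\theta$ via Fact~\ref{fact: realize signature graph with matchgates}, giving the claimed $O(2^k)$ oracle calls.

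Next I would verify the structural bounds by inspecting how $\Gamma_{i,1}$ and $\Gamma_{i,2}$ look once inserted at $w_i$. For every edge $e=uv$ of color $i$ in $G$, the corresponding path $u\,\text{--}\,w_i\,\text{--}\,v$ of $\Omega_{\mathit{bip}}$ becomes, after inserting $\Gamma_{i,\theta(i)}$, either the two detached edges $u\,\text{--}\,a_{j(e)}$ and $b_{j(e)}\,\text{--}\,v$ (if $\theta(i)=1$) or the length-$4$ path $u\,\text{--}\,a_{j(e)}\,\text{--}\,c_{j(e)}\,\text{--}\,b_{j(e)}\,\text{--}\,v$ (if $\theta(i)=2$). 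In either case, every edge of $G$ is replaced by a subgraph of the length-$4$ path obtained by inserting three subdivision vertices, so $G'_\theta$ is a subgraph of a $3$-subdivision of $G$. Counting vertices gives at most $n+3m$ (the original $n$ vertices plus $a_j,b_j,c_j$ per edge) and counting edges gives at most $4m$, which together are bounded by $4(n+m)$. The edge-coloring uses only the colors $(i,j)$ for $i\in[k]$ and $j\in\{1,2,3,4\}$, so at most $4k$ colors occur in $G'_\theta$.

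The only subtlety is checking that the preconditions of Lemma~\ref{lem: combined signature lemma} and Fact~\ref{fact: realize signature graph with matchgates} are met simultaneously: the internal colors of $\Gamma_{i,2}$ (namely $(i,3)$ and $(i,4)$) must be disjoint from the colors $(i',1),(i',2)$ appearing in $\Omega_{\mathit{bip}}$, which is automatic by construction, and the matchgate signatures must be defined on edge orderings compatible with the ones used in $\Omega_{\mathit{bip}}$, which is how Lemma~\ref{lem: comb-sig-omegabip} was set up. Once these trivialities are observed, the only real bookkeeping is the edge/vertex count above, which I expect to be the main---though still routine---obstacle.
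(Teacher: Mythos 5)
Your proposal is correct and follows essentially the same route as the paper: apply Lemma~\ref{lem: Omega-bip} to get $\Omega_{\mathit{bip}}$, decompose each $f_i$ via Lemma~\ref{lem: comb-sig-omegabip}, invoke Lemma~\ref{lem: combined signature lemma} with $s=k$ and $t=2$, convert each $\Omega_\theta$ back to a $\pCol$ instance, and verify the structural bounds by tracing what the matchgate insertions do to each original edge. The paper makes the structural claim slightly more explicitly by first observing that inserting $\Gamma_{i,2}$ everywhere yields a $3$-subdivision $\Omega^*$ and that every $\Omega_\theta$ is a subgraph of $\Omega^*$, but your per-edge local analysis establishes the same thing.
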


\begin{proof}
Recall that $\Omega_{\mathit{bip}} = \Omega_{\mathit{bip}}(G)$ is bipartite, with the vertices
$\{w_{i}\}_{i\in[k]}$ on one side.  
If we insert $\Gamma_{i,2}$ at $w_i$, for every $i\in[k]$, the resulting signature graph $\Omega^*$ is a $3$-subdivision of $G$. It has at most $4(n+m)$ vertices and edges and at most $4k$ colors.
For $\theta \in [2]^k$, consider the graph $\Omega_\theta$ obtained from $\Omega_{\mathit{bip}}$ by inserting $\Gamma_{i,\theta(i)}$ at $w_i$ for every $i\in [k]$:
We observe that $\Omega_\theta$ is a subgraph of $\Omega^*$.

Invoking Lemmas~\ref{lem: combined signature lemma} and \ref{lem: comb-sig-omegabip}, we can hence
write $\ColHolant(\Omega_{\mathit{bip}})$ as a linear combination
of the $2^{k}$ quantities $\ColHolant(\Omega_\theta)$ where each $\Omega_\theta$
features only the signature $\sigHW{\leq1}$. By Fact~\ref{fact: Match-Holant}, the value $\ColHolant(\Omega_\theta)$
can hence be computed as $\pCol(G')$ for the edge-colored graph~$G'$
obtained by removing all signatures from $\Omega_\theta$. As observed above, this is a subgraph of a $3$-subdivision of $G$.
\end{proof}

Theorem~\ref{thm: match-restrbip} now follows easily from the hardness of $\pCol$ and repeated
applications of Lemma~\ref{lem: reduce-subdiv}:

\begin{proof}[Proof of Theorem~\ref{thm: match-restrbip}]
Let $c\in \mathbb{N}$ be an arbitrary constant and let $G$ be a $k$-edge-colored graph for which we wish to compute $\pCol(G)$.
We first prove Theorem~\ref{thm: match-restrbip} for the edge-colorful case. 
To this end, we reduce the computation of $\pCol(G)$ to instances $\pCol(F)$, where $F$ is obtained from $G$ by repeating $c$ times the operation of taking a subgraph of the $3$-subdivision.

Note that this indeed proves the edge-colorful part of Theorem~\ref{thm: match-restrbip}:
If $F$ is obtained as above, then $F$ is bipartite, with the maximum degree of one side bounded by $2$: Put the first and third vertex (if present) of all subdivided edges on one side. Concerning the girth, each cycle $C$ in $G$ will appear in $F$ either (i) as a subdivision of $C$, or (ii) not at all, because vertices of $C$ or its subdivisions were deleted in the process. No other cycles can be created by the operation.

A single application of Lemma~\ref{lem: reduce-subdiv}
on a $[t]$-edge-colored graph $F$ creates $2^t$ new instances for $\pCol$ that are
$[4t]$-edge-colored subgraphs of $3$-subdivisions of~$F$. Iteratively applying
Lemma~\ref{lem: reduce-subdiv} on $G$ for a total of $c$ times yields~$2^{O(4^c k)}$ instances $\pCol(F)$ where each $F$ is a $[4^c k]$-edge-colored graph on ${O(4^c(n+m))}$ vertices and edges that is obtained by repeatedly taking subgraphs of subdivisions. In particular, since $c$ is constant, we reduce counting $k$-matchings in $G$ to counting $O(k)$-matchings in the graphs $F$, hence proving the claimed lower bound under~$\cc{\#ETH}$.

Finally, we show $\sw$-completeness of the uncolored variant of the problem. To this end, we use Lemma~\ref{lem: remove-colors} to reduce from the edge-colorful variant: To compute $\pCol(F)$ for a $[t]$-colored graph $G$, we only need to count $t$-matchings in uncolored subgraphs of $F$. This preserves the properties required on $F$ and the claimed lower bound under $\cc{\#ETH}$.
\end{proof}

\section{Matchings in line graphs}\label{sec:P2pack}%

We now sketch the proof of Theorem~\ref{thm: match-line}, 
which asserts that counting $k$-matchings in line graphs is $\sw$-hard.
In our proof, we will use an equivalent characterization of this problem:
A~\emph{wedge} is any graph isomorphic to $P_2$, the path with two edges, and a 
\emph{wedge packing} $k\cdot P_2$ is the vertex-disjoint union of~$k$ wedges.
For any graph~$G$, we observe that the number of embeddings of a $k$-matching 
in~$L(G)$ is equal to the number of edge-injective homomorphisms from a wedge 
packing~$k\cdot P_2$ to~$G$.

To prove Theorem~\ref{thm: match-line}, we reduce from the $k$-matching problem in well-structured bipartite graphs to 
counting edge-injective homomorphisms from wedge packings.
The following lemma encapsulates the interpolation argument 
used in the reduction. 
For~$t\in \mathbb{N}$, let
$(x)_t$
denote the falling factorial, where
\[(x)_t = (x)\cdot(x-1) \dotsm (x-t+1)\,.\]
\begin{lemma} 
  \label{lem:interpolation}
  For all $g,b\in\N$, let $a_{g,b}\in\Q$ be unknowns, and for all $r\in\N$, let
  $P_r(y)$ be the univariate polynomial such that
  \begin{align*}
    P_r(y) = \sum_{k=0}^{r}\sum_{t=0}^{k} a_{t,k-t}\cdot \binom{r}{k} \cdot 
    (y-t)_{2(r-k)}
    \,.
  \end{align*}
  There is a polynomial-time algorithm that, given a number $k$ and the 
  coefficients of~$P_r(y)$ for all $r\in\N$ with $r\le O(k)$, computes the 
  numbers $a_{t,k-t}$ for all $t\in\set{0,\dots,k}$.
\end{lemma}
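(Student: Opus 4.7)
The plan is to compute the unknowns $a_{t,k-t}$ by induction on $k$. The base case $k = 0$ is immediate, since $P_0(y) = a_{0,0}$ is a constant. For the inductive step, assume all $a_{g,b}$ with $g + b < k$ have been determined, and form the residual polynomial
\[
Q_{r,k}(y) \,=\, P_r(y) \,-\, \sum_{k' < k}\binom{r}{k'}\sum_{t=0}^{k'} a_{t,k'-t}\,(y-t)_{2(r-k')},
\]
which is computable from the input and the inductive hypothesis. By the defining formula for $P_r$, we have $Q_{r,k}(y) = \sum_{k' \ge k}\binom{r}{k'}\sum_{t=0}^{k'} a_{t,k'-t}(y-t)_{2(r-k')}$, so the level-$k$ summand has degree $2(r-k)$ in $y$, while summands from levels $k' > k$ contribute only monomials of degree at most $2(r-k)-2$. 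This two-step degree gap is the crux of the decoupling argument.

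Instead of targeting the $a_{t,k-t}$ directly, I would target the moments $M_j^{(k)} = \sum_{t=0}^{k} t^j\,a_{t,k-t}$ for $j = 0,1,\ldots,k$; once known, they determine the $a_{t,k-t}$ by inverting the standard $(k{+}1)\times(k{+}1)$ Vandermonde matrix $(t^j)_{0 \le j,t \le k}$. Applying the forward difference operator $\Delta^{2(r-k)-j}$ to $Q_{r,k}(y)$, and using the identity $\Delta^m(y-t)_n = \tfrac{n!}{(n-m)!}(y-t)_{n-m}$, produces a polynomial of degree $j$ in $y$ to which only levels $k + i$ with $2i \le j$ contribute. In particular, for $j \in \{0,1\}$ only the level-$k$ summand survives, so $M_0^{(k)}$ and $M_1^{(k)}$ can be read off from a single $P_r$ with $r \ge k$.

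For $j \ge 2$, the degree-$j$ polynomial $\Delta^{2(r-k)-j}Q_{r,k}(y)$ additionally involves moments $M_{j-2i}^{(k+i)}$ from a bounded number of higher levels $k + i$ with $1 \le i \le \lfloor j/2 \rfloor$. Collecting the coefficient equations across $r \in \{k, k{+}1, \ldots, R\}$ for some $R = O(k)$ produces a linear system in the target moment $M_j^{(k)}$ together with a few nuisance moments from higher levels. Solving this system for every $j \le k$ yields the full list of level-$k$ moments, and the final Vandermonde inversion recovers $a_{0,k},\ldots,a_{k,0}$.

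The main obstacle is to prove that the above moment system has full rank, so that $M_j^{(k)}$ is uniquely determined regardless of the nuisance variables. A simple counting argument with $R \approx \lceil 3k/2 \rceil$ shows that the number of equations matches the number of unknowns; I would then establish non-singularity either by an explicit determinant computation on the coefficient matrix (whose entries are products of binomial coefficients $\binom{r}{k+i}$ with ratios of falling factorials, having nontrivial $r$-dependence) or by a triangularization argument that processes the unknown moments in order of increasing $2i + m$. All arithmetic is polynomial in $k$, yielding the claimed polynomial-time algorithm.
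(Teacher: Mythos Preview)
Your high-level strategy matches the paper's: reduce to the power-sum moments $M_j^{(k)}=\sum_{t=0}^{k} t^j a_{t,k-t}$ (the paper calls these $I_{k,j}$), determine those, and finish with a Vandermonde inversion. The difference is in how the moments are obtained, and this is where your proposal has a genuine gap.

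You induct on the level $k$. This forces you, for each $j\ge 2$, to carry along ``nuisance'' moments $M_{j-2i}^{(k+i)}$ from strictly higher levels, and you then have to argue that the joint linear system across several values of $r$ is non-singular. You do not actually prove this; you only say you ``would establish non-singularity either by an explicit determinant computation \ldots\ or by a triangularization argument.'' That is the crux of the lemma, and it is left open. A counting argument that the number of equations matches the number of unknowns is not sufficient.

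The paper avoids this difficulty by inducting on $m=2k+i$ rather than on $k$. At stage $m$ one looks, for varying $r$, at the single coefficient of $y^{2r-(m+1)}$ in $P_r(y)$. Only levels $k\le\lfloor(m+1)/2\rfloor$ contribute, and after subtracting the already-known moments $I_{k,j}$ with $2k+j\le m$, one is left with a linear system in the unknowns $I_{k,m+1-2k}$ whose coefficient matrix has entries $\binom{2(r-i)}{m+1-2i}\binom{r}{i}$. The $i$-th column is the evaluation vector of a polynomial in $r$ of degree exactly $m+1-i$; since these degrees are pairwise distinct, the columns are linearly independent and the matrix is invertible. That single observation replaces the entire non-singularity argument you left open. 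If you switch your induction from $k$ to $2k+j$, the nuisance moments disappear and the rank question becomes this one-line degree argument.
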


\begin{proof}
Let $t\in\set{0,\dots,k}$.
  For all $k,i\in\N$, let $I_{k,i}$ be defined as
  \[
    I_{k,i}=\sum_{t=0}^{k} a_{t,k-t} \cdot t^i
    \,.
  \]
  As an intermediate step, we construct a polynomial-time algorithm that allows us, 
  given the coefficients of $P_r(y)$ and a number $m\in\N$, to compute $I_{k,i}$ 
  for all $k,i\in\N$ with $2k+i\le m$.

  If $m=0$, then $I_{0,0}$ is the only number we need to compute. We obtain it by observing that~$P_0(y)=I_{0,0}=a_{0,0}$.
  Now suppose that $m>0$ and that we inductively already computed the values 
  $I_{k,i}$ for all $k,i\in\N$ with $2k+i\le m$.
  We will compute the values~$I_{k,i}$ with $2k+i=m+1$.

  Let $r$ be an integer that satisfies $2r-(m+1) \geq 0$.
  Furthermore, let $C^r$ denote the coefficient of $y^{2r-(m+1)}$ in $P_r(y)$, 
  which is given as input.
  We want to describe~$C^r$ as an expression in terms of the unknowns $a_{g,b}$.
  To this end, we investigate which of the summands 
  $a_{t,k-t}\cdot\binom{r}{k}\cdot (y-t)_{2(r-k)}$ contribute to $C^r$.
  \begin{claim}
		If $k > \lfloor\frac{m+1}{2}\rfloor$ then $2(r-k) < 2r-(m+1)$.
	\end{claim}
	\begin{claimproof}
    We have
    $2(r-k) = 2r-2k < 2r - 2\lfloor\frac{m+1}{2}\rfloor$.
    If $m+1$ is even, we get $2(r-k)<2r - (m+1)$ as claimed.
    Otherwise $m+1$ is odd, and we only get $2(r-k)<2r-m$.
    However, since~$m$ and~$2(r-k)$ are both even, we actually get 
    $2(r-k)<2r-(m+1)$ as claimed.
  \end{claimproof}
	It follows that the summands with $k > \lfloor\frac{m+1}{2}\rfloor$ do not contribute to $C^r$.
 
  Let us view $(y-t)_{2(r-k)}$ as a bivariate polynomial in $y$ and~$t$ for a 
  moment.
  Then, by expanding this polynomial in powers of $y$, there exist univariate polynomials $\sigma_i(t)$ for all $i\in\N$ with $i\le 
  2(r-k)$ such that
	\[
    (y-t)_{2(r-k)} = \sum_{i=0}^{2(r-k)}\sigma_i(t) \cdot y^{2(r-k)-i}
    \,.
  \]
  Using bivariate interpolation, we can easily compute all coefficients of 
  $\sigma_{i}(t)$ for all $i\le 2(r-k)$.
  Note that the coefficient of $t^{m+1-2k}$ in $\sigma_{m+1-2k}$ is 
  $(-1)^{m+1}\cdot\binom{2(r-k)}{m+1-2k}$.
  Let $c_0,\ldots,c_{m+1-2k-1}$ be the remaining coefficients.\newline
  Since only terms with $k\le \lfloor (m+1)/2 \rfloor$ contribute to $C^r$, we 
  obtain the following.
  {\small
  \begin{align*}
    C^r &= \sum_{k=0}^{\lfloor\frac{m+1}{2}\rfloor} \sum_{t=0}^{k}a_{t,k-t} \cdot 
    \binom{r}{k} \cdot \sigma_{m+1-2k}(t)
    \\
    & = \sum_{k=0}^{\lfloor\frac{m+1}{2}\rfloor} \sum_{t=0}^{k}a_{t,k-t} \cdot 
    \binom{r}{k} \cdot \left[(-1)^{m+1}\binom{2(r-k)}{m+1-2k}t^{m+1-2k} + 
    \sum_{j=0}^{m+1-2k-1}c_jt^j \right]\\
    & = (-1)^{m+1}\cdot\sum_{k=0}^{\lfloor\frac{m+1}{2}\rfloor} 
    \Bigg[\sum_{t=0}^{k}a_{t,k-t} \cdot \binom{r}{k} \cdot\binom{2(r-k)}{m+1-2k}\cdot 
    t^{m+1-2k} + \sum_{t=0}^{k}a_{t,k-t} \cdot \binom{r}{k} \cdot 
    \sum_{j=0}^{m-2k}c_j\cdot t^j \Bigg]\\
    & = (-1)^{m+1}\cdot \left[ \sum_{k=0}^{\lfloor\frac{m+1}{2}\rfloor} 
    \binom{2(r-k)}{m+1-2k} \binom{r}{k} \cdot \sum_{t=0}^{k} a_{t,k-t} 
    \cdot t^{m+1-2k} \right]
     + \left[ \sum_{k=0}^{\lfloor\frac{m+1}{2}\rfloor}\sum_{j=0}^{m-2k}c_j 
     \binom{r}{k} \sum_{t=0}^{k}a_{t,k-t}\cdot t^j \right]\\
    & = (-1)^{m+1}\cdot \left[ \sum_{k=0}^{\lfloor\frac{m+1}{2}\rfloor} 
    \binom{2(r-k)}{m+1-2k} \cdot \binom{r}{k} \cdot I_{k,m+1-2k} \right]
     + \left[ \sum_{k=0}^{\lfloor\frac{m+1}{2}\rfloor}\sum_{j=0}^{m-2k}c_j \cdot 
    \binom{r}{k} \cdot I_{k,j} \right] \,.
  \end{align*}}
  Now consider the $I_{k,j}$ from the last sum.
  Since $2k+j \leq 2k + m -2k = m$, we have already computed these $I_{k,j}$ 
  recursively.
  We also know all of the $c_j$, so we can compute the number $C'^{r}$ for any 
  $r \geq \frac{m+1}{2}$, where
	\[
    C'^r =  \sum_{k=0}^{\lfloor\frac{m+1}{2}\rfloor} \binom{2(r-k)}{m+1-2k} 
    \cdot \binom{r}{k} \cdot I_{k,m+1-2k}\,.
  \]
	Finally, consider the corresponding matrix $A$ such that
  \[A_{j,i}= \binom{2(r_j-i)}{m+1 - 2i}\cdot \binom{r_j}{i}\]
  for $i = 0,\ldots,\lfloor\frac{m+1}{2}\rfloor$ and pairwise distinct and large 
  enough $r_0,\ldots,r_{\lfloor\frac{m+1}{2}\rfloor}$.\newline  Column~$i$ is an  
  evaluation vector of the polynomial $Q_i(r) = \binom{2(r-i)}{m+1 - 2i} \cdot 
  \binom{r}{i}$. Each $Q_i$ has degree $m+1-2i + i = m+1-i$; in particular, the degree of $Q_i$ 
  is different for different $i$.
  This implies that the set $\setc{Q_i}{i\in\N}$ is a set of linearly independent 
  polynomials, and thus the column vectors of $A$ are linearly independent and 
  $A$ is invertible.
  This allows us to compute the unique solution for the $I_{k,m+1-2k}$ for all 
  $k\in\N$ with $k\le m/2$.

  Finally, we argue how to compute the $a_{t,k-t}$ from the $I_{k,i}$.
  By definition, we have the following set of linear equations:
	\begin{equation*}
	\begin{split}
	\sum_{t=0}^{k}t^0 \cdot a_{t,k-t} & = I_{k,0}\\
	\sum_{t=0}^{k}t^1 \cdot a_{t,k-t} & = I_{k,1}\\
	~& \cdots  \\
	\sum_{t=0}^{k}t^k \cdot a_{t,k-t} & = I_{k,k}
	\end{split}
	\end{equation*}
  The corresponding matrix $B$ where $(B)_{i,j} = j^i$ for $i,j=0,\ldots,k$ is a 
  Vandermonde matrix and thus invertible.
  Therefore we can compute the unique solution for the~$a_{t,k-t}$.
\end{proof}

We then prove Theorem~\ref{thm: match-line} by showing the following equivalent theorem.

\begin{theorem}\label{thm: wedge packing}%
  If $\classH$ is the class of all wedge packings,
  then the problem $\#\eiHom(\classH)$ is $\#\W$-hard. If $\cc{\#ETH}$ holds, it cannot be solved in time $f(k)\cdot n^{o(k/ \log k)}$.
\end{theorem}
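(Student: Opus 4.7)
The plan is to reduce from the problem of counting $k$-matchings in bipartite graphs whose right side has degree at most $2$ and whose girth is arbitrarily large, which is $\sw$-hard by Theorem~\ref{thm: match-restrbip}. Given such a bipartite graph $G=(L\cup R,E)$, I would view its structure through the lens of wedges: every right vertex of degree $2$ is the center of a unique wedge in $G$, and the number of edge-disjoint wedge packings centered at such right vertices (equivalently, the number of vertex-disjoint embeddings of a wedge packing whose centers are forced into $R$) recovers the number of $k$-matchings of $G$ up to a simple combinatorial factor depending only on $k$.

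Next, I would introduce a family of auxiliary graphs $\{G_y\}_{y\in\N}$ obtained from $G$ by attaching a gadget with $y$ identical pendants to some distinguished configuration of vertices. The design should ensure that every edge-injective homomorphism $\varphi\colon r\cdot P_2 \to G_y$ splits its $r$ pattern-wedges into three classes: wedges mapped entirely inside the pendant gadget (each consuming two pendants), wedges straddling the gadget and $G$ (each consuming one pendant), and wedges lying entirely in $G$. If $k$ counts the wedges of the latter two classes and $t$ counts the straddling ones, the enumeration factors as
\begin{equation*}
  \#\eiHom(r\cdot P_2,\,G_y)
  \;=\;\sum_{k=0}^{r}\sum_{t=0}^{k} a_{t,k-t}\cdot\binom{r}{k}\cdot(y-t)_{2(r-k)}\,,
\end{equation*}
which matches $P_r(y)$ from Lemma~\ref{lem:interpolation} exactly. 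Here $\binom{r}{k}$ counts the choice of which positions in the wedge packing are ``structural,'' $(y-t)_{2(r-k)}$ counts the placements of the $r-k$ pendant-only wedges after the $t$ straddling wedges have consumed $t$ pendants from the shared pool of size $y$, and $a_{t,k-t}$ is the number of structural configurations in $G$ involving $t$ straddling and $k-t$ purely $G$-internal wedges.

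I would then query the $\#\eiHom$ oracle on $G_y$ for all $r\in\{0,\dots,O(k)\}$ and enough values of $y$ to interpolate each univariate polynomial $P_r(y)$ of degree at most~$2r$; applying Lemma~\ref{lem:interpolation} then recovers all $a_{t,k-t}$ for $t\in\{0,\dots,k\}$ in polynomial time. In particular, the coefficient $a_{0,k}$ (or the analogous one, depending on how the gadget is oriented) equals, up to an explicit factor depending only on $k$, the number of $k$-matchings of $G$ restricted to wedges centered at degree-$2$ right vertices; designing the gadget so that this quantity in fact equals the number of all $k$-matchings of $G$ (e.g.\ by subdividing to neutralize the other wedge contributions, exploiting the girth condition to avoid short cycles) completes the reduction. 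The reduction runs in polynomial time, makes $\poly(k)$ oracle queries, and increases the parameter only by a constant factor, so the $\#\cc{ETH}$ lower bound transfers verbatim from Theorem~\ref{thm: match-restrbip}.

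The main obstacle is the combinatorial design of the pendant gadget: it must be calibrated so that the count factors into exactly the pattern of Lemma~\ref{lem:interpolation}, and in particular so that all pendant-only wedges draw from a \emph{single} pool of pendants (yielding the single falling factorial $(y-t)_{2(r-k)}$) rather than from disjoint per-vertex pools (which would contribute a product and break the required structure). Getting the bookkeeping right also requires ruling out spurious edge-injective homomorphisms -- for example, wedges centered on gadget vertices that accidentally use edges of $G$, or wedges creating short cycles in $G_y$ that do not correspond to any of the three intended classes -- which is why the large-girth hypothesis provided by Theorem~\ref{thm: match-restrbip} enters the argument.
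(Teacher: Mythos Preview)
Your skeleton is the paper's: reduce from Theorem~\ref{thm: match-restrbip}, attach a star of variable size to obtain a one-parameter family of hosts, classify the $r$ pattern wedges into three types by how many of their edges touch the gadget, and invoke Lemma~\ref{lem:interpolation} to isolate one coefficient $a_{g,b}$. The gap is in the combinatorics that links some~$a_{g,b}$ back to the number of $k$-matchings of~$G$.

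Your opening heuristic---``degree-$2$ right vertices are wedge centres, and vertex-disjoint wedge packings with centres in~$R$ recover $k$-matchings up to a factor''---is false: such a packing merely selects $k$ degree-$2$ right vertices with pairwise disjoint left neighbourhoods, which is not a $k$-matching of~$G$, and no factor depending only on~$k$ repairs this. Consequently neither $a_{0,k}$ nor $a_{k,0}$ in your setup equals the matching count, and ``subdividing to neutralise other wedge contributions'' does not supply the missing mechanism. What the paper actually does is different in two essential respects. First, it \emph{contracts} every degree-$2$ right vertex to an edge between its two left neighbours (girth $>4$ keeps the result simple), so that the host graph $G^0$ lives on $L(G)$ together with the degree-$1$ right vertices. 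Second, it adds an apex~$0$ adjacent to \emph{all} of $L(G)$ (not merely to pendants), and the pendants $1,\dots,r$ are further leaves at~$0$. The three wedge types are then classified by the number of edges incident to~$0$ (not by the number of pendant edges), so that $y=n+r$ is the degree of~$0$. The target is $a_{k,0}$, i.e.\ \emph{all} wedges straddling: each such wedge is $0\text{--}v\text{--}w$ with $v\in L(G)$, and this encodes a single edge of the original~$G$ incident to~$v$. Edge-injectivity at~$0$ forces the~$v_i$ to be distinct, and the contraction step is precisely what forces disjointness on the $R$-side as well, yielding a $(2^k k!)$-to-$1$ correspondence with $k$-matchings. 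Without the apex-to-$L(G)$ edges and the contraction, there is no coupling between straddling wedges and edges of~$G$, so the interpolation recovers nothing useful.
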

\begin{proof}
  We reduce from the problem of counting $k$-matchings in bipartite
  graphs whose right-side vertices have degree $\leq 2$ and where any two
  distinct left-side vertices have at most one common neighbor.
  For this problem, Theorem~\ref{thm: match-restrbip} for bipartite graphs with girth greater 
  than~$4$ implies $\sw$-hardness and the desired bound under $\cc{\#ETH}$.
  Let $(G,k)$ be an instance of this problem, and let~$L(G)$ and~$R(G)$ be the
  left and right vertex sets, respectively.
  For $r\in\N$, we construct a graph $G^r$ as follows (see Figure~\ref{fig:gr}):

\begin{figure}[tbp]
\centering
\begin{tikzpicture}[-,thick, xscale = 0.4, yscale=0.4]
  \node[circle,inner sep=1.5pt,fill](1) at (-2,1) {};
  \node[circle,inner sep=1.5pt,fill](2) at (-2,-0.666) {};
  \node[circle,inner sep=1.5pt,fill](3) at (-2,-2.333) {};
  \node[circle,inner sep=1.5pt,fill](4) at (-2,-4) {};
  \node[circle,inner sep=1.5pt,fill](5) at (0,1) {};
  \node[circle,inner sep=1.5pt,fill](6) at (0,0) {};
  \node[circle,inner sep=1.5pt,fill](7) at (0,-1) {};
  \node[circle,inner sep=1.5pt,fill](8) at (0,-2) {};
  \node[circle,inner sep=1.5pt,fill](9) at (0,-3) {};
  \node[circle,inner sep=1.5pt,fill](10) at (0,-4) {};
  
  \draw (1)--(5);
  \draw (1)--(6);
  \draw (1)--(7);
  \draw (2)--(6);
  \draw (2)--(8);
  \draw (3)--(7);
  \draw (3)--(8);
  \draw (3)--(9);
  \draw (3)--(10);
  \draw (4) -- (10);
  
  \node[circle,inner sep=1.5pt,fill](11) at (8,1) {};
  \node[circle,inner sep=1.5pt,fill](12) at (8,-0.666) {};
  \node[circle,inner sep=1.5pt,fill](13) at (8,-2.333) {};
  \node[circle,inner sep=1.5pt,fill](14) at (8,-4) {};
  \node[circle,inner sep=1.5pt,fill](15) at (10,1) {};
  \node[circle,inner sep=1.5pt,fill](16) at (10,-3) {};
  
  \node[circle,inner sep=1.5pt,fill](17) at (6,-1.5) {};
  
  \node[circle,inner sep=1.5pt,fill](18) at (5,-0.666) {};
  \node[circle,inner sep=1.5pt,fill](19) at (5,-2.333) {};
  \draw[decoration={brace,mirror,raise=5pt},decorate]
  (5,-0.5) -- node[left=6pt] {$r$} (5,-2.5);
  
  \draw (18) -- (17);
  \draw (19) --(17);
  \draw (17) --(12);
  \draw (17) -- (11);
  \draw (15) -- (11);
  \draw (17) -- (13);
  \draw (11) to [bend left] (13);
  \draw (17) -- (14);
  \draw (13) -- (14);
  \draw (12) -- (13);
  \draw (16) -- (13);
  \draw (11) --(12);
  
  \node (21) at (2.5, -1.5) {{\huge $\Rightarrow$}};
  \node (24) at (4,-5) {\footnotesize Construction of $G^r$};

\end{tikzpicture}
\begin{tikzpicture}[-, xscale = 0.45, yscale=0.35]
\node (1337) at (0,3) {$~$};
  \node[circle,inner sep=1.5pt,fill](1) at (-2,1) {};
  \node[circle,inner sep=1.5pt,fill](2) at (-2,-0.666) {};
  \node[circle,inner sep=1.5pt,fill](3) at (-2,-2.333) {};
  \node[circle,inner sep=1.5pt,fill](4) at (-2,-4) {};
  \node[circle,inner sep=1.5pt,fill](5) at (0,1) {};
  \node[circle,inner sep=1.5pt,fill](6) at (0,0) {};
  \node[circle,inner sep=1.5pt,fill](7) at (0,-1) {};
  \node[circle,inner sep=1.5pt,fill](8) at (0,-2) {};
  \node[circle,inner sep=1.5pt,fill](9) at (0,-3) {};
  \node[circle,inner sep=1.5pt,fill](10) at (0,-4) {};
  
  \draw[very thick, dashed] (1)--(5);
  \draw[thick, gray] (1)--(6);
  \draw[thick, gray] (1)--(7);
  \draw[thick, gray] (2)--(6);
  \draw[thick, gray] (2)--(8);
  \draw[very thick, dotted] (3)--(7);
  \draw[thick, gray] (3)--(8);
  \draw[thick, gray] (3)--(9);
  \draw[thick, gray] (3)--(10);
  \draw[very thick, dashdotted] (4) -- (10);
  
  \node[circle,inner sep=1.5pt,fill](11) at (6,1) {};
  \node[circle,inner sep=1.5pt,fill](12) at (6,-0.666) {};
  \node[circle,inner sep=1.5pt,fill](13) at (6,-2.333) {};
  \node[circle,inner sep=1.5pt,fill](14) at (6,-4) {};
  \node[circle,inner sep=1.5pt,fill](15) at (8,1) {};
  \node[circle,inner sep=1.5pt,fill](16) at (8,-3) {};
  
  \node[circle,inner sep=1.5pt,fill](17) at (4,-1.5) {};
  
  \node[circle,inner sep=1.5pt,fill](18) at (3,-0.666) {};
  \node[circle,inner sep=1.5pt,fill](19) at (3,-2.333) {};
  \draw[decoration={brace,mirror,raise=5pt},decorate]
  (3,-0.5) -- node[left=6pt] {$r$} (3,-2.5);
  
  \draw[thick, gray] (18) -- (17);
  \draw[thick, gray] (19) --(17);
  \draw[thick, gray] (17) --(12);
  \draw[very thick, dashed] (17) -- (11);
  \draw[very thick, dashed] (15) -- (11);
  \draw[very thick, dotted] (17) -- (13);
  \draw[very thick, dotted] (11) to [bend left] (13);
  \draw[very thick, dashdotted] (17) -- (14);
  \draw[very thick, dashdotted] (13) -- (14);
  \draw[thick, gray] (12) -- (13);
  \draw[thick, gray] (16) -- (13);
  \draw[thick, gray] (11) --(12);
  
  \node (21) at (1, -1.5) {{ $\Leftrightarrow$}};
  \node (24) at (2.125, -5) {\footnotesize Image of $3$ \emph{good} wedges};

  \node[circle,inner sep=1.5pt,fill](110) at (14,1) {};
  \node[circle,inner sep=1.5pt,fill](120) at (14,-0.666) {};
  \node[circle,inner sep=1.5pt,fill](130) at (14,-2.333) {};
  \node[circle,inner sep=1.5pt,fill](140) at (14,-4) {};
  \node[circle,inner sep=1.5pt,fill](150) at (16,1) {};
  \node[circle,inner sep=1.5pt,fill](160) at (16,-3) {};
  
  \node[circle,inner sep=1.5pt,fill](170) at (12,-1.5) {};
  
  \node[circle,inner sep=1.5pt,fill](180) at (11,-0.666) {};
  \node[circle,inner sep=1.5pt,fill](190) at (11,-2.333) {};
  \draw[decoration={brace,mirror,raise=5pt},decorate]
  (11,-0.5) -- node[left=6pt] {$r$} (11,-2.5);
    \node (230) at (13.25, -5) {{\footnotesize Image of a \emph{test} wedge}};
  
  \draw[very thick] (180) -- (170);
  \draw[thick, gray] (190) --(170);
  \draw[very thick] (170) --(120);
  \draw[thick, gray] (170) -- (110);
  \draw[thick, gray] (150) -- (110);
  \draw[thick, gray] (170) -- (130);
  \draw[thick, gray] (110) to [bend left] (130);
  \draw[thick, gray] (170) -- (140);
  \draw[thick, gray] (130) -- (140);
  \draw[thick, gray] (120) -- (130);
  \draw[thick, gray] (160) -- (130);
  \draw[thick, gray] (110) --(120);

  \node[circle,inner sep=1.5pt,fill](1100) at (22,1) {};
  \node[circle,inner sep=1.5pt,fill](1200) at (22,-0.666) {};
  \node[circle,inner sep=1.5pt,fill](1300) at (22,-2.333) {};
  \node[circle,inner sep=1.5pt,fill](1400) at (22,-4) {};
  \node[circle,inner sep=1.5pt,fill](1500) at (24,1) {};
  \node[circle,inner sep=1.5pt,fill](1600) at (24,-3) {};
  
  \node[circle,inner sep=1.5pt,fill](1700) at (20,-1.5) {};
  
  \node[circle,inner sep=1.5pt,fill](1800) at (19,-0.666) {};
  \node[circle,inner sep=1.5pt,fill](1900) at (19,-2.333) {};
  \draw[decoration={brace,mirror,raise=5pt},decorate]
  (19,-0.5) -- node[left=6pt] {$r$} (19,-2.5);
  \node (230) at (21.25, -5) {{\footnotesize Image of a \emph{bad} wedge}};
  
  \draw[thick, gray] (1800) -- (1700);
  \draw[thick, gray] (1900) --(1700);
  \draw[thick, gray] (1700) --(1200);
  \draw[thick, gray] (1700) -- (1100);
  \draw[thick, gray] (1500) -- (1100);
  \draw[thick, gray] (1700) -- (1300);
  \draw[thick, gray] (1100) to [bend left] (1300);
  \draw[thick, gray] (1700) -- (1400);
  \draw[thick, gray] (1300) -- (1400);
  \draw[very thick] (1200) -- (1300);
  \draw[very thick] (1600) -- (1300);
  \draw[thick, gray] (1100) --(1200);

\end{tikzpicture}
\caption{\label{fig:gr} Example of the construction of $G^r$ as used in the proof of Theorem~\ref{thm: wedge packing}. The second row illustrates the correspondence between a $3$-matching in $G$ and the image of an edge-injective homomorphism from a wedge packing of size $3$ such that all wedges are \emph{good}. Furthermore we give examples for the image of a \emph{test} wedge and a \emph{bad} wedge.  }
\end{figure}
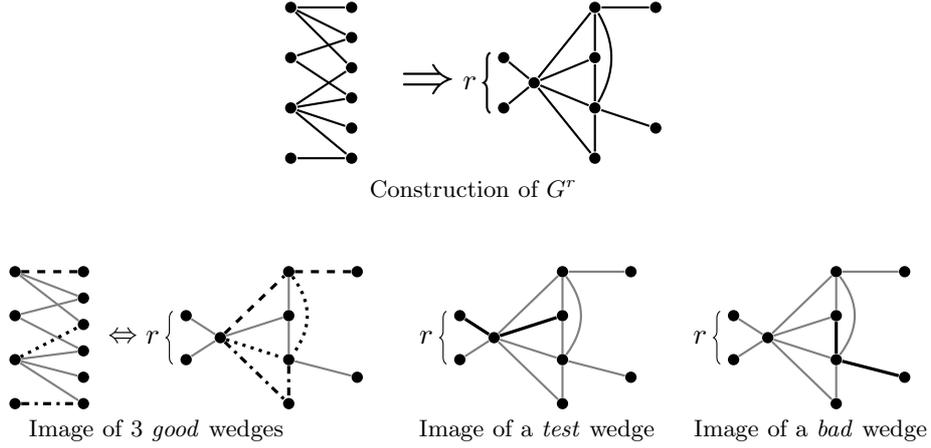
  \begin{enumerate}
    \item
      Insert a vertex $0$ that is adjacent to all vertices of $L(G)$.
    \item
      Add $r$ special vertices $1,\dots,r$ as well as the edges $01, 02, \dots,
      0r$.      
    \item
      For every vertex $v\in R(G)$ with $\deg(v)=2$, remove $v$ and add the set $N(v)$ as an edge to~$G^r$. Note that $|N(v)|=2$, so $N(v)$ can indeed be considered as an edge.

  \end{enumerate}
  Since $G$ is a simple graph and any two distinct vertices $u,v\in L(G)$ have
  at most one common neighbor in~$G$, the graph $G^r$ is again simple.
  Let $H=H_1 \dotcup \dots \dotcup H_k$ be the graph that consists of $k$
  vertex-disjoint copies of $P_2$.
  For $\varphi\in\eiHom(H, G^0)$, we say that a wedge $H_i$ is
  \begin{itemize}
    \item \emph{test} if $\varphi(H_i)$ contains two edges incident to $0$,
    \item \emph{good} if $\varphi(H_i)$ contains exactly one edge incident to
      $0$, and
    \item \emph{bad} if $\varphi(H_i)$ uses no edge incident to $0$.
  \end{itemize}
  Let $\alpha_{g,b}$ be the number of edge-injective homomorphisms $\varphi\in\eiHom(H,G^0)$ for which there are $0$ test wedges, $g$ good wedges, and $b$ bad wedges.
  \begin{claim} 
    \label{claim:alphak0}
    The number of $k$-matchings in $G$ is equal to $\alpha_{k,0} / \paren{2^k
      \cdot k!}$.
  \end{claim}
 
 \begin{claimproof}
    The integer $\alpha_{k,0}$ is the number of all $\varphi\in\eiHom(H,G^0)$
    such that the image of every~$H_i$ consists of a wedge that uses exactly
    one edge incident to~$0$.

    Given any such $\varphi$, we construct a $k$-matching~$M_\varphi$ of~$G$ as
    follows.
    For each~$i$, consider the wedge~$\varphi(H_i)$:
    It uses an edge $\set{0,v}$ for $v\in L(G)$ and an edge $\set{v,w}$ with
    $w\ne 0$.\linebreak
    If~$w\in R(G)$, then $N_G(w)=\set{v}$, and we add the edge $e_i$ with
    $e_i=\set{v,w}\in E(G)$ to the matching.
    Otherwise, we have $w\in L(G)$, and so the edge $\set{v,w}\in E(G^0)$
    corresponds to a vertex $u\in R(G)$ with $N_G(u)=\set{v,w}$, from which it
    was constructed.
    In this case, we add the edge $e_i$ with $e_i=\set{v,u}\in E(G)$ to the
    matching.
    Note that $e_i$ and $e_j$ for $i$ and $j$ with $i\ne j$ are disjoint; for if
    they shared a vertex $v\in L(G)$, the edge $\set{0,v}$ would be used by both
    $\varphi(H_i)$ and~$\varphi(H_j)$, and if they shared a vertex $v\in R(G)$,
    then either $N_G(v)$ or $N_G(v)\cup\set{v}$ would be an edge in $G^0$, which
    would be used by both $\varphi(H_i)$ and $\varphi(H_j)$.
    Thus the constructed set $M_\varphi$ is indeed a $k$-matching.

    On the other hand, for each $k$-matching~$M$, there are exactly $2^k\cdot k!$ edge-injective
    homomorphisms~$\varphi\in\eiHom(H,G^0)$ with $M=M_\varphi$ since the
    automorphism group of $H$ has this size.
  \end{claimproof}

  We aim at determining the number $\alpha_{k,0}$ by using
  an oracle for $\#\eiHom(\classH)$.
  Since we cannot directly ask the oracle to only count homomorphisms with a
  given number of bad and good wedges, we query the oracle multiple times and recover these numbers via a very specific form of interpolation fueled by Lemma~\ref{lem:interpolation}.
  To apply the lemma, we observe the following identity.
  \begin{claim} 
    \label{claim:betak}
    Let $k,r\in\N$. Then $\beta_k(G^r) := \#\eiHom(H,G^r)$ satisfies
    \begin{align*}
      \beta_k(G^r)
      &=
      \sum_{\substack{t,g,b\in\N \\t+g+b=k}}
      \alpha_{g,b}
      \cdot \binom{k}{g+b}
      \cdot
      (n+r-g)_{2t}
      \,.
    \end{align*}
  \end{claim}

  \begin{claimproof}
    We construct an element~$\varphi$ of
    $\eiHom(H_1\dotcup\dots\dotcup H_{k},G^r)$ whose image consists of $g$ good
    wedges, $b$ bad wedges, and $t$ test wedges, where $g+b+t=k$.
    There are $\binom{k}{g+b}$ possibilities to select the set of $H_i$ that
    will be mapped to a good or a bad wedge.
    Once this selection has been done, there are $\alpha_{g,b}$ edge-injective
    homomorphisms that map the selected~$H_i$ to $g$ good and $b$ bad wedges;
    to see this, note that $G^0$ and $G^r$ have exactly the same good and bad
    wedges.
    Finally, the test wedges can only be mapped to the edges incident to $0$,
    for which reason only the star with center~$0$ is relevant for the test
    wedges.
    Each good wedge that has already been placed blocks one edge of the star.
    Hence the~$t$ wedges map into a star $S_{n+r-g}$ with $n+r-g$ leaves.
    The number of edge-injective homomorphisms that map~$t$ wedges into a
    star with $\ell$ leaves is $(\ell)_{2t}$.
  \end{claimproof}
  Note that $\beta_k(G^r)$ is a polynomial in $r$ of degree at most $2k$.
  Setting $y=n+r$, Claim~\ref{claim:betak} yields a polynomial identity that is exactly 
  of the form required by Lemma~\ref{lem:interpolation}, and thus we can compute 
  the unknowns $\alpha_{g,b}$ for all $g,b\in\N$ with ${g+b\le k}$ from the 
  polynomials $\beta_0,\dots,\beta_{O(k)}$.
  Overall, the reduction runs in polynomial time, makes at most $O(k^2)$ queries
  to the oracle, and the parameter of each query is at most $O(k)$. This proves the $\sw$-hardness and the lower bound under $\cc{\#ETH}$.
\end{proof}

\section{Perfect matchings in line graphs of bipartite graphs}
\label{sec:perf line bipartite}

In this section, we prove Theorem~\ref{thm:perf_line}, 
which asserts that it is $\sharpP$-hard to count perfect matchings on line graphs of maximum degree $4$,
whereas this is polynomial-time solvable on $3$-regular line graphs.
We use a characterization of $3$-regular line graphs that was
established in~\cite{zhang2004disjoint}.
\begin{fact}
\label{fact: decomposeLine}
Every $3$-regular line graph $G$ with $\abs{V(G)} \geq 5$ is 
the union of two edge-disjoint graphs~$M$ and~$T$ on the same vertex set~$V(G)$,
where $M$ is a perfect matching and $T$ is a \emph{perfect triangle packing}.
That is,~$T$ is a vertex-disjoint union of triangles that covers all vertices of $G$.
Since all triangles of $G$ are contained in~$T$, the decomposition into~$M$ and~$T$ is unique.
\end{fact}

In the following, fix a graph $G$ with a partition into $M$ and $T$ as above,
and let $G_\downarrow$ be the graph obtained by contracting each triangle in $T$ to a single vertex without a self-loop.
Then~$G_\downarrow$ is~$3$-regular and it turns out that the perfect matchings of $G$ 
correspond bijectively to the odd edge-sets of $G_\downarrow$.
For the purposes of this section, we say that an edge-set $S \subseteq E(G_\downarrow)$ is \emph{odd} if,
for all $v\in V(G_\downarrow)$,
the degree of $v$ in the subgraph~$(V(G_\downarrow),S)$ is odd.
\begin{fact}
\label{fact: triangleG}
For all $t\in\set{0,\dots,|E(G_\downarrow)|}$, 
the odd edge-sets $S \subseteq E(G_\downarrow)$ of cardinality $t$ correspond bijectively to
the perfect matchings of $G$ that contain exactly $t$ edges from the matching $M$.
\end{fact}
\begin{proof}
  By construction of~$G_\downarrow$, every set~$S\subseteq E(G_\downarrow)$ corresponds to a set~${S'\subseteq E(M)}$ of the same size.
  Clearly~$S$ is odd if and only if~$S'$ is incident to exactly one or three vertices in each triangle of~$G$.
  In turn, the latter holds if and only if~$S'$ can be extended with edges of~$T$ to obtain a perfect matching of~$G$.
  Since the extension is unique if it exists, this defines a bijective mapping from odd edge-sets of~$G_\downarrow$ to perfect matchings of~$G$.
\end{proof}

Since odd edge-sets can be counted in polynomial time~\cite{DBLP:journals/iandc/CreignouH96},
we obtain the algorithmic result of Theorem~\ref{thm:perf_line} as follows.

\begin{lemma}
\label{lem: perfmatch-line-alg}
Given as input a $3$-regular line graph $G$,
the number of perfect matchings in $G$ can be computed in polynomial time.
\end{lemma}
\begin{proof}
If $|V(G)| < 5$, we apply brute-force.
Otherwise we decompose~$G$ into a perfect\linebreak matching~$M$ and a perfect triangle packing $T$ as in Fact~\ref{fact: decomposeLine}.
We obtain this decomposition in polynomial time by greedily removing triangles from~$G$;
as all triangles of $G$ are contained in the vertex-disjoint collection of triangles $T$,
this procedure does indeed recover $T$.

As a consequence of Fact~\ref{fact: triangleG}, 
the number of perfect matchings in $G$ equals the number of odd edge-sets in $G_\downarrow$.
By the algorithmic part of Theorem~4.4 in~\cite{DBLP:journals/iandc/CreignouH96},
counting odd edge-sets admits a polynomial-time algorithm.
\end{proof}

For the hardness result,
we reduce from counting perfect matchings in $3$-regular graphs,
which is $\sharpP$-hard \cite{Dagum.Luby1992}.
In the remainder, let $G$ be a $3$-regular graph.
Let~$G'$ be the graph obtained from $G$ by replacing every vertex $v\in V(G)$ by a triangle $T_v$
and attaching the $i$-th edge incident with $v$, for $i \in [3]$, to the $i$-th vertex of $T_v$.

By construction, the graph $G'$ admits a decomposition into a matching~$M$ and a triangle packing~$T$ as in Fact~\ref{fact: decomposeLine},
and we have $G = G'_\downarrow$.
Since the perfect matchings of any graph $A$ are precisely its odd edge-sets of cardinality $|V(A)|/2$,
we obtain the following corollary of Fact~\ref{fact: triangleG}:
\begin{fact}
\label{fact: perfmatchG-perfmatchLG}
The perfect matchings of $G$
correspond bijectively to
the perfect matchings of $G'$ that contain exactly $|V(G)|/2$ edges from the matching~$M$.
\end{fact}

In order to establish the reduction from counting perfect matchings in~$G$, it remains to count those perfect matchings of $G'$ that contain the desired number of edges from~$M$.
To this end, we replace each edge of~$M$ by a ``collar'' gadget from Figure~\ref{fig:collar_and_wire} so as to obtain a new graph $B$.

\begin{definition}
A \emph{collar} of length $\ell \in \mathbb N$ with end vertices $u,v$ is the graph obtained as follows:
Start with $\ell$ vertex-disjoint copies $C_1, \ldots, C_\ell$ of $K_4$.
For each $i$, let $a_i, b_i$ be two arbitrary distinct vertices of $C_i$.
For each $i$ with $1\le i < \ell$, add the edge $b_i a_{i+1}$.
Add a vertex $u$ and the edge $u a_1$,
and add a vertex $v$ and the edge $b_\ell v$.
\end{definition}
\begin{figure}
  \centering
\begin{tikzpicture}[-,thick,scale=0.6,main node/.style={circle,inner sep=1.5pt,fill}]

   \node[main node] (1) at (0,0) {};
   \node[main node] (2) at (2,0) {};
   \node[main node] (3) at (4,0) {};
   \node[main node] (4) at (6,0) {};
   \node[main node] (5) at (8,0) {};
   \node[main node] (6) at (10,0) {};
   \node[main node] (7) at (12,0) {};
   \node[main node] (8) at (14,0) {};
   \node[main node] (9) at (16,0) {};
   
   \draw (1) -- (2); \draw (2) -- (3); \draw (3) -- (4); \draw (4) -- (5); \draw (5) -- (6); \draw (6) -- (7); \draw (7) -- (8); \draw (8) -- (9);

   \node[main node] (30) at (3.5,1) {};
   \node[main node] (31) at (4.5,1) {};
   
   \node[main node] (50) at (7.5,1) {};
   \node[main node] (51) at (8.5,1) {};
   
   \node[main node] (70) at (11.5,1) {};
   \node[main node] (71) at (12.5,1) {};
   
   \draw (3) -- (30);
   \draw (3) -- (31);
   
   \draw (5) -- (50);
   \draw (5) -- (51);
   
   \draw (7) -- (70);
   \draw (7) -- (71);

   \node[inner sep=2.5pt,fill] (11) at (1,-3) {};
   \node[inner sep=2.5pt,fill] (12) at (3,-3) {};
   \node[inner sep=2.5pt,fill] (13) at (5,-3) {};
   \node[inner sep=2.5pt,fill] (14) at (7,-3) {};
   \node[inner sep=2.5pt,fill] (15) at (9,-3) {};
   \node[inner sep=2.5pt,fill] (16) at (11,-3) {};
   \node[inner sep=2.5pt,fill] (17) at (13,-3) {};
   \node[inner sep=2.5pt,fill] (18) at (15,-3) {};
   
   \draw (11) -- (12); \draw (12) -- (13); \draw (13) -- (14); \draw (14) -- (15); \draw (15) -- (16); \draw (16) -- (17); \draw (17) -- (18); 
   
   \node[inner sep=2.5pt,fill] (121) at (3.5,-2) {};
   \node[inner sep=2.5pt,fill] (131) at (4.5,-2) {};
   
   \draw (12) -- (121);
   \draw (12) -- (131);
   \draw (13) -- (121);
   \draw (13) -- (131);
   \draw (121) -- (131);
   
   \node[inner sep=2.5pt,fill] (141) at (7.5,-2) {};
   \node[inner sep=2.5pt,fill] (151) at (8.5,-2) {};
   
   \draw (14) -- (141);
   \draw (14) -- (151);
   \draw (15) -- (141);
   \draw (15) -- (151);
   \draw (141) -- (151);
   
   \node[inner sep=2.5pt,fill] (161) at (11.5,-2) {};
   \node[inner sep=2.5pt,fill] (171) at (12.5,-2) {};
   
   \draw (16) -- (161);
   \draw (16) -- (171);
   \draw (17) -- (161);
   \draw (17) -- (171);
   \draw (161) -- (171);
\end{tikzpicture}
\caption{\label{fig:collar_and_wire}A barbed wire of length $3$ (\emph{top}) and a collar of length $3$ (\emph{bottom}).}
\end{figure}
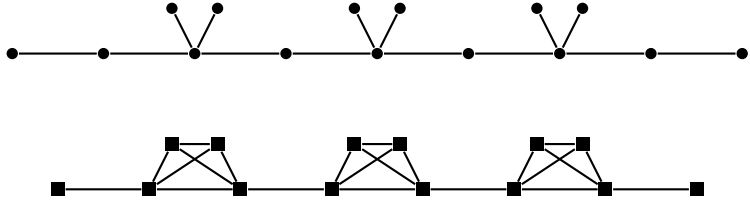

In the graph~$B$, each collar gadget intersects the remainder of $B$ only in its end vertices.
Perfect matchings of~$B$ may block zero, one, or both ends of a collar by edges outside of the collar.
In each case, there is a simple closed expression for the number of perfect matchings in the remaining part of the collar.

\begin{fact}
\label{fact: collarmatch}
A collar $X_\ell$ of length $\ell \in \mathbb N$ with end vertices $u,v$ has exactly one perfect matching.
(It contains the two edges incident to the ends, the~$\ell-1$ edges between $K_4$-copies and, for each $i$,
the unique edge in $C_i$ that is disjoint from $a_ib_i$.)
The graphs $X_\ell - u$ and $X_\ell - v$ have an odd number of vertices and thus no perfect matching.
Finally, the graph $X_\ell - \{u,v\}$ has exactly~$3^\ell$ perfect matchings, as we can independently choose one of three perfect matchings in each $K_4$-copy.
\end{fact}

We can now prove the hardness result in Theorem~\ref{thm:perf_line}.

\begin{lemma}
\label{lem: perfmatch-line-hardness}
Counting perfect matchings in line graphs of bipartite graphs is $\sharpP$-hard,
even when the input graph has maximum degree $4$.
\end{lemma}
\begin{proof}
Let $G$ be a $3$-regular graph for which we wish to determine the number of perfect matchings.
Recall that $G'$ is obtained by inserting triangles at vertices
and that $G'$ decomposes into a perfect matching~$M$ and a perfect triangle packing~$T$.
By Fact~\ref{fact: perfmatchG-perfmatchLG}, 
the number of perfect matchings of~$G$ is equal to the number of perfect matchings of~$G'$ that contain exactly $\abs{V(G)} / 2$ edges from the matching~$M$.

Let $B$ be obtained from $G'$ by replacing each edge $uv \in M$
with a fresh collar of length $|E(G')|+1$ with ends $u$ and $v$.
Write 
\[
R := 3^{|E(G')|+1}.
\]
For $0 \leq t \leq |E(M)|$, 
let $m_t$ be the number of perfect matchings of~$G'$ that contain exactly~$t$ edges from~$M$.
By Fact~\ref{fact: collarmatch}, the number of perfect matchings in $B$ satisfies
\begin{equation}\label{eq: perfmach m_t R^t}
  \#\mathrm{PerfMatch}(B) = \sum_{t=0}^{\abs{E(M)}} m_t \cdot R^{\abs{E(M)}-t}\,.
\end{equation}

Clearly $m_t<R$ holds for all $t \in \mathbb N$.
Thus \eqref{eq: perfmach m_t R^t} can be viewed as a representation of the integer
$\#\mathrm{PerfMatch}(B)$ in base $R$, 
and then the $(\abs{E(M)}-t)$-th digit in this representation is precisely~$m_t$.
Given the value of $\#\mathrm{PerfMatch}(B)$,
the values of~$m_t$ are uniquely determined
and can be recovered by elementary arithmetic.
This way, we obtain $m_{|V(G)|/2}$, the number we wish to compute.

It is clear~$B$ has maximum degree~$4$, since~$G'$ is $3$-regular 
and replacing edges with collars increases the maximum degree to~$4$.
It remains to prove that $B$ is the line graph of a bipartite graph.
To this end, we construct a bipartite graph $S$ with $B = L(S)$ as follows:
Starting from~$G$, replace each edge $uv \in E(G)$ by a \emph{barbed wire} of length $\ell = \abs{E(G')} + 1$.
This is a $(u,v)$-path with $2 \ell + 2$ edges, 
in which each of the~$\ell$ odd-numbered internal vertices has two additional leaf-edges attached
  (see Figure~\ref{fig:collar_and_wire} for an example with $\ell = 3$.)
The line graph of a barbed wire of length $\ell \in \mathbb N$ is 
a collar of length $\ell \in \mathbb N$.
From this fact, it can be verified that $B$ is the line graph of $S$.
\end{proof}

Together, Lemmas~\ref{lem: perfmatch-line-alg} and \ref{lem: perfmatch-line-hardness} prove Theorem~\ref{thm:perf_line}.

\section{Edge-injective homomorphisms}\label{sec:homs}

In this section we prove Theorem~\ref{thm:dichotomy}, our 
complexity dichotomy theorem for counting edge-injective homomorphisms.
Recall that $\#\eiHom(H,G)$ is the number of edge-injective homomorphisms from~$H$ to~$G$.
A set $S\subseteq V(H)$ is a \emph{weak vertex-cover} if every edge $e\in E(H)$
either has a non-empty intersection with $S$ or $e$ does not have any other edges 
incident to it.
The \emph{weak vertex-cover number} of~$G$ is the minimum size of a weak 
vertex-cover of~$G$.
A family of graphs $\classH$ has \emph{bounded} weak vertex-cover number if this number can be uniformly bounded by a constant $c=c(\classH)$ for all graphs $H \in \classH$; otherwise this number is \emph{unbounded} for~$\classH$.

\subsection{Polynomial-time algorithm for bounded weak vertex-cover number}

In this section, we present a polynomial-time algorithm for counting edge-injective homomorphisms from pattern graphs with bounded weak vertex-cover number. 
To improve readability, we drop the cardinality signifier~$\#$ in this section.
\begin{theorem} 
  \label{thm:polytime}
  For any constant $c\in\N$,
  given a graph~$H$ with a weak vertex-cover of size at most~$c$ and a graph~$G$, 
  we can compute~$\eiHom(H,G)$ in time $O(n^{c'})$,
  where $c'$ is a constant depending only on $c$.
\end{theorem}

In a preprocessing step, our algorithm removes isolated vertices and edges by exhaustively applying the following reduction rules.
\begin{lemma}[Deleting isolated vertices and edges]\label{lem: reduce isolated}
  Let $H$ and $G$ be graphs.
  \begin{itemize}
    \item
      If $v$ is an isolated vertex in~$H$ and $H-v$ is obtained by deleting~$v$ from~$H$, then
      \[\eiHom(H,G)=\abs{V(G)}\cdot\eiHom(H-v,G)\,.\]
    \item
      If~$e=\set{u,v}$ is an isolated edge in~$H$, then
      \[\eiHom(H,G)=2\paren{\abs{E(G)}-\abs{E(H)}+1}\cdot \eiHom(H-u-v,G)\,.\]
  \end{itemize}
\end{lemma}
\begin{proof}
  The first item is trivial. To prove the second item, first note that every edge-injective homomorphism~$h$ from $H-e$ to~$G$ has exactly $\abs{E(H)}-1$ edges of~$G$ in its image.
  To extend~$h$ to an edge-injective homomorphism from~$H$ to~$G$, we have to map~$e$ to an edge that is distinct (but not necessarily disjoint) from the edges in the image of~$h$.
  There are $\abs{E(G)}-\abs{E(H)}+1$ candidates for the image of~$e$, and once an image $e'$ of $e$ has been determined, we can independently choose one of the two orientations to map $u$ and $v$ to the endpoints of $e'$.
  Since every edge-injective homomorphism from~$H$ to~$G$ is obtained in this way exactly once, the claim follows.
\end{proof}

By this preprocessing,
we can assume that $H$ contains neither isolated vertices nor isolated edges. 
We then reduce the counting of edge-injective homomorphisms to the counting of embeddings.
We achieve this by writing $\eiHom(H,.)$ as a linear combination of embedding numbers $\Emb(F,.)$ for suitable graphs~$F$.
Let $\partitions H$ be the set of all partitions~$\rho$ of the vertex set~$V(H)$. 
Given a partition $\rho \in \partitions H$, let~$\contract H\rho$ be the quotient graph of~$H$,
which is obtained by merging each block of~$\rho$ into a vertex.
\begin{lemma}\label{lem: eiHom graph motif}
  Let $H$ and~$G$ be graphs.
  Call a partition~$\rho\in\partitions H$ \emph{edge-injective} if,
  for every pair of blocks~$B,B'\in\rho$, 
  there is at most one edge between $B$ and $B'$ in~$H$.
  Then
  \begin{equation}\label{eq: EdgInj = Lincomb Sub}
    \eiHom(H,G)
    =
    \sum_{\substack{\rho\in\partitions{H} \\ \textrm{edge-injective}}} \Emb(\contract H\rho,G)\,.
  \end{equation}
\end{lemma}
\begin{proof}
  There exists a bijection between edge-injective homomorphisms~$h$ from~$H$ to~$G$ and pairs~$(\rho,g)$ where~$\rho\in\partitions H$ is edge-injective and~$g$ is an injective homomorphism from~$\contract H\rho$ to~$G$.
  To define~$(\rho,g)$ from~$h$, put vertices of~$H$ into the same block~$B$ of~$\rho$ if and only if they map to the same vertex~$v\in V(G)$ under~$h$; then~$g$ maps that block~$B$, which is a vertex in~$\contract H\rho$, to~$v$.
  Conversely, if $(\rho,g)$ is a given pair, the canonical homomorphism~$f$ that maps~$H$ to~$\contract H\rho$ is edge-injective, and we set~$h=g\circ f$.
  It is easy to check that this is indeed the required bijection.
\end{proof}

Each quantity $\Emb(\contract H\rho,G)$ on the right side of \eqref{eq: EdgInj = Lincomb Sub} can be computed in time~$n^{\vc{\contract{H}\rho}+O(1)}$ by known algorithms~\cite{williams2013finding,curticapean2014complexity},
where $\vc{\contract{H}\rho}$ is the vertex-cover number of $\contract{H}\rho$.
Since every vertex-cover of $H$ is also a vertex-cover of $\contract{H}\rho$, we can combine the preprocessing rules for isolated vertices and edges with~\eqref{eq: EdgInj = Lincomb Sub} to obtain an $n^{c+O(1)}$ time algorithm for~$\eiHom(H,.)$ for any fixed graph~$H$ of weak vertex-cover number~$c$.
However, when taking the size of $H$ into account as $k=\abs{V(H)}$, 
there are up to $k^{\Omega(k)}$ quotient graphs~$\contract H\rho$.
Thus, for patterns $H$ of constant weak vertex-cover number,
the algorithm via~\eqref{eq: EdgInj = Lincomb Sub} is fixed-parameter tractable in the parameter~$k$, 
but it does not run in polynomial time in $k$.

To obtain a polynomial-time algorithm and thus prove Theorem~\ref{thm:polytime},
we collect terms for isomorphic quotients in \eqref{eq: EdgInj = Lincomb Sub}
in such a way that the resulting reduced linear combination has polynomial length 
and its coefficients can be computed in polynomial time.
More precisely, we define an equivalence relation on~$\partitions{H}$ that has only polynomially many equivalence classes and we show that the size of each equivalence class can be computed efficiently.
The definition of this equivalence relation is somewhat technical, so we interleave it with a discussion of a running example in Figure~\ref{fig:gralgo} to facilitate reading.
\begin{figure}[ptb]
\centering
\begin{tikzpicture}[-,thick,xscale=0.5,yscale=0.4,vc node/.style={inner sep=2.5pt,fill},nvc node/.style={circle,inner sep=1.5pt,fill}]
  \node[vc node, label=left:{$x$}] (1) at (0,0) {};
  \node[vc node, label=left:{$w$}] (2) at (0,2) {};
  \node[vc node, label=left:{$v$}] (3) at (0,4) {};
  \node[vc node, label=left:{$u$}] (4) at (0,6) {};
  
  \node[nvc node, label=right:{$f$}] (5) at (2,0) {};
  \node[nvc node, label=right:{$e$}] (6) at (2,1.5) {};
  \node[nvc node, label=right:{$d$}] (7) at (2,2.5) {};
  \node[nvc node, label=right:{$c$}] (8) at (2,4) {};
  \node[nvc node, label=right:{$b$}] (9) at (2,5) {};
  \node[nvc node, label=right:{$a$}] (10) at (2,6) {};
  
  \node (42) at (1,-1) {\large $H$};
  
  \draw (1) --(2); \draw (3) --(4);\draw (1) --(5);\draw (3) --(9);
  \draw (2) --(6); \draw (7) --(2);\draw (3) --(8);\draw (4) --(9);\draw (4) --(10);
  
  \node[vc node, label=left:{$w$}] (2) at (5,0) {};
  \node[vc node, label=left:{$vx$}] (3) at (5,3) {};
  \node[vc node, label=left:{$u$}] (4) at (5,6) {};
  
  \node[nvc node, label=right:{$f$}] (5) at (7,2.5) {};
  \node[nvc node, label=right:{$e$}] (6) at (7,0) {};
  \node[nvc node, label=right:{$d$}] (7) at (7,1) {};
  \node[nvc node, label=right:{$c$}] (8) at (7,3.5) {};
  \node[nvc node, label=right:{$b$}] (9) at (7,5) {};
  \node[nvc node, label=right:{$a$}] (10) at (7,6) {};
  
  \node (42) at (6,-1) {\large $H/\rho_C$};
  
  \draw (3) --(2);\draw (3) --(4);\draw (3) --(5);\draw (3) --(9);
  \draw (2) --(6); \draw (7) --(2);\draw (3) --(8);\draw (4) --(9);\draw (4) --(10);
  
  \node[vc node, label=left:{$w$}] (2) at (10,0) {};
  \node[vc node, label=left:{$vx$}] (3) at (10,3) {};
  \node[vc node, label=left:{$u$}] (4) at (10,6) {};
  
  \node[nvc node, label=right:{$b$}] (5) at (12,6) {};
  \node[nvc node, label=right:{$~$}] (6) at (12,3) {};
  \node[inner sep=2.5pt, label=right:{$acd$}] (60) at (12,3.1) {};
  \node[nvc node, label=right:{$ef$}] (7) at (12,0) {};
  
  \node (42) at (11,-1) {\large $H/\rho_1$};
  
  \draw (3) --(2);\draw (3) --(4); \draw (7) --(2);\draw (2) --(6);
  \draw (3) --(5);\draw (3) --(6); \draw (3) --(7);\draw (4) --(5);\draw (6) --(4);
  
  \node[vc node, label=left:{$w$}] (2) at (15,0) {};
  \node[vc node, label=left:{$vx$}] (3) at (15,3) {};
  \node[vc node, label=left:{$u$}] (4) at (15,6) {};
  
  \node[nvc node, label=right:{$b$}] (5) at (17,6) {};
  \node[nvc node, label=right:{$~$}] (6) at (17,3) {};
  \node[inner sep=2.5pt, label=right:{$aef$}] (60) at (17,3.05) {};
  \node[nvc node, label=right:{$cd$}] (7) at (17,0) {};
  
  \node (42) at (16,-1) {\large $H/\rho_2$};
  
  \draw (3) --(2);\draw (3) --(4); \draw (7) --(2);\draw (2) --(6);
  \draw (3) --(5);\draw (3) --(6); \draw (3) --(7);\draw (4) --(5);\draw (6) --(4);
  
  \node[vc node, label=left:{$w$}] (2) at (20,0) {};
  \node[vc node, label=left:{$vx$}] (3) at (20,3) {};
  \node[vc node, label=left:{$u$}] (4) at (20,6) {};
  
  \node[nvc node, label=right:{$ac$}] (5) at (22,6) {};
  \node[nvc node, label=right:{$bd$}] (6) at (22,3) {};
  \node[nvc node, label=right:{$ef$}] (7) at (22,0) {};
  
  \node (42) at (21,-1) {\large $H/\rho_3$};
  
  \draw (3) --(2);\draw (3) --(4); \draw (7) --(2);\draw (2) --(6);
  \draw (3) --(5);\draw (3) --(6); \draw (3) --(7);\draw (4) --(5);\draw (6) --(4);

\end{tikzpicture}
\caption{\label{fig:gralgo}
The figure shows a graph $H$ with a vertex cover $C=\set{u,v,w,x}$, depicted as squares.
There is a partition $\rho_C=\set{\set{u},\set{v,x},\set{w}}$ of the vertices in the vertex cover, and $\rho_1$, $\rho_2$, and $\rho_3$ are (edge-injective) extensions of $\rho_C$ such that the resulting graphs are isomorphic.
The color allocations $\coloralloc_{\rho_1}$ and $\coloralloc_{\rho_2}$ are equal and hence $\rho_1$ and $\rho_2$ are equivalent.
On the other hand, the color allocations $\coloralloc_{\rho_1}$ and $\coloralloc_{\rho_3}$ are not equal and hence $\rho_1$ and $\rho_3$ are not equivalent.
}
\end{figure}
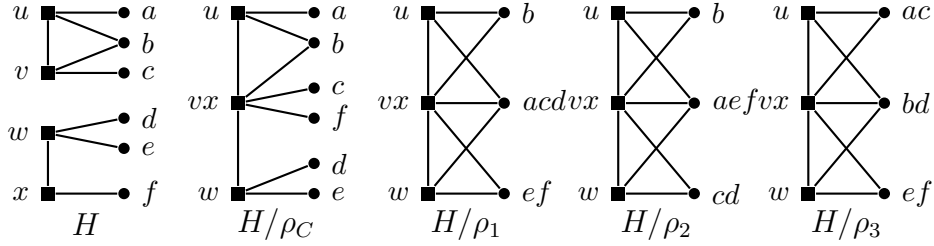
\begin{definition}
  Let~$H$ be a graph with a fixed vertex-cover~$C$.
  Let $\rho\in\partitions{H}$ be an edge-injective partition.

  \begin{enumerate}[leftmargin=2.5em,label=(\roman*)]
    \item
      The \emph{vertex-cover sub-partition of~$\rho$} is the set $\rho_C\subseteq\rho$ of all blocks~$B\in\rho$ that intersect~$C$.
      We write $\contract H {\rho_C}$ for the graph obtained from $H$
      by contracting each block $B \in \rho_C$ to a single vertex,
      and we speak of the resulting vertices as the \emph{$C$-image vertices of $\rho$}.
      We define $I := V(H)\setminus\bigcup\rho_C$; 
      these are the vertices of $H$ not covered by $\rho_C$.
  \end{enumerate}
  All three partitions~$\rho_1,\rho_2,\rho_3\in\partitions{H}$ in Figure~\ref{fig:gralgo} have the same vertex-cover sub-partition~$\rho_C$.
  The $C$-image vertices of these partitions are shown on the left sides of the graphs.
  To simplify the figure, we have chosen the partitions in such a way that,
  for any $\rho \in \{\rho_1,\rho_2,\rho_3\}$,
  each block in $\rho_C$ is in fact fully contained in $C$.
  That is, $\rho_C$ is a partition of $C$.
  In general however,~$\rho_C$ could also contract vertices from $V(H)\setminus C$ into $C$.

  \begin{enumerate}[resume*]
    \item
      Sets~$K\subseteq\rho_C$ are called \emph{colors}.
      These are the possible neighborhoods of vertices in~$I$.
      That is, each vertex~$v\in I$ has a \emph{color} $K(v)\subseteq\rho_C$, which is the set of blocks~$B\in\rho_C$ that~$v$ is adjacent to in~$H$.
      (Here, we say that $v$ is adjacent to a block $B$ if $v$ is adjacent to\linebreak some $w \in B$ in $H$.)
  \end{enumerate}
  In Figure~\ref{fig:gralgo}, the colors of vertices in $I$ can be recognized well in the graph labeled with~$\contract H{\rho_C}$:
  The color of a vertex in $I$ is its neighborhood in~$\contract H{\rho_C}$,
  and this neighborhood is fully contained among the $C$-image vertices of $\rho$.
  For example, the color of~$a$ is~$\set{\set{u}}$, and the color of~$b$ is~$\set{\set{u},\set{v,x}}$.
  \begin{enumerate}[resume*]
    \item
      Consider the set $\rho\setminus\rho_C$;
      this is a partition of the set $I$.
      When collecting for isomorphic quotient graphs,
      we will not need know $\rho\setminus\rho_C$ in its entirety.
      Rather, it is sufficient to know how many vertices of each color were identified by $\rho\setminus\rho_C$.
      To this end, we define the \emph{color allocation~$\coloralloc_\rho$ of~$\rho$} is the multiset
      \[\coloralloc_\rho := \msetc{\setc{K(v)}{v\in B} \ }{\ B\in\rho\setminus\rho_C}.\] 
      That is, for each block~$B\in\rho\setminus\rho_C$, the color allocation contains a copy of the set of all colors appearing in~$B$.
  \end{enumerate}
  In Figure~\ref{fig:gralgo}, the color allocations of~$\rho_1$, $\rho_2$, and $\rho_3$ satisfy
  \begin{align*}
    \coloralloc_{\rho_1}
    &=
    \Big\{\!\!\!\Big\{
    {\set{K(b)},\set{K(a),K(c),K(d)},\set{K(e),K(f)}}
    \Big\}\!\!\!\Big\}
    \,,\\
    \coloralloc_{\rho_2}
    &=
    \Big\{\!\!\!\Big\{
    {\set{K(b)},\set{K(a),K(e),K(f)},\set{K(d),K(c)}}
    \Big\}\!\!\!\Big\}
    \,,\\
    \coloralloc_{\rho_3}
    &=
    \Big\{\!\!\!\Big\{
    {\set{K(a),K(c)},\set{K(b),K(d)},\set{K(e),K(f)}}
    \Big\}\!\!\!\Big\}
    \,.
  \end{align*}
  Since~$K(e)=K(d)$ and $K(c)=K(f)$, the color allocations for~$\rho_1$ and~$\rho_2$ are actually equal.
  On the other hand, $\rho_1$ and $\rho_3$ have distinct color allocations,
  since $\rho_1$ and $\rho_3$ cannot be obtained from each other by swapping vertices of the same color for each other.

  If two partitions of $V(H)$ have the same vertex-cover sub-partition and the same color-allocation,
  then we will consider them to be equivalent for our purpose of collecting isomorphic quotient graphs.
  \begin{enumerate}[resume*]
    \item
      Two partitions~$\rho,\rho'\in\partitions{H}$ are called \emph{equivalent} if
      $\rho_C=\rho'_C$ and
      ${\coloralloc_\rho=\coloralloc_{\rho'}}$ hold.
      This defines an equivalence relation on~$\partitions{H}$; the equivalence class of~$\rho$ is uniquely determined by the pair~$(\rho_C,\coloralloc)$.
      We write $\rho_{\rho_C,\coloralloc}$ for an arbitrary fixed representative of this class.
  \end{enumerate}

In the following, we collect some properties on the interplay of the above definitions
that we will ultimately put to use to obtain a polynomial-time algorithm.

\end{definition}
\begin{lemma}
\label{lem:equivalence properties}
  Let $H$ be a graph without isolated vertices, let~$C$ be a vertex-cover of~$H$, and let~$k=\abs{V(H)}$.
  Let $\rho \in \partitions H$ be an edge-injective partition.
  \begin{enumerate}[leftmargin=2.5em,label=(\roman*)]
    \item\label{item: size bound}
      We have $\abs{\rho_C}\le \abs{C}$,
      that is, there are at most $|C|$ vertices that are $C$-image vertices of $\rho$.
      We also have $\abs{\bigcup\rho_C}\le \abs{C}^2$,
      that is, at most $\abs{C}^2$ vertices are contracted to obtain the $C$-image vertices of $\rho$.
    \item\label{item: betas}
    Recall the color allocation $\coloralloc_\rho$.
      Every element $\beta \in \coloralloc_\rho$ is a partition of the set~$\bigcup\beta$, 
      which is a set of colors.
      Writing~$\coloralloc_\rho(\beta)$ for the number of times~$\beta$ occurs as an element in the multiset~$\coloralloc_\rho$,
      we have $\coloralloc_\rho(\beta) \leq k$.
    \item\label{item: isomorphic}
      If $\rho' \in \partitions H$ is edge-injective and equivalent to~$\rho$, 
      as defined above, then the quotient graphs $\contract H\rho$ and $\contract H{\rho'}$ are isomorphic.
    \item\label{item: multinomial}
      Consider the equivalence class of partitions in $\partitions H$ corresponding to the pair $(\rho_C,\coloralloc_\rho)$.
      The size of this equivalence class is
    \begin{align*}
    N(\rho_C,\coloralloc_\rho)
    :=
    \paren*{
    	\prod_{K\subseteq\rho_C}
    	\binom{k_K}{\coloralloc_\rho(\beta_1),\dots,\coloralloc_\rho(\beta_\ell)}
    }
    \cdot
    \paren*{
    	\prod_{\beta}
    	\left(\coloralloc_\rho(\beta)!\right)^{\abs{\beta}-1}
    }.
    \end{align*}
    Here, $\beta_1,\dots,\beta_\ell$ is an enumeration of all sets~$\beta\in\coloralloc_\rho$ with~$K\in\beta$,
    and~$k_K$ is the number of vertices in~$I=V(H)\setminus\bigcup\rho_C$ that have color~$K$.
  \end{enumerate}
\end{lemma}
\begin{proof}
  \ref{item: size bound}
  The first inequality follows directly from the definition.
  For the second inequality,\linebreak let $D=\bigcup\rho_C\subseteq V(H)$ be the set of all vertices occurring in~$\rho_C$; 
  this is exactly the set of vertices that get merged into $C$ in $\contract H\rho$.
  For edge-injective partitions~$\rho$, the graph $H[D]$ contains at most $\binom{\abs{\rho_C}}{2}$ edges.
  Furthermore, since~$\abs{\rho_C}\le\abs{C}$ 
  and since every vertex in~$D\setminus C$ is adjacent to~$C$, 
  we have~$\abs{D}\le \abs{C}^2$.

  \ref{item: betas}
  Let $\beta\in\coloralloc_\rho$.
  By definition there is a block~$B\in\rho\setminus\rho_C$ with $\beta=\setc{K(v)}{v\in B}$.
  Recall that~$K(v)$ is the set of blocks of~$\rho_C$ (or equivalently, the set of $C$-image vertices of $\rho$) adjacent to~$v$ in the graph~$H$.
  Since~$\rho$ is edge-injective, any two distinct vertices~$v,v'\in B$ have~$K(v)\cap K(v')=\emptyset$.
  Thus $\beta$ is indeed a partition of the set~$\bigcup\beta=\bigcup_{v\in B} K(v)$.
  
  \ref{item: isomorphic}
  Let $\rho$ and $\rho'$ be equivalent.
  Then $\rho_C=\rho'_C$ holds, and so the quotients~$\contract H\rho$ and~${\contract H{\rho'}}$ are identical on the vertex set~$\rho_C$.
  Moreover, $\rho_C$ is a vertex cover for both quotient graphs, and so the quotient graphs are isomorphic if and only if, for each set~$K\subseteq\rho_C$, the two graphs have the same number of vertices~$v$ outside of~$\rho_C$ with $N(v)=K$.
  We prove that this is the case.
  
  Let~$K\subseteq\rho_C$ be a color.
  We want to prove that $\contract H\rho$ and~$\contract H{\rho'}$ have the same number of vertices outside of $\rho_C$
  whose neighborhood is exactly~$K$.
  Recall that~$\coloralloc_\rho(\beta)$ is equal to the number of blocks~$B\in\rho\setminus\rho_C$ with~${\setc{K(v)}{v\in B}} = \beta$.
  Note that~$B$ is a vertex in the quotient graph~$\contract H\rho$ with neighborhood~${\bigcup\beta\subseteq\rho_C}$.
  As a result, the value $\sum_{\beta\in\partitions{K}} \coloralloc_\rho(\beta)$ is equal to the number of blocks~${B\in\rho\setminus\rho_C}$ such that the vertex~$B$ has neighborhood exactly~$K$ in the quotient graph~$\contract H\rho$.
  Since these values are equal for $\rho$ and~$\rho'$, the claim follows, and the two quotient graphs are isomorphic.
  
  \ref{item: multinomial}
  To see this, consider a color $K \subseteq \rho_C$.
  By definition, there are exactly~$k_K$ vertices in~$I = V(H)\setminus\bigcup\rho_C$ that have color~$K$.
  Let us call this set~$V_K$.
  Furthermore let $\beta_1,\dots,\beta_\ell$ be an enumeration of all sets in~$\coloralloc_\rho$ that contain $K$.
  All partitions~$\rho'$ that are equivalent to~$\rho$ use~$\coloralloc_\rho(\beta_1)$ vertices from~$V_K$ to form vertices with neighborhood~$\beta_1$ in~$\contract H\rho$.
  There are $\binom{k_K}{\coloralloc_\rho(\beta_1)}$ choices for these vertices.
  Similarly, there are $\binom{k_K - \coloralloc_\rho(\beta_1)}{\coloralloc_\rho(\beta_2)}$ choices for selecting those vertices from~$V_K$ that form vertices with neighborhood~$\beta_2$ in~$\contract H\rho$.
  Continuing this process inductively yields the claimed multinomial coefficient for the number of possibilities to allocate the vertices of~$V_K$ to the~$\beta_i$.
  Since these choices are independent for independent colors~$K$, multiplying the multinomial coefficients yields the total number of possibilities to allocate the vertices of~$I$ to the sets~$\beta\in\coloralloc_\rho$.

  Once vertices of~$I$ have been allocated to the~$\beta$, we are still free to choose,
  for each fixed~$\beta$,
  which vertices to put in the same block of~$\rho'\setminus\rho_C$.
  We assembled sets $S_1,\dots,S_{\abs{\beta}}\subseteq I$, each of size $\coloralloc_\rho(\beta)$. Each $S_i$ contains vertices whose color is the~$i$-th color of $\beta$. We need to construct~$\coloralloc_\rho(\beta)$ blocks in~$\rho'\setminus\rho_C$, such that each block contains exactly one element from every $S_i$. For the first block, we have $\coloralloc_\rho(\beta)^{\abs{\beta}}$ choices. Once the first block is fixed, we have $(\coloralloc_\rho(\beta)-1)^{\abs{\beta}}$ choices for the second one, and so on. Thus the overall number of choices to enumerate the elements of~$\rho'\setminus\rho_C$ is $\left(\coloralloc_\rho(\beta)!\right)^{\abs{\beta}}$. Finally, since $\rho'\setminus\rho_C$ is a set and does not care about the order of the blocks, we divide by $\coloralloc_\rho(\beta)!$. This results in $\left(\coloralloc_\rho(\beta)!\right)^{\abs{\beta}-1}$ choices to construct the set of blocks for~$\beta$. These choices are independent for different~$\beta$, so their product yields the number of~$\rho$ that are consistent with~$\rho_C$ and~$\coloralloc_\rho$.
\end{proof}
\begin{lemma}[Collecting terms]
  Let $H$ and~$G$ be graphs such that~$H$ has no isolated vertices.
  Let $C$ be a fixed vertex-cover of~$H$.
  Then we have
  \begin{equation}\label{eq: EdgInj = Lincomb Sub again}
    \eiHom(H,G)
    =
    \sum_{(\rho_C,\coloralloc)}
    N(\rho_C,\coloralloc)
    \cdot
    \Emb\paren[\Big]{\contract H{\paren{\rho_{\rho_C,\coloralloc}}},G}\,,
  \end{equation}
  where the sum is over all equivalence classes~$(\rho_C,\coloralloc)$ of~$\partitions H$.
\end{lemma}
\begin{proof}
  We start with~\eqref{eq: EdgInj = Lincomb Sub} in Lemma~\ref{lem: eiHom graph motif} and collect terms for equivalent~$\rho\in\partitions H$.
  Since the collected terms lead to isomorphic quotient graphs by Lemma~\ref{lem:equivalence properties}\ref{item: isomorphic}, the numbers~$\Emb(\contract H\rho,G)$ are identical in each equivalence class, and they are equal to~$\Emb(\contract H{\rho_{\rho_C,\coloralloc_\rho}},G)$.
  The number of collected terms for each equivalence class is equal to~$N(\rho_C,\coloralloc_\rho)$ by
  Lemma~\ref{lem:equivalence properties}\ref{item: multinomial}.
  This implies~\eqref{eq: EdgInj = Lincomb Sub again}.
\end{proof}
\begin{proof}[Proof of Theorem~\ref{thm:polytime}]
  The following algorithm computes~$\eiHom(H,G)$ via~\eqref{eq: EdgInj = Lincomb Sub again}.

  \begin{algor}{A}{$\eiHom$}{Given~$H$ and~$G$, this algorithm computes $\eiHom(H,G)$.}
  \item[A1]
    Exhaustively apply the reduction rules from Lemma~\ref{lem: reduce isolated}.
    [Now~$H$ does not have isolated vertices or edges.]
  \item[A2]
    Compute a minimum vertex-cover~$C$ of~$H$ via exhaustive search.
  \item[A3]
    Iterate over all equivalence classes of $\partitions H$;
    this can be achieved by iterating over pairs~$(\rho_C,\coloralloc)$
    where $\rho_C$ is a valid vertex-cover sub-partition and $\coloralloc$ is a valid color allocation.
    For each equivalence class:
    \begin{itemize}[labelsep=1ex]
      \item Compute $N(\rho_C,\coloralloc)$.
      \item Query the oracle for the value
        $\Emb(\contract H{\rho_{\rho_C,\coloralloc}},G)$.
    \end{itemize}
  \item[A4] Output the sum on the right side of~\eqref{eq: EdgInj = Lincomb Sub again}.
  \end{algor}
  Clearly, the steps A1 and A4 take polynomial time.
  Step A2 takes polynomial time, since we assumed~$\abs{C}\le c\le O(1)$.
  Moreover, the number~$N(\rho_C,\coloralloc)$ and the graph~$\contract H{\rho_{\rho_C,\coloralloc}}$ can be computed in polynomial time.
  In place of the oracle for~$\Emb$, we use the known $n^{\vc H+O(1)}$ time algorithm~\cite{williams2013finding,curticapean2014complexity}.
  It remains to clarify how to iterate over the equivalence classes in~A3, why there are not too many of them, and how to compute a representative~$\rho_{\rho_C,\coloralloc}$.

  To iterate over all candidates for~$\rho_C$, note that $\rho_C$ is a partition of~$\bigcup\rho_C$ and satisfies the bound~$\abs{\bigcup\rho_C} \le c^2$ from Lemma~\ref{lem:equivalence properties}\ref{item: size bound}.
  Thus our algorithm exhaustively iterates over all sets~$D\subseteq V(H)$ with~$\abs{D}\le c^2$.
  For each~$D$, it exhaustively iterates over all~$\rho_C\in\partitions{D}$ with the property that every block of~$\rho_C$ intersects~$C$.
  Let~$k=\abs{V(H)}$.
  The number of candidates for~$\rho_C$ is bounded by~$\binom{k}{\le c^2}\cdot B_{c^2}$, where the first factor reflects the possible choices for~$D$ 
  and the second is the \emph{Bell number}, 
  the number of partitions of a $c^2$-element set.
  Thus there are only polynomially many choices for the candidates of~$\rho_C$.

  By Lemma~\ref{lem:equivalence properties}\ref{item: betas}, all color allocations~$\coloralloc$ are multisets that contain at most~$k$ duplicates of each member.
  Moreover, each member of~$\coloralloc$ is a partition~$\beta\in\partitions{K}$ of some set~$K\subseteq\rho_C$.
  Since there are at most~$2^c$ subsets~$K$, each with at most~$B_{c}$ partitions~$\beta$, the number of \emph{distinct} elements of~$\coloralloc$ is at most~$2^c B_c\le c'$ for some large enough constant~$c'$.
  Each of these elements can occur between~$0$ and~$k$ times in $\coloralloc$, so once a candidate for~$\rho_C$ has been fixed, the number of distinct candidates for~$\coloralloc$ is bounded by~$k^{c'}$, which is a polynomial.

  We conclude that A3 can be executed in a polynomial number of iterations.
  The candidates for~$\rho_C$ are partitions of size-$(\le c^2)$ subsets of~$V(H)$ and the candidates for~$\coloralloc$ are multisets of partitions of subsets of~$\rho_C$.
  Once a candidate~$(\rho_C,\coloralloc)$ has been fixed, it remains to argue that we can construct a representative~$\rho_{\rho_C,\coloralloc}$ if it exists.

  Given $(\rho_C,\coloralloc)$, we construct~$\rho\supseteq\rho_C$ as follows.
  For each~$K\subseteq\rho_C$ and each~$\beta\in\partitions{K}$, we do the following $\coloralloc(\beta)$ times:
  Pick arbitrary vertices~$v_1,\dots,v_{\abs{\beta}}\in V(H)\setminus\bigcup\rho_C$
  such\linebreak that~$\beta=\setc{K(v_i)}{1\le i\le \abs{\beta}}$, add the set~$\set{v_1,\dots,v_{\abs{\beta}}}$ to~$\rho$, and mark the vertices as used.
  If we run out of vertices in~$V(H)\setminus\bigcup\rho_C$ when doing so, or vertices are left unused at the end, then there is no~$\rho\in\partitions{H}$ with vertex-cover sub-partition~$\rho_C$ and color allocation~$\coloralloc$ (and thus $N(\rho_C,\coloralloc)=0$ holds for this candidate, since it did not represent an equivalence class).
  Otherwise we have constructed a partition~$\rho\in\partitions{H}$.
  If~$\rho$ is edge-injective, we output it.
  Otherwise, we again have~$N(\rho_C,\coloralloc)=0$ (in this case, the candidate~$\rho_C$ was not edge-injective to begin with).
\end{proof}

\subsection{Hardness for hereditary graph classes}

We now consider graph classes~$\classH$ that do \emph{not} have bounded weak 
vertex-cover number, and we prove that $\#\eiHom(\classH)$ is 
$\sw$-complete if $\classH$ has the additional property of being hereditary.
To this end, we first show that every graph class of unbounded weak vertex-cover number contains one of the six basic graph classes depicted in 
Figure~\ref{fig:IndSubs} as induced subgraphs.

For the purposes of this paper, we say that a graph is a windmill $W_k$ of size~$k$ if
it is a matching of size $k$ with an additional \emph{center vertex} adjacent to every other vertex.
Moreover, the \emph{subdivided star~$SS_k$} is a $k$-matching with a center 
vertex that is adjacent to exactly one vertex of each edge in the matching.
A \emph{triangle packing $k\cdot K_3$} is the disjoint union of $k$ triangles, a 
\emph{wedge} is a path~$P_2$ that consists of two edges,
and a \emph{wedge packing $k\cdot P_2$} is the disjoint union of $k$ wedges.

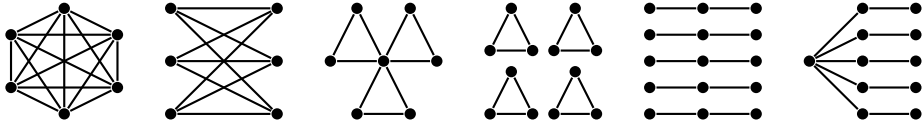
\begin{figure}[tp]
  \centering
    \begin{tikzpicture}[-,thick, scale = 0.7]
		
		\node[circle,inner sep=1.5pt,fill](1) at (0,0.5) {};
		\node[circle,inner sep=1.5pt,fill](2) at (0,1.5) {};
		\node[circle,inner sep=1.5pt,fill](3) at (1,2) {};
		\node[circle,inner sep=1.5pt,fill](4) at (1,0) {};
		\node[circle,inner sep=1.5pt,fill](1337) at (2,0.5) {};
		\node[circle,inner sep=1.5pt,fill](7331) at (2,1.5) {};
		\draw (1) -- (2); \draw (1) -- (3); \draw (1) -- (4);
		\draw (2) -- (3); \draw (2) -- (4); \draw (3) -- (4);
		\draw (1337) -- (7331); \draw (1337) -- (1);
		\draw (1337) -- (2); \draw (1337) -- (3); \draw (1337) -- (4);
		\draw (7331) -- (1); \draw (7331) -- (2); \draw (7331) -- (3);
		\draw (7331) -- (4);

		\node[circle,inner sep=1.5pt,fill](5) at (3,0) {};
		\node[circle,inner sep=1.5pt,fill](6) at (5,0) {};
		\node[circle,inner sep=1.5pt,fill](7) at (3,1) {};
		\node[circle,inner sep=1.5pt,fill](8) at (5,1) {};
		\node[circle,inner sep=1.5pt,fill](9) at (3,2) {};
		\node[circle,inner sep=1.5pt,fill](10) at (5,2) {};
		\draw (5) -- (6); \draw (5) -- (8); \draw (5) -- (10);
		\draw (7) -- (6); \draw (7) -- (8); \draw (7) -- (10);
		\draw (9) -- (6); \draw (9) -- (8); \draw (9) -- (10);
		
		\node[circle,inner sep=1.5pt,fill](11) at (7,1) {};
		\node[circle,inner sep=1.5pt,fill](12) at (6,1) {};
		\node[circle,inner sep=1.5pt,fill](13) at (6.5,2) {};
		\node[circle,inner sep=1.5pt,fill](14) at (6.5,0) {};
		\node[circle,inner sep=1.5pt,fill](15) at (7.5,0) {};
		\node[circle,inner sep=1.5pt,fill](16) at (7.5,2) {};
		\node[circle,inner sep=1.5pt,fill](17) at (8,1) {};
		\draw (11) -- (12); \draw (11) -- (13); \draw (11) -- (14);
		\draw (11) -- (15); \draw (11) -- (16); \draw (11) -- (17);
		\draw (12) -- (13); \draw (16) -- (17); \draw (14) -- (15);
		
		\node[circle,inner sep=1.5pt,fill](18) at (9,1.2) {};
		\node[circle,inner sep=1.5pt,fill](19) at (9.4,0.8) {};
		\node[circle,inner sep=1.5pt,fill](20) at (9,0) {};
		\node[circle,inner sep=1.5pt,fill](21) at (9.4,2) {};
		\node[circle,inner sep=1.5pt,fill](22) at (9.8,0) {};
		\node[circle,inner sep=1.5pt,fill](23) at (9.8,1.2) {};
		\node[circle,inner sep=1.5pt,fill](24) at (10.2,0) {};
		\node[circle,inner sep=1.5pt,fill](25) at (11,0) {};
		\node[circle,inner sep=1.5pt,fill](26) at (10.6,0.8) {};
		\node[circle,inner sep=1.5pt,fill](27) at (10.6,2) {};
		\node[circle,inner sep=1.5pt,fill](28) at (10.2,1.2) {};
		\node[circle,inner sep=1.5pt,fill](29) at (11,1.2) {};
		\draw (18) -- (21); \draw (18) -- (23); \draw (23) -- (21);
		\draw (19) -- (20); \draw (19) -- (22); \draw (20) -- (22);
		\draw (24) -- (26); \draw (24) -- (25); \draw (25) -- (26);
		\draw (27) -- (28); \draw (27) -- (29); \draw (28) -- (29);
		
		\node[circle,inner sep=1.5pt,fill](30) at (15,1) {};
		\node[circle,inner sep=1.5pt,fill](31) at (16,0) {};
		\node[circle,inner sep=1.5pt,fill](32) at (16,0.5) {};
		\node[circle,inner sep=1.5pt,fill](33) at (16,1) {};
		\node[circle,inner sep=1.5pt,fill](34) at (16,1.5) {};
		\node[circle,inner sep=1.5pt,fill](35) at (16,2) {};
		\node[circle,inner sep=1.5pt,fill](36) at (17,0) {};
		\node[circle,inner sep=1.5pt,fill](37) at (17,0.5) {};
		\node[circle,inner sep=1.5pt,fill](38) at (17,1) {};
		\node[circle,inner sep=1.5pt,fill](39) at (17,1.5) {};
		\node[circle,inner sep=1.5pt,fill](40) at (17,2) {};
		\draw (30) -- (31); \draw (30) -- (32); \draw (30) -- (33);
		\draw (30) -- (34); \draw (30) -- (35); \draw (31) -- (36);
		\draw (32) -- (37); \draw (33) -- (38); \draw (34) -- (39);
		\draw (35) -- (40);
		
		\node[circle,inner sep=1.5pt,fill](41) at (12,0) {};
		\node[circle,inner sep=1.5pt,fill](42) at (13,0) {};
		\node[circle,inner sep=1.5pt,fill](43) at (14,0) {};
		\node[circle,inner sep=1.5pt,fill](44) at (12,0.5) {};
		\node[circle,inner sep=1.5pt,fill](45) at (13,0.5) {};
		\node[circle,inner sep=1.5pt,fill](46) at (14,0.5) {};
		\node[circle,inner sep=1.5pt,fill](47) at (12,1) {};
		\node[circle,inner sep=1.5pt,fill](48) at (13,1) {};
		\node[circle,inner sep=1.5pt,fill](49) at (14,1) {};
		\node[circle,inner sep=1.5pt,fill](50) at (12,1.5) {};
		\node[circle,inner sep=1.5pt,fill](51) at (13,1.5) {};
		\node[circle,inner sep=1.5pt,fill](52) at (14,1.5) {};
		\node[circle,inner sep=1.5pt,fill](53) at (12,2) {};
		\node[circle,inner sep=1.5pt,fill](54) at (13,2) {};
		\node[circle,inner sep=1.5pt,fill](55) at (14,2) {};
		\draw (41) -- (42); \draw (42) -- (43);
		\draw (44) -- (45); \draw (45) -- (46);
		\draw (47) -- (48); \draw (48) -- (49);
		\draw (50) -- (51); \draw (51) -- (52);
		\draw (53) -- (54); \draw (54) -- (55);
    \end{tikzpicture}
  \caption{%
		\label{fig:IndSubs}%
    Example graphs from each of the six minimal graph classes that do not have 
    bounded weak vertex-cover number according to Lemma~\ref{lem:ramsey}:
    $K_6, K_{3,3}, W_3, 4\cdot K_3$, $5\cdot P_2$, and $SS_5$.
	}
\end{figure}
\begin{lemma} 
  \label{lem:ramsey}
We say that a class $\classH$ contains another class $\mathcal{C}$ as induced subgraphs if, 
for\linebreak every~$C \in \mathcal{C}$, there is some $H \in \classH$ such that $H$ contains $C$ as induced subgraph.
If~$\classH$ is a class of graphs with unbounded weak vertex-cover number,
then $\classH$ contains at least one of the following classes as induced subgraphs:
\begin{enumerate}[leftmargin=3em,label=(\roman*)]
    \item the class of all cliques,
    \item the class of all bicliques,
    \item the class of all subdivided stars, 
    \item the class of all windmills, 
    \item the class of all triangle packings, or
    \item the class of all wedge packings.  
  \end{enumerate}
\end{lemma}
\begin{proof}
	Let~$\classH$ be a graph class of unbounded weak vertex-cover number, and let
	$C\in \N$ be a constant such that all cliques, bicliques, subdivided stars,
	windmills, and triangle packings that occur as induced subgraphs in $\classH$
	have size at most~$C$.
	To prove the lemma, we argue that~$\classH$ contains induced~$P_2$-packings
	of unbounded size.
	To simplify the argument, we assume without loss of generality that $\classH$ is closed under taking
	induced subgraphs.
	Let $\classH'\subseteq\classH$ be the class of all graphs $H\in\classH$ that
	do not contain isolated edges.
	Since $\classH$ has unbounded weak vertex-cover number, the vertex-cover
	number of $\classH'$ is unbounded.
  Curticapean and Marx~\cite[Lemma~5.2]{curticapean2014complexity} prove that, 
  in this situation,
	$\classH'$ contains arbitrarily large cliques, induced bicliques, or induced
	matchings.
	By our assumption, the size of every clique and biclique is at most~$C$.
	Thus for every~$k$, there is a graph~$H_k\in\classH'$ such that~$H_k$ contains
	a size-$k$ matching $M_k\subseteq E(H_k)$ as an induced subgraph.
	
	For every~$k$ and every $e\in M_k$, we choose an arbitrary vertex $v_e\in
	V(H_k)\setminus V(M_k)$ that is adjacent in $H_k$ to one or both endpoints
	of~$e$.
	These vertices exist since~$e$ is not an isolated edge in $H_k$ and $M_k$ is
	an induced matching in~$H_k$.
	Let $N_k=\setc{v_e}{e\in M_k}$ and note that $v_e$ and $v_{e'}$ may coincide for distinct $e,e'\in M_k$.
	Let $A_v\subseteq M_k$ be the set of all $e\in M_k$ such that exactly one
	endpoint of $e$ is adjacent to~$v$, and let $B_v\subseteq M_k$ be the set of
	all $e\in M_k$ that have both their endpoints adjacent to~$v$.
	If $A_v\neq\emptyset$, the graph $H_k[V(A_v)\cup \set{v}]$ is an induced
	subdivided star of size $\abs{A_v}$.
	Similarly, if $B_v\neq\emptyset$, the graph $H_k[V(B_v)\cup \set{v}]$ is an
	induced windmill of size~$\abs{B_v}$.
	By our assumption on $\classH$, the sets $A_v$ and $B_v$ have size at most $C$
	for all $v\in N_k$ and all $k$.
	
	Before we argue that arbitrarily large $P_2$-packings exist as induced
	subgraphs, we apply Ramsey's theorem to obtain more structure.
	Since $M_k=\bigcup_{v\in N_k} (A_v \cup B_v)$ holds, the set $N_k$ has size at
	least $k/(2C)$.
	Thus the graph class $\setc{H_k[N_k]}{k\in\N}$ is infinite, and since we
	assumed that every clique in $\classH$ has bounded size, Ramsey's theorem
	guarantees the existence of independent sets $I_k\subseteq N_k$ whose sizes
	are unbounded as $k$ grows.
	
	Finally, we construct a large induced packing of triangles and paths of
	length~$2$ using the following greedy procedure: For each $v\in I_k$ with
	$B_v\neq\emptyset$, we select an arbitrary edge $e\in B_v$ to contribute one
	triangle with~$v$, and we remove $A_v\cup B_v$ from $M_k$.
	Similarly, each $v\in I_k$ with $B_v=\emptyset$ and $A_v\neq\emptyset$
	contributes one copy of~$P_2$ and we delete $A_v\cup B_v$ from $M_k$.
	By definition of~$A_v$ and~$B_v$, the vertex~$v$ is not adjacent to any edge
	in $M_k\setminus(A_v \cup B_v)$; moreover, it is not adjacent to any vertex in
	$I_k\setminus\set{v}$.
	Hence the constructed disjoint union of triangles and paths of length~$2$ is
	indeed an induced subgraph of~$H_k$.
	Since all sets $A_v$ and $B_v$ are of size at most $C$, the number of
	components we constructed is at least $\abs{I_k}/(2C)$, which is unbounded as
	$k$ grows.
	By our assumption on~$\classH$, at most~$C$ of the components are triangles,
	and at least $\abs{I_k}/(2C) - C$ components are copies of~$P_2$.
	We conclude that $\classH$ contains arbitrarily large induced $P_2$-packings.
\end{proof}

Since hereditary classes~$\classH$ are closed under induced subgraphs, 
Lemma~\ref{lem:ramsey} guarantees that any hereditary class $\classH$ with
unbounded weak vertex-cover number contains at least one of the six graph
families defined above as an actual subset of~$\classH$. We need to prove hardness for
each of these six families: 
\begin{lemma}
	\label{lem:spec_cases}
	If $\classH$ is the class of all cliques, the class of all bicliques, the class of all subdivided stars, the
	class of all windmills, the class of all triangle packings, or the class of all wedge packings, then $\#\eiHom(\classH)$ is $\sw$-hard.
\end{lemma}

As we show in the following, every edge-injective homomorphism from a clique, a biclique, or a windmill
into a graph is, in fact, an embedding.
For these three graph families, counting edge-injective homomorphisms is thus equivalent to the
corresponding subgraph counting problem.
Since the families have unbounded vertex-cover number, the main theorem
of Curticapean and Marx~\cite{curticapean2014complexity} implies that the 
subgraph counting
problem for these three graph families is $\sw$-hard.

\begin{lemma}\label{lem:cliques}%
	Let $G$ be a simple graph and let $H$ be a clique, biclique, or windmill.
	Then every edge-injective homomorphism~$\varphi$ from $H$ to $G$ is an embedding.
\end{lemma}
\begin{proof}
	Let $\varphi$ be an edge-injective homomorphism from~$H$ to~$G$.
	For two distinct vertices $x$ and~$y$ of~$H$, we
	have~$\varphi(x)\ne\varphi(y)$ if~$x$ and~$y$ are joined by an edge of~$H$ or
	if they have a common neighbor~$z$ in~$H$.
	If $H$ is a clique, then all $x,y\in V(H)$ with $x\ne y$ are adjacent in~$H$.
	If $H$ is a biclique or a windmill, then any two distinct vertices~$x$ and~$y$
	are either adjacent or have a common neighbor.
	In either case, $\varphi$ is an embedding.
\end{proof}

\begin{proposition}\label{lem:clique hardness}
The problem $\#\eiHom(\classH)$ is $\sw$-hard if $\classH$ is the class of all cliques, the class of all bicliques, or the class of all windmills.
\end{proposition}
\begin{proof}
	By Lemma~\ref{lem:cliques}, the problem $\#\eiHom(\classH)$ is equivalent to $\#\Emb(\classH)$.
  Thus, since~$\classH$ has unbounded vertex-cover number, the problem is $\sw$-hard 
  by the dichotomy for counting embeddings~\cite[Theorem~1.1]{curticapean2014complexity}.
\end{proof}

For the class of triangle packings, we devise a straightforward reduction from 
the problem of counting $k$-matchings in bipartite graphs.
Essentially, we add an additional vertex that is adjacent to all other vertices, 
and since the original graph was bipartite, every triangle of the triangle 
packing must use the new vertex.
\begin{proposition}\label{lem:triangleswindmills}
The problem $\#\eiHom(\classH)$	is $\#\W$-hard if $\classH$ is the class of all triangle packings.
\end{proposition}
\begin{proof}
	We reduce from the problem of counting $k$-matchings in a bipartite graph.
	Given a simple bipartite graph $G=(U \cup V, E)$ and a number~$k$, we construct a graph~$G'$
	from~$G$ by adding a single apex~$a$, that is, a new vertex $a$ and the edges~$\{a,v\}$ for all $v \in U \cup V$.
	Since~$G$ is bipartite, every triangle in~$G'$ consists of $a$ and some
	vertices $u \in U$ and $v \in V$.
	We denote such a triangle by $a_{v,u}$.
	The output of the reduction is the instance $(H,G')$, where~$H$ is the graph
	$k\cdot K_3$.
	In either case, the edges of~$H$ partition into $k$ triangles; let us fix this
	partition and an arbitrary order on the triangles.
	
	Since $G'$ is a simple graph, the homomorphic image of a triangle is a
	triangle, and exactly one vertex of each of the $k$ triangles is mapped to
	$a$.
	Let $\varphi$ be an edge-injective homomorphism from~$k \cdot K_3$ to~$G'$.
	Let the image of the $i$-th triangle be $\set{a,u_i,v_i}$.
	Since~$\varphi$ is edge-injective and $a$ is an apex, all $u_i$ and $v_i$ are
	mutually distinct.
	Moreover, the edges $\set{u_i,v_i}$ form a matching~$M_\varphi$ of size~$k$
	in~$G$.
	
	Finally, we claim that the number~$m_k$ of $k$-matchings of $G$ can be derived
	from the number of edge-injective homomorphisms.
	The homomorphism $\varphi$ can first arbitrarily choose one vertex of each
	triangle to be mapped to~$a$, which gives it $3^k$ choices.
	For the remaining matching of size~$k$, the homomorphism~$\varphi$ must map it
	to a matching in~$G$ by edge-injectivity.
	Thus it can choose one of the~$2^k \cdot k!$ automorphisms of the
	$k$-matching.
	Overall, we get $6^k\cdot k!$ edge-injective homomorphisms~$\varphi$ with $M_\varphi=M$.
	Thus $\abs{\eiHom(k\cdot K_3,G')}/ \paren{6^k \cdot k!}$.
	The reduction takes polynomial time, increases the parameter from $k$ to
	$\abs{H}=O(k)$, and requires only one query to the oracle.
\end{proof}

For subdivided stars, we reduce from counting $k$-matchings in well-structured bipartite 
graphs to (essentially) the graph $G^0$ that was constructed in the proof of 
Theorem~\ref{thm: wedge packing}.
The analysis is much simpler for subdivided stars since we can guarantee easily 
that the center vertex is mapped to the newly added vertex~$0$.

\begin{proposition}\label{lem:subdividedstars}
The problem $\#\eiHom(\classH)$ is $\#\W$-hard if $\classH$ is the class of all subdivided stars.
\end{proposition}
\begin{proof}
	We reduce from the problem of counting $k$-matchings in bipartite
	graphs where the degree of right-side vertices is at most two and any two
	distinct left-side vertices have at most one common neighbor.
	Let $(G,k)$ be an instance of this problem, and let $L(G)$ and $R(G)$ be the
	left and right vertex sets, respectively.
	Starting from $G$, we construct a new graph $G'$ (see Figure~\ref{fig:gprime}):
	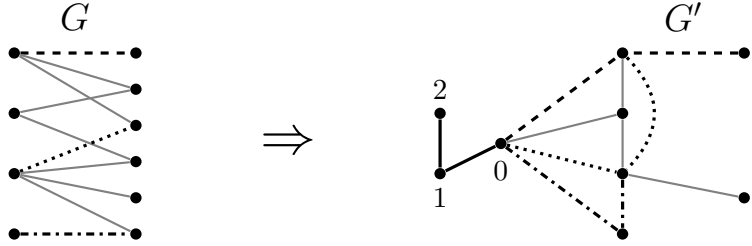
\begin{figure}
	\centering
	\begin{tikzpicture}[-, scale = 0.8, yscale=0.6]
  \node[circle,inner sep=1.5pt,fill](1) at (-2,1) {};
  \node[circle,inner sep=1.5pt,fill](2) at (-2,-0.666) {};
  \node[circle,inner sep=1.5pt,fill](3) at (-2,-2.333) {};
  \node[circle,inner sep=1.5pt,fill](4) at (-2,-4) {};
  \node[circle,inner sep=1.5pt,fill](5) at (0,1) {};
  \node[circle,inner sep=1.5pt,fill](6) at (0,0) {};
  \node[circle,inner sep=1.5pt,fill](7) at (0,-1) {};
  \node[circle,inner sep=1.5pt,fill](8) at (0,-2) {};
  \node[circle,inner sep=1.5pt,fill](9) at (0,-3) {};
  \node[circle,inner sep=1.5pt,fill](10) at (0,-4) {};
  
  \draw[very thick, dashed] (1)--(5);
  \draw[thick, gray] (1)--(6);
  \draw[thick, gray] (1)--(7);
  \draw[thick, gray] (2)--(6);
  \draw[thick, gray] (2)--(8);
  \draw[very thick, dotted] (3)--(7);
  \draw[thick, gray] (3)--(8);
  \draw[thick, gray] (3)--(9);
  \draw[thick, gray] (3)--(10);
  \draw[very thick, dashdotted] (4) -- (10);
  
  \node[circle,inner sep=1.5pt,fill](11) at (8,1) {};
  \node[circle,inner sep=1.5pt,fill](12) at (8,-0.666) {};
  \node[circle,inner sep=1.5pt,fill](13) at (8,-2.333) {};
  \node[circle,inner sep=1.5pt,fill](14) at (8,-4) {};
  \node[circle,inner sep=1.5pt,fill](15) at (10,1) {};
  \node[circle,inner sep=1.5pt,fill](16) at (10,-3) {};
  
  \node[circle,inner sep=1.5pt,fill](17) at (6,-1.5) {};
  
  \node[circle,inner sep=1.5pt,fill](18) at (5,-0.666) {};
  \node[circle,inner sep=1.5pt,fill](19) at (5,-2.333) {};

  \draw[very thick] (18) -- (19);
  \draw[very thick] (19) --(17);
  \draw[thick, gray] (17) --(12);
  \draw[very thick, dashed] (17) -- (11);
  \draw[very thick, dashed] (15) -- (11);
  \draw[very thick, dotted] (17) -- (13);
  \draw[very thick, dotted] (11) to [bend left] (13);
  \draw[very thick, dashdotted] (17) -- (14);
  \draw[very thick, dashdotted] (13) -- (14);
  \draw[thick, gray] (12) -- (13);
  \draw[thick, gray] (16) -- (13);
  \draw[thick, gray] (11) --(12);
  
  \node (21) at (2.5, -1.5) {{\huge $\Rightarrow$}};
  \node (22) at (-1, 2) {{\Large $G$}};
  \node (23) at (9, 2) {{\Large $G'$}};
  
  \node (24) at (6, -2.25) {{$0$}};
  \node (25) at (5, -3) {{$1$}};
  \node (26) at (5, 0) {{$2$}};

\end{tikzpicture}
	\caption{\label{fig:gprime} The construction of $G'$ in the proof of Proposition~\ref{lem:subdividedstars}, including the image of a homomorphism from the subdivided star $SS_4$ such that vertex $2$ is contained in the image: One ray of the subdivided star is mapped to the edges
$\{0,1\},\{1,2\}$ and functions as an anchor. The other three rays (dashed, dotted, and dash-dotted) correspond to a $3$-matching in $G$.}
	\end{figure}
	\begin{enumerate}
		\item
		Insert a vertex $0$ that is adjacent to all vertices of $L(G)$.
		\item
		For every vertex $v\in R(G)$ with $\deg(v)=2$, remove $v$ from the graph
		and add the set $N(v)$ as an edge to $G'$.
		\item
		Add two special vertices $1$ and $2$, as well as the edges $01$ and $12$.
	\end{enumerate}
	Since $G$ is a simple graph and any two distinct vertices $u,v\in L(G)$ have
	at most one common neighbor in~$G$, the graph $G'$ is again a simple graph.
	
	Let $m_k$ be the number of $k$-matchings in $G$, let $H$ be the subdivided star
	of size $k+1$, and let $s(G)=\abs{\eiHom(H,G)}$.
	We claim that 
  \[
  (k+1)!\cdot m_k=s(G') - s(G'-\set{2}).
  \]
	Clearly $s(G')-s(G'-\set{2})$ is exactly the number of edge-injective
	homomorphisms~$\varphi$ from $H$ to~$G'$ such that $2$ is in the image
	of~$\varphi$.
	The claim is that there is a correspondence between such homomorphisms and the
	$k$-matchings in~$G$.
	
	Let $\varphi$ be an edge-injective homomorphism with~$2=\varphi(z)$ 
  for some~${z\in V(H)}$.
	Then $z$ must be a degree-$1$ vertex in~$H$ since~$2$ has exactly one neighbor
	in~$G'$ and $H$ does not contain isolated vertices.
	Let~$y$ be the neighbor of~$z$ and let~$x$ be the center vertex of the
	subdivided star.
	Then~$\varphi(y)=1$ and $\varphi(x)=0$ holds.
	Next, let $y_1,\dots,y_k$ be the other degree-$2$ vertices of $H$, and let
	$z_1,\dots,z_k$ be the corresponding degree-$1$ vertices.
	Since $\varphi$ is an edge-injective homomorphism, the vertices
	$a_i:=\varphi(y_i)$ are mutually distinct and satisfy $a_i\in L(G)$.
	
  Note that ${\varphi(z_i)\not\in\set{0,1,2}}$.
	We define the matching $M_\varphi=\set{a_1b_1,\dots,a_kb_k}$ as follows.
	If~$\varphi(z_i)\in R(G)$, then it is a degree-$1$ vertex of $R(G)$, and we
	set $b_i=\varphi(z_i)$.
	Otherwise we have $\varphi(z_i)\in L(G)$ and the edge $\varphi(y_i z_i)$
	exists in~$G'$; we let $b_i$ be the unique vertex with~$N_G(b_i)=\set{a_i,\varphi(z_i)}$ that caused this edge to be added to~$G'$ in
	the construction.
	The $b_i$ are mutually distinct due to the edge-injectivity.
	Hence~$M_\varphi$ is indeed a matching.
	
	For every $k$-matching~$M$ of $G$, there are $(k+1)!$ distinct edge-injective homomorphisms $\varphi$ with
	$M=M_\varphi$ since~$\varphi$ can choose an arbitrary order for the $k+1$ rays
	of the subdivided star.
	This proves the claim.

	Overall, the reduction runs in polynomial time and queries the oracle exactly
	two times with parameter $\abs{H}=O(k)$.
\end{proof}

Now we have established hardness for all of the minimal cases:

\begin{proof}[Proof of Lemma~\ref{lem:spec_cases}]
The $\sw$-hardness of counting wedge packings follows from Theorem~\ref{thm: wedge packing}. The remaining five cases were treated in this section.
\end{proof}

As a consequence, we obtain Theorem~\ref{thm:dichotomy}.
\begin{proof}[Proof of Theorem~\ref{thm:dichotomy}]
For classes with bounded weak vertex-cover number, an algorithm with polynomial running time is given in Theorem~\ref{thm:polytime}. For every hereditary class of unbounded weak vertex-cover number, Lemma~\ref{lem:ramsey} and Lemma~\ref{lem:spec_cases} together give $\sw$-hardness.
\end{proof}

\subsection{Hardness for cycles and paths}

The dichotomy theorem for $\#\eiHom(\ch)$ with hereditary graph classes $\classH$ leaves open some non-hereditary graph classes of interest. 
In this final part of the paper, we investigate $\#\eiHom(\ch)$ for the class of cycles and that of paths
and prove $\sw$-hardness for these problems.

\begin{theorem} 
\label{thm: edge-disj-path}
	For the classes $\mathcal{C}$ and $\mathcal{P}$ of all cycles and paths, 
  respectively, the problems $\#\eiHom(\mathcal{C})$ and 
  $\#\eiHom(\mathcal{P})$ are $\sw$-hard. 
\end{theorem}

We point out that the problems of counting edge-injective homomorphisms from $C_k$ and~$P_k$ are equivalent to the problems of counting edge-disjoint $k$-cycles and edge-disjoint $k$-paths, respectively. In particular, for any graph $G$, we have that
$\#\eiHom(C_k, G)$ equals $2k$ times the number of edge-disjoint $k$-cycles in $G$, 
while $\#\eiHom(P_k,G)$ equals twice the number of edge-disjoint $k$-paths in $G$.

We will first show that $\#\eiHom(\mathcal{C})$ is $\#\W$-hard. To this end, we consider the edge-weighted version of counting edge-injective homomorphisms in an intermediate step. Let $H$ and $G$ be graphs and let $w:E(G) \rightarrow \mathbb{N}$ a weight-function. The number of edge-weighted edge-injective homomorphisms is defined as follows
\begin{equation}
\#\eiHom(H,G,w) := \sum_{\varphi \in \eiHom(H,G)} ~\prod_{e \in E(H)} w(\varphi(e)) \,.
\end{equation}
Then the problem $\#\mathsf{W}\eiHom(\mathcal{H})$ asks, given a graph $H \in \mathcal{H}$ and an arbitrary graph~$G$ with weight-function~$w$, to compute this quantity. The parameter is $|V(H)|+ \max\{w(e)~|~e \in E(G)\}$. That is, the edge-weights of $G$ must be bounded by some function in the size of the pattern graph $H$.
\begin{lemma}
	\label{lem:wecyc}
$\#\mathsf{W}\eiHom(\mathcal{C})$ is $\sw$-hard.
\end{lemma}
\begin{proof}
First we observe that, for all $k \in \mathbb{N}$, we have
\begin{equation}
\label{eq: weighted-einj-cycle}
\#\eiHom(C_k,G,w) = 2k \cdot \sum_{c \in \EC_k(G)}\prod_{e \in c}w(e)
\end{equation}
where $\EC_k(G)$ denotes the set of all edge-disjoint cycles of length $k$ in $G$.
We show $\#\W$-hardness by constructing an fpt Turing reduction from the $\sw$-hard problem $\#\Sub(\mathcal{C})$ of counting simple cycles of length $k$, see \cite{FlumGrohe,curticapean2014complexity} for hardness proofs of this problem.
On input a graph $G$ and $k \in \N$, our reduction proceeds as follows: If $k < 3$, then return $0$. 
Else consider the graph $G_x$ obtained from~$G$ by substituting each node $v\in V$, 
of some degree $d_v$, 
by the gadget graph $H_v$ constructed as follows: 
We start with a path of length $3$ whose intermediate edge has weight $x$.\footnote{Here, $x$ is an indeterminate, so the quantity \eqref{eq: weighted-einj-cycle} is a polynomial in $x$.} 
Next we add vertices $s^1_v,\ldots,s^{d_v}_v$ and connect each of them with an edge to one endpoint of the path. After that we add vertices $t^1_v,\ldots,t^{d_v}_v$ and connect each of them with an edge to the other endpoint.

\begin{figure}[tp]
  \centering
\begin{tikzpicture}[-,thick,scale=0.6,main node/.style={circle,inner sep=1.5pt,fill}, rotate =-90]
  \node[main node] (0) at (1.5,3) {};
  \node[main node,label={$s^1_v$}] (1) at (-0.5,4.5) {};
  \node[main node,label={$s^2_v$}] (2) at (1.5,4.5) {};
  \node[main node,label=below:{$s^{d_v}_v$}] (3) at (3.5,4.5) {};
  \node (4) at (2.25,4.5) {$\vdots$};
  \node[main node,label={$t^1_v$}] (5) at (-0.5,-2.5) {};
  \node[main node,label={$t^2_v$}] (6) at (1.5,-2.5) {};
  \node[main node,label=below:{$t^{d_v}_v$}] (7) at (3.5,-2.5) {};
  \node[main node] (10) at (1.5,-0.5) {};
  \node[main node] (11) at (1.5,0.5) {};
  \node[main node] (12) at (1.5,2) {};
  \node (8) at (2.25,-2.5) {$\vdots$};
  \draw (0) -- (1) ;
  \draw (0) -- (2);
  \draw (0) -- (3);
  \draw (5) -- (10);
  \draw (6) -- (10);
  \draw (7) -- (10);
  \draw (11) -- (10);
  \path (12) edge node [above] {$x$} (11);
  \draw (12) -- (0);

\end{tikzpicture}
  \caption{%
  \label{fig:HvGi}%
  Gadget $H_v$ for a vertex $v$ of degree $d_v$ as used in the proof of Lemma~\ref{lem:wecyc}.
  }
\end{figure}
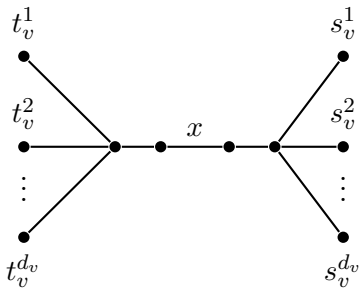
Furthermore add edges $\{s^i_v,t^j_u\}$ and $\{ s^j_u,t^i_v \}$ for every edge $\{u,v\}$ that is the $i$-th edge of $v$ and the $j$-th edge of $u$. The resulting graph is shown in Figure~\ref{fig:HvGi}. Consider $G_x$ as a weighted graph were every edge has weight $1$ except for the edges labeled with $x$ as above. Now, querying the oracle for $\#\mathsf{W}\eiHom(\mathcal{C})$ with input~$C_{6k}$ and $G_x$, and dividing by $12k$ yields a polynomial $p \in \Z[x]$.
\begin{claim}
The degree of $p$ is bounded by $k$. Furthermore the coefficient of $x^k$ equals twice the number of simple $k$-cycles in $G$.
\end{claim}
\begin{claimproof}
  The shortest edge-disjoint path between any pair of two different edges with
  weight $x$ is at least $5$ (excluding the two edges with weight $x$ from the length.) As
  we search for edge-disjoint cycles of length $6k$, the weight $x$ can occur at most
  $\frac{6k}{5+1} = k$ times in one cycle. Therefore the degree of $p$ is
  bounded by $k$. Furthermore the distance is equal to $5$ if and only if the
  two edges belong to gadgets~$H_v$ and~$H_u$ such that $\{v,u\} \in E(G)$. In
  particular, this path either leaves $H_v$ through $s^i_v$ and enters $H_u$
  through $t^j_u$ for some $i$ and $j$ or it leaves $H_v$ through $t^{i'}_v$ and
  enters~$H_u$ through $s^{j'}_u$ for some $i'$ and $j'$. 

  Now consider an
  edge-disjoint cycle $c$ of length $6k$ that includes $k$ edges with weight
  $x$. It follows that $c=(e_1,P_1,\ldots,e_k,P_k,e_1)$ where each $e_i$ has
  weight $x$ and each $P_i$ is a path consisting of $5$ edges with weight $1$.

  Next let $H_{v_i}$ be the gadget containing $e_i$ and consider $H_{v_1}$. It
  holds that $P_1$ either passes through $s_{v_1}^i$ and $t_{v_2}^j$ for some
  $i$ and $j$ or through $t_{v_1}^{i'}$ and $s_{v_2}^{j'}$ for some $i'$ and
  $j'$. However, if we fix one of these two options, only one possibility remains
  for all other $P_2,\ldots,P_k$ as we cannot turn around in a gadget if we
  consider edge-disjoint cycles. Therefore there are exactly two edge-disjoint
  cycles $c_1=(e_1,P_1,\ldots,e_k,P_k,e_1)$ and
  $c_2=(e_1,P'_1,\ldots,e_k,P'_k,e_1)$ that correspond to the cycle
  $c=(v_1,\ldots,v_k,v_1)$ in $G$ and vice versa. Furthermore $c$ is simple as
  $c_1,c_2$ are edge-disjoint, that is, the $e_i$'s and therefore the $v_i$'s are pairwise different.
\end{claimproof}
To conclude the proof of Lemma~\ref{lem:wecyc}, we compute the coefficient of $x^k$ in the degree-$k$ polynomial $p$ by means of polynomial interpolation from the evaluations $p(0), \ldots , p(k)$. These evaluations are obtained by oracle calls to $\#\mathsf{W}\eiHom(\mathcal{C})$ with input $C_{6k}$ and $G_b$ for $b = 0,\ldots,k$ (and dividing by $12 k$). As the edge-weights of every graph $G_b$ are bounded by $k$, the overall parameter $|V(C_{6k})| + \max\{w(e) ~|~e \in G_b\}$ is bounded by $7k$, proving that this reduction is indeed an fpt Turing-reduction.
\end{proof}
We show hardness of the unweighted version by reduction from the weighted version;
this requires us to devise a strategy for removing weights.
\begin{lemma}
	\label{lem:ecyc}
There is an fpt Turing reduction from $\#\mathsf{W}\eiHom(\mathcal{C})$ to $\#\eiHom(\mathcal{C})$.
\end{lemma}
\begin{proof}
The input for the reduction is a number $k \in \N$ and an edge-weighted graph~$G$ whose edge weights are bounded by $k$. We assume $k \geq 4$, as we can otherwise solve the problem in polynomial time by brute-force. The following gadgets will be used in the reduction:
\begin{itemize}
\item $G_1$ is simply one undirected edge $e_1:=\{a_1,b_1\}$
\item $G_{i+1}$ is constructed from $G_i$ as follows: We add vertices $a_{i+1}$ and $b_{i+1}$ and edges $\{a_{i+1},a_i\}$ and $\{b_{i+1},b_i\}$. Furthermore we add a path of length $2i+1$ between $a_{i+1}$ and $b_{i+1}$ and denote the $i+1$-th edge of this path $e_{i+1}$. $G_{i+1}$ is depicted in Figure~\ref{fig:HvGi2}.
\end{itemize}
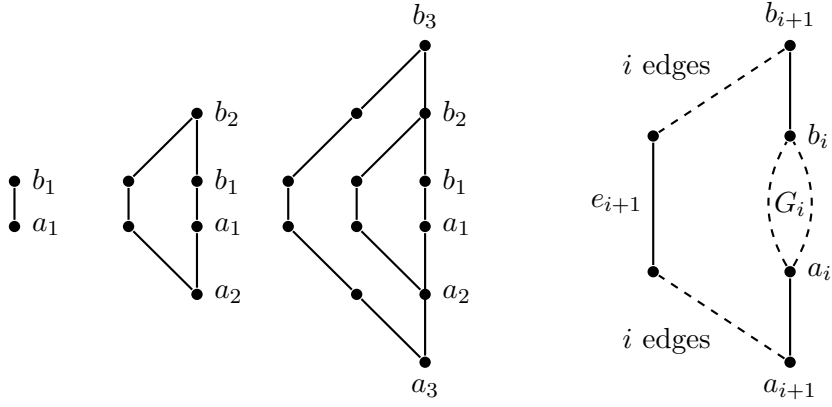
\begin{figure}[tp]
  \centering
\begin{tikzpicture}[-,thick,scale=0.6,main node/.style={circle,inner sep=1.5pt,fill}]

   \node[main node,label=right:{$a_1$}] (0110) at (-7,0.5) {};
   \node[main node,label=right:{$b_1$}] (0112) at (-7,1.5) {};
   \draw (0110) -- (0112);
    
   \node[main node,label=right:{$a_1$}] (0110) at (-3,0.5) {};
   \node[main node,label=right:{$b_1$}] (0112) at (-3,1.5) {};
   \node[main node] (0119) at (-4.5,0.5) {};
   \node[main node] (0120) at (-4.5,1.5) {};
   \node[main node,label=right:{$a_2$}] (0121) at (-3,-1) {};
   \node[main node,label=right:{$b_2$}] (0122) at (-3,3) {};

   \draw (0110) -- (0112);
   \draw (0121) -- (0119);
   \draw (0122) -- (0120);
   \draw (0119) -- (0120);    
   \draw (0110) -- (0121);
   \draw (0112) -- (0122);

   \node[main node,label=right:{$a_1$}] (0110) at (2,0.5) {};
   \node[main node,label=right:{$b_1$}] (0112) at (2,1.5) {};
   \node[main node,label=above:{$b_{3}$}] (0113) at (2,4.5) {};
   \node[main node,label=below:{$a_{3}$}] (0114) at (2,-2.5) {};
   \node[main node] (0115) at (-1,0.5) {};
   \node[main node] (0116) at (-1,1.5) {};
   \node[main node] (0117) at (0.5,-1) {};
   \node[main node] (0118) at (0.5,3) {};
   \node[main node] (0119) at (0.5,0.5) {};
   \node[main node] (0120) at (0.5,1.5) {};
   \node[main node,label=right:{$a_2$}] (0121) at (2,-1) {};
   \node[main node,label=right:{$b_2$}] (0122) at (2,3) {};
   
   \path (0114) edge node {} (0110);
   \path (0112) edge node {} (0113);
   \draw (0110) -- (0112);
   \path (0114) edge node {} (0117);
   \path (0113) edge node {} (0118);
   \path (0115) edge node {} (0116);
   \draw (0118) -- (0116);
   \draw (0117) -- (0115);
   \draw (0121) -- (0119);
   \draw (0122) -- (0120);
   \draw (0119) -- (0120);

   \node[main node,label=right:{$a_i$}] (110) at (10,-0.5) {};
   \node (111) at (10,1) {$G_i$};
   \node[main node,label=right:{$b_i$}] (112) at (10,2.5) {};
   \node[main node,label=above:{$b_{i+1}$}] (113) at (10,4.5) {};
   \node[main node,label=below:{$a_{i+1}$}] (114) at (10,-2.5) {};
   \node[main node] (115) at (7,-0.5) {};
   \node[main node] (116) at (7,2.5) {};
   
   \path (114) edge node {} (110);
   \path (112) edge node {} (113);
   \path[bend right, dashed] (110) edge (112);
   \path[bend left, dashed] (110) edge (112);
   \path[dashed] (114) edge node [below left] {$i$ edges} (115);
   \path[dashed] (113) edge node [above left] {$i$ edges}(116);
   \path (115) edge node [left] {$e_{i+1}$} (116);

\end{tikzpicture}
  \caption{%
  \label{fig:HvGi2}%
  Graphs $G_1$, $G_2$ and $G_3$, as well as the inductive construction of graph $G_{i+1}$ as used in the proof of Lemma~\ref{lem:ecyc}.
  }
\end{figure}
It is easy to see that $|V(G_k)| \leq O(k^2)$.
\begin{claim}
  \label{cl:cyc2}
  For every $k \geq 1$, there are exactly $k$ edge-disjoint walks from $a_k$ to $b_k$ in~$G_k$, each of length $2k-1$. Furthermore $e_j$ is contained in exactly one of this walks for every $j \in [k]$.
\end{claim}
\begin{claimproof}
  We prove the claim by induction on $k$; it is obvious for $k=1$. For the induction step, consider $G_{k+1}$: An edge-disjoint walk from $a_{k+1}$ to $b_{k+1}$ either takes the "left" way and therefore contains $e_{k+1}$ or takes a way through $G_k$:
    \begin{itemize}
      \item The "left" way has length $2k + 1 = 2(k+1)-1$.
      \item Every way through $G_k$ corresponds one-to-one to a closed walk from $a_k$ to $b_k$ in $G_k$. Applying the induction hypothesis we obtain that there are exactly $k$ edge-disjoint walks from $a_k$ to $b_k$ in $G_k$, one for every $e_j$ for $j \in [k]$. Furthermore each of this walks has length~$2k-1$. It follows that there are exactly $k$ edge-disjoint walks from $a_{k+1}$ to $b_{k+1}$, each of length $2 + 2k - 1 = 2(k+1)-1$. $e_j$ is contained in exactly one of this walks for every $j \in [k]$.
    \end{itemize}
    We conclude that the claim is fulfilled for $G_{k+1}$.
\end{claimproof}
It follows that the longest edge-disjoint cycle in $G_k$ has length $2\cdot (2k + 1) = 4k + 2$.
Let $W$ be the maximum weight of an edge, where $W \leq k$. Now let $H_i$ be the gadget constructed from~$G_{W}$ by removing edges $e_{W},\cdots, e_{i+1}$. We have $H_{W} = G_{W}$. Applying Claim~\ref{cl:cyc2}, we obtain that there are exactly $i$ edge-disjoint walks from~$a_{W}$ to $b_{W}$ in $H_i$. Furthermore each of this walks has length $2W-1$. Finally, we construct~$G'$ from $G$ by substituting each edge $e = \{a,b\}$ with~$H_{w(e)}$ and edges~$\{a,a_{W}\}$ and $\{b,b_{W}\}$.
\begin{claim}
\label{cl:cyc3}
The number of edge-disjoint cycles of length $(2Wk+k)$ in $G'$ equals \[\sum_{c \in \EC_k(G)}\prod_{e \in c}w(e) \,. \]
\end{claim}
\begin{claimproof}
Consider an edge-disjoint cycle $c$ of length $(2Wk+k)$ in $G'$.
Assuming $c$ does contain an edge $\{a,a_{W} \}$ (that is, it is not entirely contained in one gadget), it follows that $c$ can cross every $a_{W}$ and $b_{W}$ at most once. To see this, observe that every time when such a node is reached we can consider the cycle coming from "outside the gadget" (e.g. by choosing a fitting orientation of $c$). Since $c$ is an edge-disjoint cycle, we have to continue by an edge-disjoint walk through the end of the gadget. This walk has length $2W-1$ by Claim~\ref{cl:cyc2}. Now we cannot turn around inside the gadget again and complete the cycle afterwards since otherwise we would have constructed a longer edge-disjoint walk from one endpoint of the gadget to the other, which contradicts Claim~\ref{cl:cyc2}. It follows that every edge-disjoint cycle of length $(2Wk+k)$ that is not entirely contained in one gadget consists of $2W-1$ walks through gadgets. Now, taking an edge $e = \{a,b\}$ with weight $w(e)$ in~$G$ corresponds to taking one of the $w(e)$ edge-disjoint walks $(a,a_{W},\cdots,b_{W},b)$ of length $2W-1+2$ through $H_{w(e)}$ in $G'$. As $k\cdot (2W-1+2) = (2Wk+k)$ it follows that an edge-disjoint cycle of length $k$ in $G$ corresponds to the edge-disjoint cycles of length $(2Wk+k)$ in $G'$ that cross the gadgets corresponding to the weighted edges in $G$, but only if no edge-disjoint cycle of length $(2Wk+k)$ entirely fits in one gadget. However, the latter cannot be the case since the longest edge-disjoint cycle in $G_{W}$ has length $4W+2$ and for every~$k>2$ it holds that
\[4W+2 < 4W \leq 2Wk < 2Wk + k \,. \]  
\end{claimproof}
Now, using Claim~\ref{cl:cyc3}, Equation~\ref{eq: weighted-einj-cycle} and the fact that for every graph $G$ and $k \in \mathbb{N}$, it holds that $\#\eiHom(C_k \to G)$ equals $2k$ times the number of edge-disjoint $k$-cycles in $G$, we obtain that
\begin{align*}
\#\eiHom(C_{2Wk+k},G') &= 2(2Wk+k) \sum_{c \in \EC_k(G)}\prod_{e \in c}w(e) \\~&= \frac{2(2Wk+k)}{2k}\cdot \#\eiHom(C_k,G,w)
\\~&= (2W+1)\cdot \#\eiHom(C_k,G,w)\,.
\end{align*}
The above reduction is indeed an fpt Turing reduction, as $G'$ can be constructed in time $O(n^2\cdot k^2)$ and the value of the new parameter is $2Wk+k \leq O(k^2)$. This concludes the proof.
\end{proof}

\begin{corollary}
\label{cor:cycles_hard}
  The problem $\#\eiHom(\mathcal{C})$ is $\sw$-hard.
\end{corollary}

It remains to show hardness for $\mathcal{P}$:
\begin{lemma}
\label{lem:paths_hard}
The problem $\#\eiHom(\mathcal{P})$ is $\sw$-hard.
\end{lemma}
\begin{proof}
We will reduce from $\#\eiHom(\mathcal{C})$. First, we let $\EC_k(G,v)$ be the set of all edge-disjoint cycles of length $k$ in $G$ that contain $v\in V(G)$. Recall, for comparison, that $\EC_k(G)$ denotes the set of all edge-disjoint cycles of length $k$ in $G$.
\begin{claim}
    \label{cl:cycind}
     It holds that
    \[\EC_k(G) = \bigcup_{i=1}^{|V(G)|} \EC_k(G-\{v_{i+1},\ldots,v_n\},v_i),\]
    where the union is in fact a disjoint union.
  \end{claim}
  \begin{claimproof}
    By induction on $|V(G)|$. If $|V(G)|=0$, the union is empty and therefore the claim holds. Otherwise let $|V(G)|=n+1$. It holds that
    \begin{equation*}
    \begin{split}
    \EC_k(G) &= \EC_k(G,v_{n+1}) \dotcup (\EC_k(G)\setminus \EC_k(G,v_{n+1}))\\
    & = \EC_k(G,v_{n+1}) \dotcup \EC_k(G-\{v_{n+1}\})\\
    & = \EC_k(G,v_{n+1}) \dotcup (\dotbigcup_{i=1}^{n} \EC_k( (G-\{v_{n+1}\})-\{v_{i+1},\cdots,v_{n} \},v_i ))\\
    & = \EC_k(G,v_{n+1}) \dotcup (\dotbigcup_{i=1}^n \EC_k(G-\{ v_{i+1},\cdots,v_{n+1} \},v_i))\\
    & = \dotbigcup_{i=1}^{n+1} \EC_k(G-\{v_{i+1},\ldots,v_{n+1}\},v_i)
    \end{split}
    \end{equation*}
    Here, the third equality follows from the induction hypothesis.
  \end{claimproof}
   It follows that
   \begin{equation}
   \label{eq:eck_eckv}
  |\EC_k(G)| = \sum_{i=1}^{|V(G)|} |\EC_k(G-\{v_{i+1},\ldots,v_n\},v_i)|\,.    
   \end{equation}
Now let $G_i = G-\{v_{i+1},\ldots,v_n\}$. We show that $|\EC_k(G_i,v_i)|$ can be computed using an oracle for $\#\eiHom(\mathcal{P})$:

First, we construct the graph $G'_i$ by adding vertices $s$ and $t$ and edges $\{s,v_i\}$ and~$\{t,v_i\}$. For~$M\subseteq \{s,t\}$ let $A_{i,M}$ be the set of edge-disjoint paths of length~$k+2$ that do not pass through a vertex $u \in M$ and let $G'_{i,M}$ be the graph obtained from~$G'_i$ by removing every vertex that lives in $M$. Note that $|A_M|$ can be computed by querying the oracle for $P_{k+1}$ and $G'_{i,M}$ (and dividing by 2). Now it holds that for all $i\in \{1,\dots,|V(G)|\}$: 
\begin{align*}
|\EC_k(G_i,v_i)| = |A_{i,\emptyset} \setminus (A_{i,\{s\}}\cup A_{i,\{t\}})| = |A_{i,\emptyset}|-|A_{i,\{s\}}|-|A_{i,\{t\}}| + |A_{i,\{s,t\}}| \,,
\end{align*}
where the last equality follows from the principle of inclusion and exclusion. Finally the values of $|\EC_k(G_i,v_i)|$ for all $i \in \{1,\dots,|V(G)|\}$ allow us to compute $|\EC_k(G)|$ (see Equation~\ref{eq:eck_eckv}), which equals $1/2 \cdot \#\eiHom(C_k,G)$.
\end{proof}

\begin{proof}[Proof of Theorem~\ref{thm: edge-disj-path}]
  Follows from Corollary~\ref{cor:cycles_hard} and Lemma~\ref{cor:cycles_hard}.
\end{proof}

\paragraph*{Acknowledgments.}~%
The authors thank Cornelius Brand and Markus Bl\"aser for interesting discussions, 
and Johannes Schmitt for pointing out a proof of Lemma~\ref{lem:interpolation} and allowing us to use it in this paper.

\bibliographystyle{plainurl}
\bibliography{references}{}
\end{document}